%% file: mainv1.tex
\documentclass{article}

\usepackage{fullpage}
\usepackage{amsthm}
\usepackage{natbib}

\usepackage[utf8]{inputenc} 
\usepackage[T1]{fontenc}    
\usepackage{hyperref}      
\hypersetup{
  colorlinks=true,
  citecolor=blue,
  linkcolor=black,
  urlcolor=black
}
\usepackage{url}            
\usepackage{booktabs}       
\usepackage{amsfonts}       
\usepackage{nicefrac}       
\usepackage{wrapfig}
\usepackage{microtype}      
\usepackage{xcolor}         
\usepackage{nicefrac}       
\usepackage{caption}        
\input{aux/math_commands}

\usepackage[algo2e,ruled,vlined,linesnumbered]{algorithm2e}
\usepackage{float}
\newtheorem{lemma}{Lemma}
\newtheorem{theorem}{Theorem}
\newtheorem{assumption}{Assumption}
\newtheorem{proposition}{Proposition}
\newtheorem{definition}{Definition}

\usepackage{graphicx}
\usepackage{cleveref}
\usepackage{subfigure}
\usepackage{enumitem}
\usepackage[none]{hyphenat}
\usepackage{amssymb}
\usepackage{authblk}

\title{Learning to Price with Persuasion}

\author[1]{Maria-Florina Balcan}
\author[1]{Tejas Pagare}
\author[1]{Karan Singh}

\affil[1]{Carnegie Mellon University, Pittsburgh, USA}

\affil[ ]{\texttt{\href{mailto:ninamf@cs.cmu.edu}{ninamf@cs.cmu.edu}, \href{mailto:tpagare@andrew.cmu.edu}{tpagare@andrew.cmu.edu},  \href{mailto:karansingh@cmu.edu}{karansingh@cmu.edu}}}

\date{}
\begin{document}
\maketitle
\allowdisplaybreaks
\input{src/abstract}

\input{src/intro}
\input{src/relwork}
\input{src/preliminaries}
\input{src/samplecomplexity}

\input{src/dynamic_programming}

\input{src/query}

\input{src/robustregret}

\input{src/conclusion}

\bibliography{aux/New_USE_THIS}
\bibliographystyle{apalike}

\newpage
\appendix
\input{src/appendix}


\end{document}

%% file: aux/math_commands.tex
\usepackage{amsmath,amsfonts,bm}

\def\eqref#1{equation~\ref{#1}}

\def\1{\bm{1}}

\usepackage{float}

\def\vmax{{\overline{v}}}

\def\qmin{{\underline{q}}}
\def\qmax{{\overline{q}}}

\DeclareMathAlphabet{\mathsfit}{\encodingdefault}{\sfdefault}{m}{sl}
\SetMathAlphabet{\mathsfit}{bold}{\encodingdefault}{\sfdefault}{bx}{n}

\newcommand{\E}{\mathbb{E}}

\DeclareMathOperator*{\argmax}{arg\,max}

%% file: src/abstract.tex
\begin{abstract}
Motivated by modern marketplaces, where the platform or the seller routinely gathers detailed user profiles, we study a novel learning theoretic model that simultaneously involves information and mechanism design. Specifically, we consider the economic setting recently introduced by \citet{bergemann2022screening}, where in addition to the menu of quality-price pairs, the seller offers information on the value of the match between product quality and buyer's taste via a signaling scheme. We relax the assumption that the seller knows the buyers' belief about the distribution of tastes and study the sample requirements of designing a revenue maximizing scheme. We consider both the batch setting where we have access to data from a set of i.i.d. buyers and an online demand query model where we observe the buyers' behaviors to seller's schemes. Despite the apparent non-convexity of the problem, we also give the first FPTAS to compute a scheme that maximizes the revenue within an arbitrarily small additive loss, which was left open by \citet{bergemann2022screening}. Overall, this brings a new learning perspective in asymmetric economic settings where buyers and sellers know different types of information.
\end{abstract}

%% file: src/intro.tex
\section{Introduction}
A cornerstone of microeconomic theory is information asymmetry. Economic agents participating in a market might have latent preferences (or motives) invisible to the market maker. The foundational works of \citet{spence1973job, akerlof1970market} detail signaling mechanisms by which participants can partially overcome this information asymmetry, and market failures that result from the unmitigated presence of these circumstances. However, in recent years, the presence of monolithic algorithmic marketplaces has altered this picture. Such marketplaces routinely gather detailed user profiles, consisting of both demographic data and the history of past purchases. For example, recently, a shopper was surprised to discover that Kroger, a not particularly tech-savy retail giant in the US, had a 62-page shopping profile on them, including details like pet ownership and travel history \citep{kroger}. As a result, such digital platforms can forecast the personalized value a product or service holds for a given buyer much more accurately than individual buyers themselves. As another example, Google, as an advertising marketplace serving a trillion impressions every month, can arguably estimate the conversion rate associated with displaying a specific advertisement in a specific spot far more accurately than individual publishers.

We consider one of the most foundational problems, namely, monopolistic screening \citep{MUSSA1978301, maskin1984monopoly}, albeit in the presence of such an informationally advantaged seller. In the classical problem, a monopolist aims to maximize the revenue gathered from a pool of buyers who have a certain distribution of latent tastes, by offering a menu of quality-price pairs. Products of higher quality also cost the monopolist more. By offering a menu of varied qualities at different price points, the seller can simultaneously appeal to disparate buyers whose utility for a given product is a function of both their latent taste and the product's quality. In the new-age model, proposed by \cite{bergemann2022screening} (also see \cite{bergemann2026screening}), the key distinction is that the seller can forecast the utility that a buyer with a certain taste associates with different product qualities, far better than the buyers themselves. A notable constraint on the seller is that in spite of knowing the buyer's utility precisely she can not engage in perfect price discrimination (driving the consumer surplus to zero), because existing regulations exclude explicit price discrimination in most markets \citep{maryland}. Thus, in addition to designing a {\em public} menu of quality-price pairs, the seller commits to releasing partial information about the value of buyer-quality matches via a signaling scheme of their choice allowing an indirect price discrimination. In practice, one can think of such a signal as a personalized recommendation to the buyer indicating what product (quality) she should purchase (or in a softer manner through search rankings). Recommendation algorithms in this context play the role of a commitment device. For example, Amazon labels certain products, which are neither the ones with the highest rating nor the highest number of reviews, as {\em Amazon's choice}. Such personalized recommendations can be viewed as signals to the buyer.

\begin{figure}[t]
    \centering
    \begin{tabular}{| c | c | c |} 
\hline
  {\bf Model} & {\bf Value distribution} & {\bf Sample Complexity / Regret} \\ [0.5ex] 
 \hline\hline
 Value samples & Arbitrary & $\widetilde{\mathcal{O}}(1/\varepsilon^3)$ \\ 
 \hline\hline
 Demand queries & Discrete & $\mathcal{O}(1)$ \\
 \hline
 & Bounded pdf & $\widetilde{\mathcal{O}}(1/\varepsilon^2)$ \\
 \hline
 & Lipschitz pdf & $\widetilde{\mathcal{O}}(1/\varepsilon)$ \\
 \hline
  & Analytic pdf & $\mathcal{O}(\textrm{polylog } 1/\varepsilon)$ \\
 \hline\hline
 Joint learning & Arbitrary &  $\widetilde{\mathcal{O}}(T^{3/4})$ \\
 \hline
\end{tabular} 
    \caption{Summary of our results. For value samples and demand queries, we list the sample requirements to produce an $\varepsilon$-optimal menu along with the associated signaling scheme. The last row lists the regret in a $T$-horizon game.}
    \label{tab:1}
\end{figure}

To design a revenue maximizing menu and signaling scheme in the above context, it is typically assumed that the seller knows the precise distribution of the buyer's tastes. This is implausible given the diversity of products sold on modern marketplaces, and in this work, we relax this assumption. Instead, we consider explicit learning algorithms that dictate how such knowledge may be acquired via samples and past interactions, with explicit bounds on sample and compute requirements. Concretely, we make the following contributions (also partly summarized in \Cref{tab:1}). 
\begin{enumerate}[font=\bfseries,itemindent=1.3em, leftmargin=0em, itemsep=0em]
    \item We give a learning algorithm that when given $\widetilde{\mathcal{O}}(1/\varepsilon^3)$ i.i.d. samples from the buyers' taste (or as we will call it going forward, {\em value}) distribution, produces a signaling scheme and menu that achieves a revenue within $\varepsilon$ of the maximum.
    \item The problem of jointly designing a menu and a signaling scheme is non-convex, as noted in \citet{bergemann2022screening}. Despite this, we give the first FPTAS that computes a solution in polynomial time with revenue within an arbitrarily small additive loss of the optimum. 
    \item We also study a {\em demand query} model, where the seller can observe how the buyers behave in presence of the menus and signaling schemes she designs. In this interactive setting, we obtain a constant sample complexity for discrete value distributions and improve to $\widetilde{\mathcal{O}}(1/\varepsilon^2)$ samples for continuous distributions with bounded pdfs. Further smoothness assumptions result in smaller sample requirements; for example, polylogarithmic for analytic pdfs.
    \item Finally, we give regret upper bounds for a model in which both the seller and the buyer population jointly learn in an online setting based on the past realizations of values. In this setting, the seller has to design schemes that are somewhat robust to the buyers' beliefs, which are incompletely specified.
\end{enumerate}

%% file: src/relwork.tex
\subsection{Related Work}
In economics, the nonlinear pricing problem introduced in \cite{MUSSA1978301} and further studied in \cite{maskin1984monopoly} characterizes the optimal quality, price menus for a revenue maximizing monopolist seller facing privately informed random buyer or a population of buyer with known value distribution. See \cite{wilson1993nonlinear} for a broader survey and \cite{bolton2004contract} for the simplified analysis for computing optimal menu.The information design literature in which a informationally advantaged sender influences receiver via communication was initially popularized by a cheap talk model \cite{crawford1982strategic}. This was then formalized with a commitment power to the sender on distribution of messages as Bayesian Persuasion in \cite{kamenica2011bayesian}. See \cite{bergemann2019information,kamenica2019bayesian} for a survey treatment. This was then translated to the linear bayesian persuasion setting \cite{gentzkow2016rothschild,dworczak2019simple} where the senders' utility is solely characterized by the expected posterior value. We build on \citet{bergemann2026screening}, who prove that the seller-optimal joint design always takes a monotone partitional form with finitely many signals and menu items. See also \cite{bergemann2026information} for more developments on the joint mechanism and information design problem.

The intersection of machine learning and economics is now a flourishing area of research. The use of sample complexity theory for mechanism design was initiated by~\cite{balcan2005mechanism,balcan2008reducing}. It was later studied by many others including \cite{cole2014sample,roughgarden2016ironing} for single-item auction and \cite{morgenstern2016learning,balcan2016sample,mohri2016learning} for parametrized auction classes. Online learning with unknown demand distribution for posted pricing was studied in \cite{kleinberg2003value}, see also \cite{balcan2024new} and \cite{balcan2025generalization} for other pricing mechanisms. Online learning in Bayesian Persuasion, where the sender faces a receiver with unknown type, which characterizes the receiver's utility function, was initiated in \cite{castiglioni2020online} and further developed in \cite{babichenko2022regret,zu2025learning,castiglioni2021multi}. Recent developments include \cite{bacchiocchi2024online} which further considers that the sender does not know the common prior, \cite{zu2025learning} studies the sender interaction with a population of receivers who also does not know the common prior. See also \cite{balcan2026nearlyoptimal, balcan2026learning} who consider learning in the presence of additional side-information which is public to both the sender and receiver.

A notable work on the learning-theoretic interface of the joint information design and pricing problem is \citet{agrawal2023dynamic}, which studies a variation of the posted pricing problem in which the seller reveals information about the uncertain quality of a good. However, the mechanism formats, the posted pricing in the former, and the menu pricing for us, are fundamentally incomparable. For instance, the ratio of optimal revenues for posted pricing for a single quality product vs. menu pricing can be vanishingly small. In addition, the former assumes that the seller knows the (exogenous) quality distribution but not the buyers' demand distribution, whereas in our case, the seller can produce any quality and influence the demand/type of the buyer through carefully constructed signals. 

%% file: src/preliminaries.tex
\section{Problem Setup and Preliminaries}
\label{section:model}
We consider the setting from \citet{bergemann2022screening}, where a monopolist seller can manufacture goods of arbitrary (positive) quality $q$ in the range $\mathcal{Q}=[\underline{q},\overline{q}]$ at a cost $c(q)$, which is assumed to be convex and non-decreasing. The buyer's value $V$ is distributed according to a common prior on $\mathcal{V}=[0,1]$ with the CDF $F(v):=P(V\leq v)$. Following \citet{MUSSA1978301}, a buyer with value $v$ gains a utility of $v\cdot q-p$ upon purchasing a product of quality $q$ at price $p$. Buyers in this setup are Bayesian utility-maximizing agents, who are ex ante unaware of their own value in contrast to \citet{MUSSA1978301}.


The  seller commits to a public menu and a signaling scheme; the latter is sometimes termed an information structure or a Blackwell experiment \citep{blackwell1953equivalent}. The public menu $\mathcal{M}=\{(q_i,p_i): i\in I\}$ is composed of a collection of quality-price tuples for some index set $I$, with qualities restricted to $\mathcal{Q}$ and prices at most $\overline{q}$ so that buyers still find the menu attractive. By an argument mirroring the revelation principle, sometimes termed an {\em obedience} argument, we can generically assume that the signal space $\mathcal{S}$ as being in a one-to-one correspondence with the menu and, therefore, just as numerous. Intuitively, each menu item can be labeled by the signal for which it is the best response -- if there is no such signal, it can be removed altogether -- and signals inducing the same menu item as the best response can be merged. Formally, the signaling scheme $\mathcal{I}:=(\mathcal{S},\{\pi(\cdot|v)\}_{v\in \mathcal{V}})$  encodes a signal space $\mathcal{S}\subseteq \mathcal{Q}$ and a conditional distribution over signals for every possible value $\pi(\cdot|v)\in\Delta(\mathcal{S})$. To fully specify the timeline, upon seeing a buyer with value $v$, the seller samples a signal $s\sim \pi(\cdot | v)$, as promised upon commitment. The buyer, upon observing a signal $s$, forms a posterior mean $\bar{v}_{\pi,F}(s) = [\int_{0}^{1}v\pi(s|v)\mathrm{d}F(v)]/[\int_{0}^{1}\pi(s|v)\mathrm{d}F(v)]$, and rationally chooses a quality $(\mathsf{q}(s),\mathsf{p}(s)) \in \arg\max_{(q,p)\in \mathcal{M}}\bar{v}_{\pi,F}(s)\cdot q-p$, breaking ties in the favor of the seller. This gives the seller a revenue of $\mathsf{p}(s)-c(\mathsf{q}(s))$ which is the profit minus the production cost. The seller thus aims to find an incentive-compatible direct menu $\{(\mathsf{q}(s),\mathsf{p}(s)):s\in\mathcal{S}\}$ and a signaling scheme $\mathcal{I}$ to maximize the expected revenue which corresponds to the following program where the first constraints enforce individual rationality (\texttt{IR}) and the second constraints enforce incentive compatibility (\texttt{IC}) which ensures that menu item $(\mathsf{q}(s),\mathsf{p}(s))$ is a best-response for the buyer upon receiving signal $s$.
\begin{align}
    \max_{\substack{\mathcal{S}\subseteq \mathcal{Q},\ \{(\mathsf{q}(s), \mathsf{p}(s)):s\in \mathcal{S}\}\\\{\pi(\cdot|v)\in \Delta(\mathcal{S}):v\in \mathcal{V}\}}}  &\sum_{s\in\mathcal{S}} \int_{0}^{1} (\mathsf{p}(s)-c(\mathsf{q}(s))) \pi(s|v) \mathrm{d}F(v) \label{eq:opt1}\\ 
   \text{subject to }& \bar{v}_{\pi,F}(s)\cdot \mathsf{q}(s)- \mathsf{p}(s) \geq 0\ \forall s\in\mathcal{S}\quad (\texttt{IR})\quad \text{and} \nonumber \\
    &\bar{v}_{\pi,F}(s)\cdot \mathsf{q}(s)- \mathsf{p}(s)  \geq \bar{v}_{\pi,F}(s)\cdot\mathsf{q}(s')- \mathsf{p}(s')\ \forall s,s'\in\mathcal{S}\quad (\texttt{IC}).\nonumber
\end{align}

A key result from \citet{bergemann2022screening} that we will use states that the resulting menu has only a finite number of items, even if the underlying value distribution is continuous. In contrast, in \citet{MUSSA1978301}, where there is no information provision, this number can be unbounded.


\begin{definition}\label{def:1}
    A signaling scheme $\mathcal{I}:=(\mathcal{S},\{\pi(\cdot|v)\}_{v\in \mathcal{V}})$ belongs to the set of monotone partitional information structures $\mathcal{I}_{\mathsf{part}}$, if there exists an $I\in \mathbb{Z}_{>0}$ and $0=t_0< t_1<\cdots< t_I=1$ such that upon receiving any signal $i$ each buyer only knows that their value corresponds to the quantile in $[t_{i-1}, t_{i})$ with respect to the distribution of values. 
\end{definition}

\begin{theorem}[\cite{bergemann2026screening}]
The optimal menu has as most $K:= \qmax/\qmin$ items. Furthermore, the optimal information structure is monotone partitional with at most $K$ distinct signals.
\end{theorem}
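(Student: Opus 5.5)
The plan is to pass to the posterior-mean (Bayesian persuasion) formulation and then use the structure of the Mussa--Rosen problem on a distribution the seller can shape.

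\textbf{Step 1: reduce to distributions of posterior means.} Any signaling scheme induces a distribution $G$ of posterior means $\bar v$. The buyer's behavior and the seller's revenue depend on $\bar v$ only, through $v\cdot q-p$. By the Rothschild--Stiglitz/Blackwell characterization, the feasible $G$ are exactly the mean-preserving contractions of $F$, i.e.\ those with $\int_0^x G\le\int_0^x F$ for all $x$ and equality at $x=1$.

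For a fixed $G$, the optimal menu solves a classical screening problem with type distribution $G$. Optimizing over $(G,\text{menu})$ jointly, the seller's value is
\begin{equation*}
\max_{q(\cdot)\ \text{nondecreasing}}\ \max_{G\preceq F}\ \E_{G}\big[\phi_G(\bar v)\,q(\bar v)-c(q(\bar v))\big],
\end{equation*}
written in virtual-value form, or equivalently via the envelope formula for prices. Here one does the reverse of the usual argument: fix the allocation $q(\cdot)$ and optimize over $G$.

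\textbf{Step 2: the optimal allocation is a step function.} By the envelope theorem, revenue equals $\int [q(v)v-U(v)-c(q(v))]\,dG(v)$, where $U$ is convex with $U'=q$. For a fixed finite menu, the buyer's choice $v\mapsto(q,p)$ is a piecewise-linear convex upper envelope, and the seller's payoff is a function of $\bar v$ alone. By the linear persuasion results of Dworczak--Martini and Gentzkow--Kamenica, the optimal $G$ pools intervals of values. Pooling into an interval can only increase revenue, because it extracts all surplus at the pooled mean. The first steps to carry out are:
\begin{enumerate}
\item Show that optimality forces every signal realization to be a pooling interval whose posterior mean lands exactly on a point where the buyer is indifferent or where IR binds.
\item Hence conclude that the optimal $G$ is discrete, supported on the posterior means of consecutive quantile intervals $[t_{i-1},t_i)$. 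This gives the monotone partitional structure of Definition~\ref{def:1}.
\end{enumerate}
Monotonicity comes from a standard swapping argument. If two signals have interleaved supports, reassigning mass so that higher values go to the higher-quality signal is a further mean-preserving spread of posteriors. This keeps IC feasible (or allows prices to be raised) and weakly increases revenue.

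\textbf{Step 3: count items.} With a partition into $I$ intervals, posterior means $m_1<\dots<m_I$ and qualities $q_1<\dots<q_I$, the optimal prices make the downward-adjacent IC bind, with IR binding at the bottom. This gives $p_i=\sum_{j\le i} m_j (q_j-q_{j-1})$ with $q_0=0$, so each buyer $i$ retains rent $\sum_{j<i}(m_i-m_j)(q_j-q_{j-1})$.

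The main obstacle is bounding $I$ by $\qmax/\qmin$. The approach I would try first is a merging argument. Suppose two adjacent items have $q_{i}<q_{i+1}$. Merging intervals $i$ and $i+1$ into one signal, sold at quality $q_{i+1}$ with the pooled mean, eliminates the information rent that buyer $i+1$ earns from the gap $m_{i+1}-m_i$. It costs at most the production-cost difference. I expect the first-order conditions in $t_i$ (the boundary location) to give a relation of the form $q_{i+1}\ge q_i+\qmin$, i.e.\ consecutive qualities must differ by at least $\qmin$ or merging is profitable. Concretely, I would differentiate revenue with respect to $t_i$ and combine this with the convexity of $c$ to obtain $q_{i+1}-q_i\ge q_1\ge\qmin$.

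Telescoping this from $q_1\ge\qmin$ up to $q_I\le\qmax$ yields $I\,\qmin\le\qmax$, so $I\le K$. The number of signals equals the number of menu items by the obedience argument, which completes the proof. Establishing this spacing inequality is the delicate step, and I would verify it carefully against the boundary optimality conditions.
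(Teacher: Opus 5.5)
The paper does not prove this statement; it imports it from \citet{bergemann2026screening}. Your proposal therefore has to stand on its own, and as written it has two genuine gaps.

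First, Step~2 does not deliver monotone partitions. Once a finite menu is fixed, the seller's payoff is a nondecreasing step function of the posterior mean. With two or more steps, the persuasion optimum for that menu need not be monotone. Take $F$ uniform, $c\equiv 0$ and the menu $\{(3,1.2),(4,1.8)\}$, which has thresholds $0.4$ and $0.6$. Pooling $[0.35,0.85]$ into one signal (mean $0.6$) and its complement into another (mean $0.4$) earns $1.5$, whereas every monotone partition earns at most $1.44$. Your swap does not repair this. Sending higher values to the higher signal pushes the lower signal's posterior below its binding IR/IC threshold, so prices must fall rather than rise. Moreover, a mean-preserving spread of posteriors raises the buyer's rent $\E_G[U]$ because $U$ is convex. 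For the same reason, the claim that pooling ``can only increase revenue'' is false: pooling destroys matching surplus, and that trade-off is exactly what determines the partition. The step is also circular. It needs a finite menu to generate thresholds and hence a discrete $G$, but for continuous $F$ finiteness is part of what has to be shown. The idea you mention in Step~1 and then abandon is the one that works here. Fix the allocation as a function of the quantile, $q(t)$, rather than of the posterior mean. By the envelope (payment) formula, revenue is then linear in the quantile function $Q_G$ of posterior means. The feasible $Q_G$ are the nondecreasing functions majorized by $F^{-1}$, and the extreme points of that set (Kleiner, Moldovanu and Strack) average $F^{-1}$ over disjoint intervals and leave it unchanged elsewhere. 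Infinitely many pools and fully revealing stretches still have to be ruled out afterwards.

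Second, the spacing inequality that carries the count is left unproven. A single merge at $q_{i+1}$ combined with convexity does not produce it, because the cost increase $g_i(c(q_{i+1})-c(q_i))$ is then controlled only through $c'(q_{i+1})$, which gives no bound in terms of $q_1$. What works is to use both merges. Write revenue in layer form $\sum_j\Delta_j(1-t_{j-1})m_j-\sum_i g_i c(q_i)$, with $\Delta_j=q_j-q_{j-1}$ and $g_i=t_i-t_{i-1}$. Let $\hat m$ be the pooled mean of pools $k$ and $k+1$, and set $\delta=c(q_{k+1})-c(q_k)$. Merging these two pools at quality $q_{k+1}$ changes revenue by
\[
(\Delta_k+\Delta_{k+1})(1-t_{k-1})\hat m-\Delta_k(1-t_{k-1})m_k-\Delta_{k+1}(1-t_k)m_{k+1}-g_k\delta ,
\]
and merging them at quality $q_k$ changes it by
\[
\Delta_k(1-t_{k-1})(\hat m-m_k)-\Delta_{k+1}\bigl[(1-t_k)m_{k+1}-(1-t_{k+1})m_{k+2}\bigr]+g_{k+1}\delta ,
\]
where the $m_{k+2}$ term vanishes when $k+1=I$. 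Both changes are $\le 0$ at an optimum. Adding $g_{k+1}$ times the first to $g_k$ times the second cancels $\delta$ and leaves
\[
(1-t_{k-1})\,g_{k+1}(m_{k+1}-m_k)\Bigl[\Delta_k-\tfrac{g_k}{g_k+g_{k+1}}\Delta_{k+1}\Bigr]+g_k(1-t_{k+1})(m_{k+2}-m_{k+1})\,\Delta_{k+1}\le 0 .
\]
Since $m_{k+1}>m_k$ and the last term is nonnegative, $\Delta_{k+1}\ge\frac{g_k+g_{k+1}}{g_k}\Delta_k\ge\Delta_k$. Hence $q_I\ge I q_1\ge I\qmin$ and $I\le\qmax/\qmin$. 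This is precisely the inequality you conjectured, but it comes from comparing two global deviations, with no first-order conditions and no use of convexity.
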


%% file: src/samplecomplexity.tex
\section{Sample Complexity with Value Samples}\label{sec:val}
In this section, we consider the setting in which the seller can draw i.i.d. samples from the value distribution $F$, and uses these in \Cref{alg:1} to compute a near-optimal menu and signaling scheme.



We explain the \Cref{alg:1} with the four key ingredients that we use, the last three of which are novel in this setting. First, we solve the best monotone partitional signaling scheme and menu with empirical distribution, by solving the optimization program from \citet{bergemann2022screening}. In fact, in \Cref{sec:fptas}, we will give an algorithm to do this step in polynomial time. This gives the quantile partitions of values corresponding to different signals, as in \Cref{def:1}. Second, to actually implement such a signaling scheme on the true distribution, we translate this description into the value space. There is some subtlety to this and we describe this in detail in \Cref{sec:conv}. Third, the buyer's response, i.e., the item they pick from the menu, is dictated by their signal-conditioned posterior mean, which is different on the true and the empirical distributions. Posting the empirically determined menu as is can result in a constant revenue loss. To fix this, we discount the prices in a linear-additive manner in Line 5 so that a buyer does not defect to a menu item cheaper than what was intended for them in the empirical world, since a price corresponding to a higher signal also gets a larger discount. Multiplicative discounting of the form $\tilde{p}_k = (1-\rho)p_k$ is a common tool to repair incentive-compatibility constraints (e.g., see Lemma 9 in \citet{cai2021howto}), but this would lead to a sample complexity of $\widetilde{\mathcal{O}}(1/\varepsilon^5)$ in our setting. The linear-additive discount crucially relies on the menu being bounded in size \citep{li2025sell}, which a multiplicative discount does not take advantage of. Finally, the signal-conditioned posterior means can be arbitrarily far off under $F$ and $\widehat{F}$ for signals with low marginal probability. So, in truth, we bundle all such signals, termed {\em bad}, into a \texttt{null} signal, and only attempt to repair incentive compatibility for good signals. The finitude of the menu once again means that the revenue loss due to this bundling of (individually) rare signals can not be too high.

\begin{algorithm2e}[t]
\caption{Empirical Revenue Maximization $(\mathsf{ERM})$ via Samples}
\KwIn{value distribution $F$, sample budget $n$, cutoff $\tau$, discount $\rho$.}
Construct the empirical distribution $\widehat{F}(v) = \frac{1}{n}\sum_{i=1}^n\bm{1}_{v_i\leq v}$ by drawing $n$ i.i.d. samples from $F$.\

Let $(t_k: k\in [K])$ be the quantile description of a near best monotone partitional signaling scheme for the distribution $\hat{F}$, with the direct menu $(q_k,p_k:k\in [K])$, up to $\varepsilon_{\textrm{OPT}}$ error.\

{\color{gray} Optionally, if $\varepsilon_{\textrm{OPT}}>0$, run the procedure in \Cref{sec:rep} to obtain a direct menu with non-decreasing $(p_k-c(q_k):k\in [K])$ of equal or greater revenue.}\ 

Use the quantiles $(t_k:k\in [K])$ to generate an explicit value-space description of the form $(\pi(k \mid v): \forall v\in \mathcal{V}, k\in [K])$ with respect to $\widehat{F}$, as detailed in \Cref{sec:conv}.

Designate $\bm{z}_{\tau} = \{k\in[K]: t_k-t_{k-1}<\tau\}$ as the set of {\em bad} signals.

Modify the prices as $\tilde{p}_k = p_k - k \rho$ for all $k\in [K]$.

Design a modified signaling scheme that clumps all {\em bad} signals into a new {\texttt{null} signal as
\begin{align*}
    &\tilde{\pi}(s\mid v) = \begin{cases} \pi(s \mid v) & \text{ if } s\notin\bm{z}_\tau,\\ \sum_{k\in \bm{z}_\tau}\pi(k \mid v) & \text{ if } s=\texttt{null}\end{cases}
\end{align*}

\KwOut{$\tilde{\bm{x}}_{\hat{F}}$ encoding $((q_k, \tilde{p}_k): k\in [K])$ and $(\tilde{\pi}(s\mid v):v\in \mathcal{V}, s\in [K]\cup \{\texttt{null}\})$.}}
\label{alg:1}
\end{algorithm2e}

From now on, we will use $Rev^*(F)$ to denote the optimal revenue associated with a distribution $F$, and $Rev(\bm{x}; F)$ to denote the revenue associated with a menu-signaling-scheme pair $\bm{x}$ on the same distribution. The distribution $F$ in fact plays two roles in determining the revenue (see \Cref{eq:opt1}): it appears in the objective dictating the probability with which signals are generated, and it is used to update the buyer's belief to arrive at posterior means. In more elaborate tri-variate form, for the analysis, we will let $Rev(\bm{x}; G, F)$ be the revenue when signals are generated according to $F$, yet buyers update their beliefs starting from the prior $G$. Thus, $Rev(\bm{x}; F) = Rev(\bm{x}; F, F)$.

\begin{theorem}
 \label{thm:eps3samplecomplexity}  
 For any $\varepsilon, \delta>0$, there exists a setting of $n,\tau,\rho$ and $\varepsilon_{OPT}=0$ satisfying $\tau=\widetilde{\mathcal{O}}(\varepsilon/\overline{q}K)$ and $\rho=\widetilde{\mathcal{O}}(\varepsilon/K)$ such that upon observing $n=\tilde{\mathcal{O}}\left(\frac{K^3\overline{q}^3}{\varepsilon^3}\log\frac{1}{\delta}\right)$ i.i.d.\ samples from $F$, with probability at least $1-\delta$, \Cref{alg:1} produces a menu-signaling-scheme pair $\tilde{\bm{x}}_{\hat{F}}$ satisfying
 \begin{align*}
     Rev(\tilde{\bm{x}}_{\hat{F}};F) \geq Rev^*(F) - \varepsilon.
 \end{align*}
 Further, for optimization error $\varepsilon_{OPT}>0$ in line 2, the revenue is reduced by an additive $\varepsilon_{OPT}$ term.
\end{theorem}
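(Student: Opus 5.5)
The plan is to compare four quantities: $Rev(\tilde{\bm{x}}_{\hat F};F)$, $Rev(\bm{x}_{\hat F};\hat F)$, $Rev^*(\hat F)$ and $Rev^*(F)$. Here $\bm{x}_{\hat F}$ is the undiscounted empirical optimum from line 2.

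\textbf{Step 1: the optimal revenue transfers to $\hat F$.} By the theorem of \cite{bergemann2026screening}, the optimum for $F$ is monotone partitional with at most $K$ signals. Its signals are quantile intervals $[t^*_{i-1},t^*_i)$. I would evaluate the same quantile partition and menu on $\hat F$, using the conversion of \Cref{sec:conv}, which randomizes at atoms.

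Conditional posterior means are integrals of the quantile function over an interval. By DKW, with probability $1-\delta$ we have $\sup_v|\hat F(v)-F(v)|\le\eta:=\sqrt{\log(2/\delta)/2n}$. This gives
\[
\Bigl|\int_{t_{i-1}}^{t_i}\hat F^{-1}-\int_{t_{i-1}}^{t_i}F^{-1}\Bigr|\le O(\eta),
\]
since the area between the two quantile curves is at most $\int|\hat F-F|\,dv\le\eta$. Hence posterior means differ by $O(\eta/(t_i-t_{i-1}))$.

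I will not use this naively for the direction $F\to\hat F$. Instead I apply the same three repairs that the algorithm uses: bundle the rare signals, apply the additive discount, and use the symmetric argument of Step 2 with $F$ and $\hat F$ swapped. This gives $Rev^*(\hat F)\ge Rev^*(F)-\varepsilon/2$. Optimality of line 2 then gives $Rev(\bm{x}_{\hat F};\hat F)\ge Rev^*(\hat F)-\varepsilon_{OPT}$.

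\textbf{Step 2: robustness of the repaired scheme (main step).} Consider a good signal $k$, so $t_k-t_{k-1}\ge\tau$. By the estimate above, its true posterior mean $v_k$ and its empirical posterior mean $\hat v_k$ satisfy $|v_k-\hat v_k|\le\gamma:=O(\eta/\tau)$. Signal probabilities under $F$ and $\hat F$ coincide with the quantile lengths up to $O(\eta)$.

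I then show that, facing prices $\tilde p_j=p_j-j\rho$, the buyer with posterior $v_k$ never chooses an item $j<k$, provided $\rho\ge\overline q\gamma$. Empirical IC gives
\[
\hat v_k q_k-p_k\ge \hat v_k q_j-p_j .
\]
Moving to $v_k$ perturbs each side by at most $\overline q\gamma$. The extra discount $(k-j)\rho\ge\rho$ on item $k$ relative to item $j$ absorbs this perturbation. Defection upward ($j>k$) is allowed. Using the optional monotonization of \Cref{sec:rep}, the net revenue $p_j-c(q_j)$ is non-decreasing in $j$, so an upward defection loses nothing.

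IR must also be checked. The discount only helps, apart from the $\overline q\gamma$ perturbation, which is covered once $\rho\ge\overline q\gamma$, since item $k$ gets discount $k\rho\ge\rho$. For the \texttt{null} signal the buyer picks some item, and its revenue is nonnegative if we assume WLOG that all menu items have $p_j\ge c(q_j)$. The total loss from the null signal is at most $K\tau\overline q$ in probability mass times revenue. The discount costs at most $K\rho$.

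\textbf{Step 3: collecting the terms.} The total loss is
\[
O\bigl(K\rho+K\overline q\tau+\overline q\eta\bigr)+\varepsilon_{OPT}.
\]
I would set $\tau=\varepsilon/(\overline qK)$ and $\rho=\varepsilon/K$, and require $\overline q\eta/\tau\le\rho$, i.e.\ $\eta\le\varepsilon^2/(\overline q^2K^2)$. Read literally, this gives $n\sim K^4\overline q^4/\varepsilon^4$.

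To reach the claimed $K^3\overline q^3/\varepsilon^3$, I would sharpen the bound on the posterior-mean error. The deviation $\int_{t_{i-1}}^{t_i}(\hat F^{-1}-F^{-1})$ over an interval of mass $\ell$ is controlled by a relative (Bernstein/VC) concentration of order $\sqrt{\ell/n}$ rather than $\eta$. This yields $\gamma\approx\sqrt{1/(n\tau)}$ up to logs, and hence the condition $n\approx\overline q^2/(\rho^2\tau)=K^3\overline q^3/\varepsilon^3$.

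I expect this localized concentration to be the hardest part, uniformly over all intervals of quantile length at least $\tau$. It also has to be carried through the quantile-to-value conversion with atoms. I would obtain it by a union bound over a grid of intervals together with a ratio-type VC inequality for intervals.
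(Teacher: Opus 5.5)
\textbf{Where the proposal matches the paper.} Your overall architecture is the paper's. You split the loss into $Rev^*(F)-Rev^*(\hat F)$ and $Rev^*(\hat F)-Rev(\tilde{\bm{x}}_{\hat F};F)$, and bound each by transplanting a scheme across the two distributions. You bundle signals of mass below $\tau$ into \texttt{null}, and use the discount $k\rho$ with non-decreasing margins so that good signals never defect downward; this is \Cref{lemma:samplecomplexityimpA}. Your DKW version correctly yields only $1/\varepsilon^4$.

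\textbf{The gap: the proposed quantile-space sharpening is false.} The problem is the step meant to deliver $1/\varepsilon^3$. You propose that $\int_{t_{i-1}}^{t_i}(\hat F^{-1}-F^{-1})$ is $\widetilde{\mathcal{O}}(\sqrt{\ell/n})$ for a quantile interval of mass $\ell$. This is false. Let $F$ put half its mass uniformly on $[0,0.1]$ and half on $[0.9,1]$, and take the interval $[\tfrac12-\tfrac{\ell}{2},\tfrac12+\tfrac{\ell}{2}]$ with $1/\sqrt n\ll\ell$. With constant probability the empirical mass of $[0,0.1]$ is $\tfrac12+\Theta(1/\sqrt n)$. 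So the jump of $\hat F^{-1}$ from about $0.1$ to about $0.9$ sits $\Theta(1/\sqrt n)$ away from the jump of $F^{-1}$, inside the interval. The integral difference is then $\Theta(1/\sqrt n)$, and the quantile-space posterior means differ by $\Theta(1/(\sqrt n\,\ell))$. That is exactly the DKW rate.

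Your Step 2 decomposition fails to beat DKW for the same reason. It splits the error into a deviation over the same quantile interval plus a shift of the signal's quantile endpoints. Each of these pieces can be of order $\eta$ on its own, since a single endpoint's CDF deviation is not localized. Only their sum is small, so bounding them separately can never improve on DKW.

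\textbf{What is needed instead.} The localization must be done in value space, on one fixed scheme evaluated under both priors. Write each signal as $g_k\in\mathcal{G}$, an interval indicator with fractional endpoint weights as produced by \Cref{sec:conv}. Then bound both $|\E_F[g_k]-\E_{\hat F}[g_k]|$ and $|\E_F[vg_k]-\E_{\hat F}[vg_k]|$ by $C\bigl(\sqrt{\E_F[g_k]\log(n/\delta)/n}+\log(n/\delta)/n\bigr)$, uniformly over the class. The ratio inequality $|A'/B'-A/B|\le|A'-A|/B+(A'/B')|B'-B|/B$ then gives a posterior-mean error of order $\sqrt{\log(n/\delta)/(n\Pr(k))}$. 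This is \Cref{lem:conc}. The paper proves it from the Maurer--Pontil uniform empirical-Bernstein bound, with covering numbers controlled by the constant pseudo-dimension of $\mathcal{G}$ and $\{vg:g\in\mathcal{G}\}$, rather than by a grid union bound.

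Step 1 must change accordingly. Transplant the $F$-optimal scheme through its value-space description with respect to $F$, and evaluate that fixed $\tilde\pi$ under $\hat F$, as the paper does. Your displayed estimate instead re-reads the same quantile partition under $\hat F$. By the example above that comparison is stuck at the DKW rate, which would only give $n=\widetilde{\mathcal{O}}(K^4\overline{q}^4/\varepsilon^4)$.
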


If we constrain our menu to a single item, we can improve the sample complexity to $1/\varepsilon^2$. In fact, \citet{bergemann2022screening} provide sufficient conditions under which such a menu may be optimal within the wider unrestricted class. We preview the key points in the analysis in \Cref{sec:valuekey}, where we also highlight the difficulty of the multi-item case, while complete proofs are given in \Cref{sec:appval}.

\begin{theorem}
 \label{thm:justone}  
 For any $\varepsilon, \delta>0$, there exists an algorithm that draws $\tilde{\mathcal{O}}\left(\frac{\overline{q}^2}{\varepsilon^2}\log\frac{1}{\delta}\right)$ i.i.d.\ samples from $F$, and with probability at least $1-\delta$, produces a signaling scheme and a menu containing a single item with revenue within $\varepsilon$ of the optimal revenue generated by  single-item menus.
\end{theorem}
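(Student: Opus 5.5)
The plan is to reduce the single-item problem to a one-dimensional concentration statement about the ``upper partial expectation'' of $F$, then repair obedience on the true distribution with a discount whose cost is paid only in proportion to the pooled mass.

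\textbf{Step 1 (structure of the single-item optimum).} With one item $(q,p)$, a buyer purchases iff her posterior mean is at least $p/q$. Any signal realization $s$ that leads to purchase, with set mass $P(s)$ and mean $\mu(s)$, contributes at most $(q\mu(s)-c(q))P(s)=q\int v\,\pi(s|v)\,\mathrm{d}F-c(q)P(s)$ to revenue. Merging all purchasing signals preserves both the mean constraint and revenue, so it suffices to consider one ``buy'' signal of total mass $\alpha$.

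For fixed $\alpha$, the integral $\int v\,\pi(\text{buy}|v)\,\mathrm{d}F$ is maximized by pooling the top $\alpha$-quantile, with randomization at an atom if needed. This is consistent with the monotone partitional form of the earlier theorem. Setting the price to extract the full posterior mean then gives
\begin{align*}
Rev^*_1(F)=\max_{q\in\mathcal{Q},\,\alpha\in[0,1]}\; q\,A_F(\alpha)-c(q)\,\alpha,\qquad A_F(\alpha):=\int_{1-\alpha}^1F^{-1}(u)\,\mathrm{d}u=\int_0^1\min\{\alpha,1-F(v)\}\,\mathrm{d}v .
\end{align*}

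\textbf{Step 2 (uniform estimation).} The second expression for $A_F$ shows that $\alpha\mapsto A_F(\alpha)$ is $1$-Lipschitz in $F$ under the sup norm, uniformly in $\alpha$. By DKW, $\sup_v|\widehat F(v)-F(v)|\le\eta$ with probability $1-\delta$ once $n=\mathcal{O}(\eta^{-2}\log(1/\delta))$. Hence the empirical objective is uniformly within $\overline{q}\eta$ of the true one. The empirical maximizer $(\hat q,\hat\alpha)$ is therefore $2\overline{q}\eta$-optimal for the true objective, provided we can actually realize that objective on $F$.

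\textbf{Step 3 (implementation on $F$, the main obstacle).} Implement the scheme in value space as ``buy'' iff $v$ exceeds the empirical threshold $\hat\theta$, randomizing at $\hat\theta$. This is the same conversion as in \Cref{sec:conv}. Under $F$ this set has mass $\alpha'$ and partial integral $B'=\int v\,\pi\,\mathrm{d}F$. Writing each as an integral of tails and integrating by parts, we get $|\alpha'-\hat\alpha|\le\eta$ and $|B'-\hat A(\hat\alpha)|\le2\eta$.

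The danger is that the posterior mean $B'/\alpha'$ may fall below the empirical price ratio, in which case nobody buys; moreover its error scales like $\eta/\hat\alpha$. The fix is to discount the price to
\[
\tilde p=\hat q\bigl(\hat A(\hat\alpha)/\hat\alpha-3\eta/\hat\alpha\bigr),
\]
which guarantees $B'/\alpha'\ge\tilde p/\hat q$ when $\hat\alpha\gtrsim\eta$. This ratio comparison is the delicate calculation. The key observation is that the revenue loss from the discount is $\alpha'\cdot\hat q\cdot3\eta/\hat\alpha=\mathcal{O}(\overline{q}\eta)$: the discount is multiplied by the mass it applies to, so the $1/\hat\alpha$ blow-up cancels. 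The realized revenue $\alpha'(\tilde p-c(\hat q))$ is then within $\mathcal{O}(\overline{q}\eta)$ of $\hat q\hat A(\hat\alpha)-c(\hat q)\hat\alpha$, using $|\alpha'-\hat\alpha|\le\eta$ and $c(\hat q)\le\overline{q}$ on any profitable choice.

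If $\hat\alpha<C\eta$, the revenue is at most $\overline{q}\hat\alpha=\mathcal{O}(\overline{q}\eta)$ anyway, so outputting the empty menu loses only $\mathcal{O}(\overline{q}\eta)$.

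\textbf{Step 4 (conclusion).} Combining the pieces, the output is $\mathcal{O}(\overline{q}\eta)$-optimal among single-item menus. Choosing $\eta=\Theta(\varepsilon/\overline{q})$ yields $n=\tilde{\mathcal{O}}(\overline{q}^2\varepsilon^{-2}\log(1/\delta))$.

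The empirical optimization over $(q,\alpha)$ is a two-dimensional problem. For fixed $\alpha$ it is concave in $q$, and $\alpha$ needs to be searched only over the $n$ sample quantiles, so it can be solved to accuracy $\varepsilon$ by a grid over $q$ and enumeration over $\alpha$.
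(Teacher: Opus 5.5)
Your route is, at its core, the paper's. You run empirical optimization over a single top-quantile buy pool, control that pool's mass and partial mean uniformly, and discount the price by order $\hat q\eta/\hat\alpha$, with the discount's cost cancelling against the pool mass. The difference is in Steps 1--2. You write the single-item value as $\max_{q,\alpha} qA_F(\alpha)-c(q)\alpha$ with $A_F(\alpha)=\int_0^1\min\{\alpha,1-F(v)\}\,\mathrm{d}v$, which lets you compare optimal values directly via DKW. You therefore need neither the paper's second chain (discounting the true optimum $(w,\xi,(p,q))$ to make it feasible under $\widehat F$) nor the pseudo-dimension argument for $\mathcal{G}$. You also justify the restriction to threshold schemes, which the paper takes for granted. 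The paper's two-sided feasibility transfer, on the other hand, is what extends to multi-item menus, where no such closed form exists.

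One step fails as written, though: the accounting \emph{realized revenue $=\alpha'(\tilde p-c(\hat q))$}. The menu is public, so recipients of the no-buy signal also purchase whenever their posterior mean is at least $\tilde p/\hat q$. Your fallback only fires when $\hat\alpha<C\eta$. For $\hat\alpha\ge C\eta$ the per-unit-quality discount $3\eta/\hat\alpha$ can still be a constant (up to $3/C$), so $\tilde p<c(\hat q)$ is possible, and the extra buyers are then served at a constant loss.

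For instance, take $c(q)=q/2$ and two atoms at $\tfrac12\pm\tfrac1{C'}$, the upper one of mass $C'\eta$, where $C'\ge\max\{3,C+1\}$. Then $\hat q=\qmax$, $\hat\alpha\approx C'\eta$, and $\hat J:=\hat q\hat A(\hat\alpha)-c(\hat q)\hat\alpha\approx\qmax\eta$. The discounted price ratio is about $\tfrac12-\tfrac2{C'}$, below both the cost ratio $\tfrac12$ and the lower atom $\tfrac12-\tfrac1{C'}$. Essentially everyone buys at margin about $-2\qmax/C'$, a loss that does not shrink with $\eta$.

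The repair is to fall back to the empty menu whenever $\hat J\le3\qmax\eta$; in that case $Rev^*_1(F)\le\hat J+\qmax\eta\le4\qmax\eta$. Otherwise the discounted margin $(\hat J-3\hat q\eta)/\hat\alpha$ is nonnegative, so extra purchases only help. Moreover $\hat\alpha\ge\hat J/\qmax>3\eta$, and the revenue is at least $(1-\eta/\hat\alpha)(\hat J-3\hat q\eta)\ge\hat J-4\qmax\eta$.

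The paper's chain makes the same tacit assumption. Its step $\E_F[g](\hat p'-c(\hat q))\ge(\E_{\hat F}[g]-\varepsilon)(\hat p-\rho\qmax-c(\hat q))$ requires a nonnegative discounted margin, and it counts only $s_{\texttt{H}}$ purchases.
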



\subsection{Value-space Description of Quantile-based Signals}\label{sec:conv}
Recall that the buyer chooses a menu item using the prior $F$ which is unknown to the seller. In this section, given a quantile description of a monotone partitional signaling scheme $(t_k: k\in [K])$ associated with a distribution $\widehat{F}$, we give a recipe to convert this to a value-space description: given a specific $v$, what is the corresponding distribution over signals? This matters because we are transplanting a signaling scheme designed for $\widehat{F}$ to $F$, and only a value-space description is invariant to changes in the prior distribution. For a continuous distribution $\widehat{F}$, this is straightforward as noted in \citet{bergemann2022screening}; namely, all values in the range $[\widehat{F}^{-1}(t_{k-1}), \widehat{F}^{-1}(t_k))$ are assigned the signal $k$ deterministically, where $\widehat{F}^{-1}(t) = \inf\{v:t\leq \widehat{F}(v)\}$ is the generalized inverse (or quantile) function. 
\footnote{As a convention, quantiles in $(\max_t\hat{F}^{-1}(t), 1]$ are assigned the signal corresponding to the previous quantile interval. }

The translation for discrete (or mixed) distributions is more delicate and requires randomized signaling. For all $k$ such that $\widehat{F}^{-1}(t_k)=\widehat{F}^{-1}(t_{k-1})=v_0$ (say), we choose the conditional law $\pi(k|v) = (t_k- t_{k-1})/(\widehat{F}(v_0)- \widehat{F}(v_0^-))$ for $v=v_0$ and $0$ for all other $v$, where $\widehat{F}(v_0^-)$ is the left limit of $\widehat{F}$ at $v_0$. Intuitively, only a part of the total mass $\widehat{F}(v_0)- \widehat{F}(v_0^-)$ of $v_0$ is assigned to the signal $k$. For all $k$ such that $\widehat{F}^{-1}(t_k)=v_1$ and $\widehat{F}^{-1}(t_{k-1})=v_0$ with $v_0\neq v_1$, the conditional probability of signal $k$ is

$$\pi(k|v) = \mathbf{1}\{x=v_0\} \frac{\widehat{F}(v_0)-t_{k-1}}{\widehat{F}(v_0)-\widehat{F}(v_0^-)} + \mathbf{1}\{x\in (v_0, v_1)\}+\mathbf{1}\{x=v_1\} \frac{t_{k}-\widehat{F}(v_1^-)}{\widehat{F}(v_1)-\widehat{F}(v_1^-)}.$$

The key property of this translation is that the marginal probability of generating signal $k$ under $\widehat{F}$ remains $t_k-t_{k-1}$ and $\mathbb{E}_{\widehat{F}} [v|k] = \int_{t_{k-1}}^{t_k} \widehat{F}^{-1}(x) dx/(t_k-t_{k-1})$ by an immediate application of integration by parts, exactly as intended in the quantile space.


\subsection{Key Ingredients in the Analysis}\label{sec:valuekey}
The main point of care is that for the same signaling scheme $\pi$, the posterior means conditioned on the same signal $\overline{v}_{\pi, F}(k) $ and $\overline{v}_{\pi, \widehat{F}}(k)$ can be different. This can break \texttt{IC} and \texttt{IR} constraints, resulting in a receiver of signal $k$ buying a different item than intended or nothing at all, netting $O(1)$ revenue loss. To fix this, we discount the prices. Naively, one might hope to discount the price by $q_k|\overline{v}_{\pi, F}(k) -\overline{v}_{\pi, \widehat{F}}(k)|$ for item $k$. Such conditional means can be arbitrarily far apart for rare signals; one way to see this is that the probability of signal's occurrence appears in the denominator of these quantities. This by itself might not seem bad because the contribution to the net revenue for a signal $k$ is also proportional to the probability of its occurrence. This reasoning indeed gets $\widetilde{\mathcal{O}}(1/\varepsilon^2)$ sample complexity against the benchmark of single-item menus (see \Cref{thm:justone}).

But for a multi-item menu, that is, the general case, this mode of analysis breaks down because discounting $p_k$ by itself can break incentive-compatibility for other types (recipients of other signals). So, any discount offered on $p_k$ must propagate through higher prices too, to ensure that other types do not defect to cheaper products (see \Cref{lemma:samplecomplexityimpA}). Thus, the impact of discounting the price by $\Delta$ at some signal $k$ is not localized, and affects the net revenue by $O(K\Delta)$. This is why we designate rare signals as {\em bad} signals, and only hope to repair incentive-compatibility constraints on {\em good} signals, where the conditional means can be estimated well, and hence, where limited discounting is sufficient. 

The above framework gets us a sample complexity of $\widetilde{\mathcal{O}}(1/\varepsilon^4)$, when paired with standard uniform deviation arguments (e.g., the DKW inequality \cite{Dvoretzky1956asymptotic,massart1990tight}). Such uniform arguments are needed because the boundaries of our signals are data-dependent, since we optimize over these. To improve upon this, we in fact prove a variance-aware uniform deviation inequality that says the error in conditional means scales as $1/\sqrt{\nu n}$ (upto lower order terms), instead of $1/\nu\sqrt{n}$, where $\nu$ is the frequency of the signal's occurrence (see \Cref{lem:conc}). Our results imply a variance-aware DKW inequality whose existence has been posed as an open question in \citet{guo2019settling}, but in fact we show that such results can be derived with relative ease from known relative-deviation-style uniform convergence bounds for VC classes \citep{haussler1992decision,li2001improved,bartlett2005local,Maurer2009EmpiricalBB}.

%% file: src/dynamic_programming.tex
\section{Additive FPTAS for Revenue Maximization}\label{sec:fptas}
\begin{algorithm2e}[t]
\caption{Dynamic Programming (\texttt{DP}) for Optimal Menu and Signaling Scheme}
\label{alg:dp}
\KwIn{CDF $F$ on $[0,1]$; quantile grid $\bm{t}=(t_0,t_1,\ldots,t_{n_t})$ with $t_0=0,t_i=i\varepsilon$; quality grid $\mathcal{E}_q=\{q_0,q_1,\ldots
,q_{n_q}\}$ with $q_0=0$, $q_j=\qmin+j\varepsilon$; max. menu size $K$ and $A(t)$}
$\texttt{Rev}[0,0,0] := 0$ and
$\texttt{Rev}[k,r,j] := -\infty$ for all $(k,r,j)\neq (0,0,0)$\tcp{Initialization}

\For{$k = 1,\ldots,K$;\ $r = 1,\ldots,m$;\ $j = 1,\ldots,n_q$}{
$\texttt{Rev}[k,r,j] := \max_{0\leq \ell<r,\;0\leq j'\leq j}\bigl\{\texttt{Rev}[k-1,\ell,j']+\mathcal{U}[\ell,r,j',j]\bigr\}$\;
$\texttt{ptr}[k,r,j] := \argmax_{0\leq \ell<r,\;0\leq j'\leq j}\bigl\{\texttt{Rev}[k-1,\ell,j']+\mathcal{U}[\ell,r,j',j]\bigr\}$\;
}
$(k^\star,j^\star) := \argmax_{1\leq k\leq K,\;1\leq j\leq n_q}\texttt{Rev}[k,m,j], r_{k^\star}:=n_t, j_{k^\star}:=j^\star, r_0:=0$\;
\For{$k = k^\star-1$ \KwTo $1$}{
    $(r_k,\,j_k) := \texttt{ptr}[k+1,\,r_{k+1},\,j_{k+1}]$\;
}
\For{$k = 1$ \KwTo $k^\star$}{
$p_k := \mu_F(r_{k-1},r_k)\,q_{j_k} - \sum_{\ell=1}^{k-1}\bigl(\mu_F(r_\ell,r_{\ell+1})-\mu_F(r_{\ell-1},r_\ell)\bigr)\,q_{j_\ell}$\;
}

\KwOut{Menu $\{(p_k,q_{j_k})\}_{k=1}^{k^\star}$ and signaling scheme $\mathcal{I}=(k^\star,(t_{r_k}:k\in [k^\star]))$}

\end{algorithm2e}
As usual for an FPTAS, the key is to formulate a dynamic program on suitably small state space. What makes this possible is the fact that while scanning values from left to right, or quantiles from bottom to top, with the intention of pooling these, it is enough to remember just the quality assigned to the last pool (instead of all qualities). This relies on the structure of incentive compatibility constraints in the setting of \citet{MUSSA1978301}, as we detail next. The theorem as stated, and proved in \Cref{sec:appfptas}, only holds for discrete distributions. But the generalization to arbitrary distributions is {\em free}, given sampling access, because the empirical distribution is discrete, and we can plug this algorithm into line 2 of \Cref{alg:1}.

\begin{theorem}[{Additive-FPTAS}] For any $\varepsilon_{\mathsf{OPT}}>0$, given a discrete distribution with support size $N$, \Cref{alg:dp} runs in $\mathsf{poly}(K,N,\overline{q},1/\varepsilon_{\mathsf{OPT}})$ time and outputs a menu $\{(p_k, q_{j_k})\}_{k=1}^{k^\star}$  and
    a monotone paritional signaling scheme $\mathcal{I}=(k^\star,(t_{r_k}:k\in [k^\star]))$ with revenue exceeding ${Rev}^\star(F)-\varepsilon_{\mathsf{OPT}}$ by setting the resolution of grid as $\varepsilon = (\varepsilon_{\mathsf{OPT}})^2/(K(6\qmax +1+\qmax))^2$.
\end{theorem}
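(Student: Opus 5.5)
We reduce the problem to a Mussa–Rosen style program over monotone partitions and show that a rounded version of the optimum is feasible for the dynamic program. By the theorem of \citet{bergemann2026screening}, an optimal scheme is monotone partitional with at most $K$ signals. Write it as quantile cuts $0=t^*_0<\cdots<t^*_{k}=1$ with qualities $q^*_1\le\cdots\le q^*_k$; monotonicity of qualities follows from \texttt{IC}, since the posterior means $\mu_F(t^*_{i-1},t^*_i)$ are increasing. For a fixed partition and fixed nondecreasing qualities, the revenue-maximizing prices are the standard ones. The lowest signal's \texttt{IR} binds, and adjacent downward \texttt{IC} constraints bind. This yields exactly the formula in the last loop of \Cref{alg:dp}:
\[
p_k=\mu_k q_k-\sum_{\ell<k}(\mu_{\ell+1}-\mu_\ell)q_\ell .
\]
Summing against the masses, the total revenue becomes $\sum_k (t_k-t_{k-1})\,[\phi_k q_k - c(q_k)]$, where $\phi_k$ is a virtual-value-like term: $\mu_k$ minus $(1-t_k)(\mu_{k+1}-\mu_k)/(t_k-t_{k-1})$. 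Since this term involves only the adjacent pool, the revenue is additive over consecutive pairs of pools. The first step is therefore to verify that $\mathcal{U}[\ell,r,j',j]$ equals the contribution of pool $(t_\ell,t_r]$ with quality $q_j$, given that the previous pool has quality $q_{j'}$. The recursion with the constraint $j'\le j$ then computes exactly the maximum of this objective over grid partitions and grid qualities with monotone qualities. This also bounds the running time by $O(K\,n_t^2 n_q^2)$, with $n_t=1/\varepsilon$, $n_q=\qmax/\varepsilon$, and each $\mathcal{U}$ computable in $O(N)$ time. The result is polynomial in $K,N,\qmax,1/\varepsilon_{\mathsf{OPT}}$.

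The second step is the approximation argument. Round each cut $t^*_i$ to the nearest grid point (merging pools if two cuts coincide) and round each $q^*_i$ down to the grid, which preserves monotonicity. Rounding qualities down by at most $\varepsilon$ changes the price formula and the cost by $O(\varepsilon)$ per term. Over at most $K$ telescoping terms, and using Lipschitzness of $c$ on $\mathcal{Q}$ (absorbed into the constants), this costs $O(K\qmax\varepsilon)$.

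The main obstacle is rounding the cuts: posterior means of pools with tiny mass are not stable under moving boundaries by $\varepsilon$. I would handle this through the $\sqrt{\cdot}$ scaling suggested by the choice $\varepsilon=\varepsilon_{\mathsf{OPT}}^2/(\cdots)^2$. Call a pool \emph{small} if its mass is below $\sqrt\varepsilon$, and merge each small pool into a neighbor, keeping the neighbor's quality; this loses at most $\qmax\sqrt\varepsilon$ revenue per pool. For large pools, a boundary shift of $\varepsilon$ changes $\mu$ by at most $\varepsilon/\sqrt\varepsilon=\sqrt\varepsilon$. Feeding this into the price formula changes each $p_k$ by $O(K\qmax\sqrt\varepsilon)$, weighted by masses summing to one.

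The final step is bookkeeping. Collect the losses into a bound of the form $K(6\qmax+1+\qmax)\sqrt\varepsilon$, which equals $\varepsilon_{\mathsf{OPT}}$ for the stated $\varepsilon$. Then observe that \Cref{alg:dp} returns something at least as good as this rounded feasible solution. Its output is a valid (\texttt{IR}/\texttt{IC}-satisfying) menu because the prices are the binding-constraint prices for monotone qualities and increasing means. Hence its revenue is at least ${Rev}^\star(F)-\varepsilon_{\mathsf{OPT}}$.
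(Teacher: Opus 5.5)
Your overall structure matches the paper's: exact DP over the grid-restricted program, a runtime count, and a rounding argument producing a grid solution within $\varepsilon_{\mathsf{OPT}}$ of $Rev^\star(F)$. Your check that $\mathcal{U}[\ell,r,j',j]$ is the per-pool contribution is also fine. The gap is the merging step for small pools.

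Because you recompute the binding-constraint prices on the modified partition, every merge has to be analyzed through the objective $\sum_j\mu_j(1-t_{j-1})(q_j-q_{j-1})-\sum_j m_jc(q_j)$, with $m_j=t_j-t_{j-1}$.

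\begin{itemize}
\item \emph{Merging pool $k$ into pool $k-1$ is cheap.} The loss equals $(1-t_{k-2})(q_{k-1}-q_{k-2})(\mu_{k-1}-\mu'')+(q_k-q_{k-1})\bigl[\mu_km_k-(1-t_k)(\mu_{k+1}-\mu_k)\bigr]$ plus a nonpositive cost change. Here $\mu''\geq\mu_{k-1}$ is the merged mean, so the loss is at most $\qmax m_k$.
\item \emph{Merging pool $k$ into pool $k+1$ is not.} The merged mean drops to some $\mu'<\mu_{k+1}$. Every buyer in pool $k+1$ or above then loses $(\mu_{k+1}-\mu')(q_{k+1}-q_k)-(\mu'-\mu_k)(q_k-q_{k-1})$ in price. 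For $k=1$ this equals $(\mu_2-\mu_1)\bigl[\tfrac{m_1}{m_1+m_2}q_2-q_1\bigr]$. That is of order $\qmax$, not $O(\qmax m_1)$, whenever $m_2$ is not much larger than $m_1$ and $q_1$ is near $\qmin$.
\end{itemize}

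"Merge into a neighbor" leaves the direction open. More seriously, the lowest pool has no lower neighbor. If $t_1<\sqrt{\varepsilon}$, or the bottom block is small, your plan forces exactly this upward merge. Nothing in your argument, such as an optimality condition on the original solution, bounds it.

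You also cannot keep that pool and round $t_1$ to the nearest grid point. The mean $\mu_1$ enters every price with weight $q_1\geq\qmin$, since it sets the base price $p_1=\mu_1q_1$. It can move by order $\varepsilon/m_1$, which need not be small.

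The paper avoids recomputing prices altogether (Theorem~\ref{thm:errorduetogrids}):
\begin{itemize}
\item It keeps the original prices with linear discounts $\tilde p_i=p_i-i\rho$, where $\rho=6\qmax\varepsilon/\tau+\varepsilon$ and $\tau=\sqrt{\varepsilon}$.
\item A buyer of any signal of mass at least $\tau$ then never defects to a lower item and pays at least $p_i-I\rho$.
\item Small signals are written off entirely, at cost $\qmax\tau$ each.
\item Finally, the DP value dominates the revenue of any grid-restricted mechanism.
\end{itemize}

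If you want to keep your perturbation approach, round every cut \emph{up} rather than to the nearest point.
\begin{itemize}
\item Since $Q$ is nondecreasing, shifting both ends of a quantile interval upward weakly raises its average, so every $\mu_j$ weakly increases.
\item The objective has nonnegative coefficients $(1-t_{j-1})(q_j-q_{j-1})$ on the $\mu_j$, and these weights fall by less than $\varepsilon$ each.
\item The cost term can only decrease, because mass moves from higher-quality pools to lower-quality ones.
\item Pools that collapse are absorbed by the pool below, at cost $O(\qmax\varepsilon)$ each by the downward-merge bound above.
\end{itemize}
This handles the bottom pool with no threshold at all.

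Finally, drop the appeal to Lipschitzness of $c$. It is not assumed: a convex nondecreasing $c$ can have unbounded slope at $\qmax$. It is also unnecessary, since rounding qualities down only lowers costs.
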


We start by simplifying the optimization problem (\Cref{eq:opt1}). It is well known that the participation constraint for the lowest type binds and incentive compatibility constraints for all other types bind with the immediately lower type \citep{maskin1984monopoly}. Moreover, the constraints imply monotonicity of the qualities with respect to the types (or values). That is $\bar{v}_i\cdot q_i-p_i=\bar{v}_i\cdot q_{i-1}-p_{i-1}$ with $q_{0}=p_0=0$ and $q_i\geq q_{i-1}\ \forall i$. This gives us the so called payment formula $p_i = \bar{v}_iq_i -\sum_{j=1}^{i-1}(\bar{v}_{j+1}-\bar{v}_{j})q_j$. Thus, we can write the optimization problem in the quantile space, which we use in the dynamic programming formulation, as
\begin{align*}
   &\texttt{L}^s_{F}:\   \max_{\left((I,\bm{t}),(q_i)_{i\in [I]})\right)}\sum_{i=1}^I \bar{v}_i(1-t_{i-1})(q_{i}-q_{i-1})- (t_{i}-t_{i-1})c(q_i)\\
     &q_i
     \geq q_{i-1} \ \forall i\in\{1,\ldots,I\}\ \text{with $q_0=0$},\quad t_0=1-t_{I}=0,\ \text{and}\ \bar{v}_i=\tfrac{\int_{t_{i-1}}^{t_{i}}F^{-1}(t)\mathrm{d}t}{t_i-t_{i-1}}
\end{align*}
 Let $\mathcal{E}_q=\{q_0, q_1,\ldots,q_{n_q}\}$ with $q_0=0$ be the quality grid, where $q_0$ denotes the outside option and $q_j:=\qmin+j\varepsilon$ for $j\in\{1,\ldots,n_q\}$, with $n_q=\lceil \frac{\qmax-\qmin}{\varepsilon}\rceil$, and $\mathcal{E}_t=\{t_0,\ldots,t_{n_t}\}$ with $t_0=1-t_{n_t}=0$ be the quantile grid with $t_i=i\varepsilon$ and $n_t=1/\varepsilon\in\mathbb{Z}_{>0}$. Define $A(t) = \int_{0}^tF^{-1}(s)\,\mathrm{d}s\ \forall t\in[0,1]$. For indices $0\leq \ell<r\leq m$, define $\mu_F(\ell,r) := \frac{A(t_r)-A(t_\ell)}{t_r-t_\ell}$
the posterior mean of the quantile interval $(t_\ell,t_r]$ under $F$.  For the dynamic programming we store $\texttt{Rev}[k,r,j]$ which denotes the maximum revenue from partitioning $(0,t_r]$ into $k$ nonempty quantile grid intervals with the last interval assigned quality $q_j\in\mathcal{E}_q$. Moreover, we store $\mathcal{U}[\ell,r,j',j]$ for quality indices $0\leq j'\leq j\leq n_q$ which we define as the \emph{transition utility}
\begin{align*}
    \mathcal{U}[\ell,r,j',j] := \mu_F(\ell,r)\,(1-t_\ell)\,(q_j-q_{j'}) -  c(q_j)(t_r-t_\ell),
\end{align*}
i.e. the revenue contribution from assigning quality $q_j$ to the interval $(t_\ell,t_r]$, given that the preceding interval was assigned quality $q_{j'}$. The dynamic programming recursion immediately follows where one needs to maintain the monotonicity of quality being assigned.



%% file: src/query.tex
\section{Sample Complexity with Demand Queries}
\begin{algorithm2e}[!t]
\caption{Empirical Revenue Maximization via Demand Queries}
\KwIn{query budget $n$, cutoff $\tau$, discount $\rho$.}
Fix a menu $\mathcal{M}=\{(q, p(q):={(q-\underline{q})^2}/2(\overline{q}-\underline{q})): q\in [\underline{q}, \overline{q}]\}$.\

\For{$j=1,\dots n$}{
Choose a signaling scheme with two signals $\pi_j(0|v)=1-\pi_j(1|v)=v^{j-1}$.\

Observe the generated signal $s_j$ and the purchased quality $q_j$ when the menu $\mathcal{M}$ was coupled with a signaling scheme $\pi_j$.

Let $q'_j = (q_j-\underline{q})/(\overline{q}-\underline{q})$ be the normalized quality.

With the convention that $m_{0}=1$, set $m_j=\begin{cases}
    q'_jm_{j-1} & \text{ if } s_j=0,\\
    m_1 - q'_j(1-m_{j-1}) & \text{ if } s_j=1.
\end{cases}$
}

\If{the value distribution is discrete}{
For a discrete value distribution with support $\{x_1, \dots x_{n+1}\}$ of size $n+1$, calculate the pmf $p\in \Delta([n])$ by solving the linear system $Vp=\bm{m}$, where $\bm{m}=[m_0, m_1, \dots m_n]^\top$ and $V\in \mathbb{R}^{(n+1)\times (n+1)}$ is a Vandermonde matrix with $V_{i,j} =x_j^i$.\

Let $\widehat{F}$ be the corresponding CDF.
}
\Else{
Let $L_j(x) =\sum_{i=0}^j a_{ji}x^i$ be the expansion of the $j^{th}$ Legendre polynomial shifted to $[0,1]$.\

Calculate $\alpha_j = \sum_{i=0}^j a_{ji} (1-m_{i+1}/(i+1))$ and $\tilde{F}_j(x) = \sum_{i=0}^j \alpha_i L_i(x)$, which is the degree-$j$ Legendre projection of the CDF.

Compute the de la Vallee Poussin mean $\hat{F}_{\lceil n/2\rceil}(x) = \frac{1}{\lceil n/2\rceil} \sum_{i=\lceil n/2\rceil}^{2\lceil n/2\rceil-1} \tilde{F}_i(x)$.\

Let $\hat{F}$ be the smallest non-negative non-decreasing upper envelope of $\hat{F}_{\lceil n/2\rceil}$.
}

Let $(t_k: k\in [K])$ be the quantile description of the best monotone partitional signaling scheme for the distribution $\hat{F}$, with the direct menu $(q_k,p_k:k\in [K])$.\

Use the quantiles $(t_k:k\in [K])$ to generate an explicit value-space description of the form $(\pi(k \mid v): \forall v\in \mathcal{V}, k\in [K])$ with respect to $\widehat{F}$, as detailed in \Cref{sec:conv}.

Designate $\bm{z}_{\tau} = \{i\in[I]: t_i-t_{i-1}<\tau\}$ as the set of {\em bad} signals.

Modify the prices as $\tilde{p}_k = p_k - k \rho$ for all $k\in [K]$.

Design a modified signaling scheme that clumps all {\em bad} signals into a new {\texttt{null} signal as
\begin{align*}
    &\tilde{\pi}(s\mid v) = \begin{cases} \pi_k(s \mid v) & \text{ if } s\notin\bm{z}_\tau,\\ \sum_{k\in \bm{z}_\tau}\pi_k(k \mid v) & \text{ if } s=\texttt{null}\end{cases}
\end{align*}

\KwOut{$\tilde{\bm{x}}_{\hat{F}}$ encoding $((q_k, \tilde{p}_k): k\in [K])$ and $(\tilde{\pi}(s\mid v):v\in \mathcal{V}, s\in [K]\cup \{\texttt{null}\})$.}}
\label{alg:2}
\end{algorithm2e}
We consider a setting where the seller deploys signaling schemes and menus of their choice, and observes how a buyer with a randomly drawn value responds to it. Despite the stochasticity in values, we will see that it is possible to extract the moments of the value distribution exactly in this setting, without any error. We couple this with results from approximation theory \citep{trefethen2019approximation} to estimate the underlying value distribution. We define the setting first and then give the algorithm.

\begin{definition}
    A demand query accepts as input a signaling scheme $(\mathcal{S},\{\pi(\cdot|v)\}_{v\in\mathcal{V}})$ and a menu $((p(z),q(z))_{z\in Z })$ for some index set $Z$ and outputs the signal $s\sim \pi(\cdot |v)$ made available to the buyer along with the item $(q,p)$ selected by the buyer, where the value of the buyer $v$ is drawn randomly from $F$. 
\end{definition}

\subsection{Approximating Discrete Distributions}

The main observation is that in this model we can exactly simulate bounded linear functionals of the pmf (or the pdf) corresponding to the common prior $F$. We specifically chose moments. Interestingly, this makes our queries non-adaptive. 

\begin{theorem}\label{thm:disc}
For any discrete distribution $F$ with support size $n+1$, when \Cref{alg:2} is run for $n$ queries, with $\tau=\rho=0$, it returns a revenue-maximizing menu and signaling scheme pair $\tilde{\bm{x}}_{\widehat{F}}$ that attains the optimal revenue, that is, $Rev(\tilde{\bm{x}}_{\widehat{F}}) ={Rev}^*(F)$. 
\end{theorem}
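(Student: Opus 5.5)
The plan is to show that the $n$ demand queries recover the first $n$ moments $m_j=\E_F[V^j]$, $j=0,\dots,n$, \emph{exactly}. The Vandermonde system then returns the true pmf, so $\widehat F=F$, and the rest of \Cref{alg:2} with $\tau=\rho=0$ reduces to posting an exactly optimal scheme for $F$. I will assume, as the algorithm does, that the support points $x_1,\dots,x_{n+1}$ are known and distinct.

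\textbf{Step 1: the fixed menu reveals the posterior mean.} With $p(q)=(q-\qmin)^2/2(\qmax-\qmin)$, a buyer with posterior mean $\bar v\in[0,1]$ maximizes $\bar v q-p(q)$ over $q\in[\qmin,\qmax]$. The first-order condition gives $q=\qmin+\bar v(\qmax-\qmin)$, and this point is interior to the interval. The objective is strictly concave, so the choice is unique and tie-breaking is irrelevant. Utility at this point is nonnegative, so \texttt{IR} holds. Hence the normalized quality satisfies $q'_j=\bar v$ exactly.

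\textbf{Step 2: posterior means to moments, by induction on $j$.} Under $\pi_j$ we have $\pi_j(0|v)=v^{j-1}$, so signal $0$ has posterior mean $\E[V^j]/\E[V^{j-1}]$. Signal $1$ has posterior mean $(\E[V]-\E[V^j])/(1-\E[V^{j-1}])$. Suppose $m_{j-1}$ is already known, with $m_0=1$ and $m_1$ obtained from query $1$.
\begin{itemize}
\item If $s_j=0$, then $m_j=q'_jm_{j-1}$.
\item If $s_j=1$, then $m_j=m_1-q'_j(1-m_{j-1})$.
\end{itemize}
These are exactly the updates in the algorithm.

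I will treat the degenerate cases explicitly. For $j=1$, $\pi_1(0|v)=1$, so signal $0$ always occurs and $m_1=\E[V]$ is obtained directly. When $\E[V^{j-1}]=1$ or $0$ makes one signal have probability zero, that signal is never observed, and the other formula applies with a nonzero denominator. Whichever signal realizes, the computed $m_j$ is correct deterministically. No randomness enters, so $\bm m$ is exact after $n$ queries.

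\textbf{Step 3: invert and optimize.} The pmf satisfies $Vp=\bm m$, where $V_{i,j}=x_j^i$ for $i=0,\dots,n$. This matrix is Vandermonde with distinct nodes, so it is invertible, and the unique solution is the true pmf, giving $\widehat F=F$. With $\tau=0$, no signal satisfies $t_k-t_{k-1}<0$, so the set of bad signals is empty. With $\rho=0$, the prices are not modified. The output is therefore the optimal monotone partitional scheme and direct menu for $F$. The value-space conversion of \Cref{sec:conv} preserves signal marginals and posterior means under $\widehat F=F$. By the theorem of \citet{bergemann2026screening}, this monotone partitional optimum equals $Rev^*(F)$.

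\textbf{Main obstacle.} The only delicate point is Step 3's implicit need for the support to be known and for the recovery to be exact rather than approximate. If the support were unknown, I would instead argue via moment uniqueness (a Prony-type argument). That would require $2n+1$ moments, which exceeds the query count, so I will rely on the algorithm's stated known-support assumption.
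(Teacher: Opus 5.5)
Your proposal is correct and follows the paper's proof. The fixed quadratic-price menu makes the normalized chosen quality equal the posterior mean via the first-order condition, the two-signal schemes $\pi_j(0|v)=v^{j-1}$ recover $m_j=\E_F[V^j]$ inductively, and Vandermonde inversion on the known, distinct support points gives $\widehat F=F$ exactly, so no revenue is lost. Your extra checks fill in details the paper leaves implicit without changing the argument: the zero-probability signals in degenerate cases, \texttt{IR} at the first-order point, and the fact that $\tau=\rho=0$ leaves no bad signals and no price discount.
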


We first begin with the following moment recovery lemma.

\begin{lemma}
    For any $j\leq n$, $m_j=\mathbb{E}_F[v^{j}]$ as recovered on line 6 of \Cref{alg:2}.
    \label{lemma:momentrecovery}
\end{lemma}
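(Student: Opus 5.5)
The plan is to show that each demand query on line 4 reveals the exact posterior mean of the realized signal, and then to recover the moments by induction on $j$.

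\emph{Step 1: the quality purchased equals the posterior mean.} Under the fixed menu $\mathcal{M}$, a buyer with posterior mean $w\in[0,1]$ maximizes
\[
w q - \frac{(q-\qmin)^2}{2(\qmax-\qmin)}
\]
over $q\in[\qmin,\qmax]$. This objective is strictly concave in $q$, so the maximizer is unique and seller-favourable tie-breaking plays no role. The first-order condition gives $q^\star=\qmin + w(\qmax-\qmin)$, which lies in $\mathcal{Q}$ because $w\in[0,1]$. Hence the normalized quality satisfies $q'_j = w$.

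I also need to rule out the outside option. Buying $q=\qmin$ gives utility $w\qmin\ge 0$, so the buyer weakly prefers participating. If there is a tie at $w=0$, it is broken toward purchasing, which again yields $q'_j=0=w$. So every query reveals exactly $\bar v_{\pi_j,F}(s_j)$.

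\emph{Step 2: compute the posterior means under $\pi_j$.} With $\pi_j(0|v)=v^{j-1}$, the posterior mean given signal $0$ is $\E_F[v^j]/\E_F[v^{j-1}]$. The posterior mean given signal $1$ is
\[
\frac{\E_F[v(1-v^{j-1})]}{\E_F[1-v^{j-1}]}=\frac{\E_F[v]-\E_F[v^j]}{1-\E_F[v^{j-1}]}.
\]
Each denominator is the marginal probability of the corresponding signal. A signal of zero probability is realized with probability zero, so almost surely the realized signal has a positive denominator.

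\emph{Step 3: induction on $j$.} The base case is $m_0=1=\E_F[v^0]$. For $j=1$, signal $0$ is sent with probability one, so $q'_1=\E_F[v]$ and $m_1=q'_1 m_0=\E_F[v]$.

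For $j\ge 2$, assume $m_{j-1}=\E_F[v^{j-1}]$ and $m_1=\E_F[v]$.
\begin{itemize}
\item If $s_j=0$, Steps 1–2 give $q'_j = \E_F[v^j]/m_{j-1}$, so the update $m_j=q'_j m_{j-1}$ equals $\E_F[v^j]$.
\item If $s_j=1$, they give $q'_j(1-m_{j-1}) = m_1-\E_F[v^j]$, so the update $m_j = m_1-q'_j(1-m_{j-1})$ again equals $\E_F[v^j]$.
\end{itemize}
Note that the $s_j=1$ branch is only ever used for $j\ge 2$, when $m_1$ has already been computed correctly.

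\emph{Main obstacle.} The only delicate point is Step 1: the buyer's choice must be an exact, unique, interior readout of the posterior mean, including the boundary behaviour and the outside option. Everything after that is algebra. I would also state explicitly that the identity holds regardless of which signal is realized, which makes the recovery deterministic even though buyer values are random.
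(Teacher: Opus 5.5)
Your proof is correct and follows the same route as the paper. The quadratic price schedule makes the normalized purchased quality equal the posterior mean via the first-order condition. The two posterior means under $\pi_j$ are written in terms of $\mathbb{E}_F[v]$, $\mathbb{E}_F[v^{j-1}]$ and $\mathbb{E}_F[v^j]$, and induction on $j$ closes the argument. Your version also fills in details the paper leaves implicit: uniqueness by strict concavity, the base case $j=1$, and zero-probability signals.

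One small correction concerns the tie at $w=0$. Seller-favourable tie-breaking would actually favour no sale there when $c(\qmin)>0$, not a purchase. This is harmless: a buyer with $w>0$ strictly prefers buying, since their utility is at least $w\qmin>0$. So a non-purchase still reveals $w=0$ exactly.
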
   
\begin{proof}
Upon observing any signal $s$, the quality of the item chosen by the buyer is $\argmax_{q\in [\underline {q}, \overline{q}]}q\overline{v}_{\pi, F}(s)-p(q)$. First-order conditions dictate that $q'_j=\overline{v}_{\pi, F}(s)$, thus we recover the posterior mean exactly. Now, in any round $n$, we have $\overline{v}_{\pi, F}(0) = \tfrac{\mathbb{E}_F[v^k]}{\mathbb{E}_F[v^{k-1}]} \text{ and }\overline{v}_{\pi, F}(1) = \tfrac{\mathbb{E}_F[v]-\mathbb{E}_F[v^k]}{1-\mathbb{E}_F[v^{k-1}]}$.
Thus, inductively, we have that $m_j = \mathbb{E}_F[v^{j}]$.
\end{proof}

Given a sufficient number of moments, Vandermonde inversion guarantees exact pmf recovery. 

\begin{proof}[Proof of \Cref{thm:disc}]
Using Lemma~\ref{lemma:momentrecovery} we can see that $\mathbb{E}_F[v^j]=\sum_{i=1}^{n+1} p_i x_i^j = V_j^{\top} p$, where $V_j$ is the $j^{th}$ row of the Vandermonde matrix defined on line 8 of \Cref{alg:2}. Hence, the true pmf satisfies $Vp=\bm{m}$. Because the Vandermonde matrix is invertible as long as $x_i\neq x_j$ for all $i<j$, the pmf reconstructed by the algorithm is exact, and no revenue is lost.
\end{proof}

\subsection{Approximating the CDF for Continuous Distributions}

The continuous case is trickier. We would like to approximate the CDF $F$ via orthogonal polynomial families in the $L_\infty$ sense, that is, in the Kolmogorov metric. We instead settle on the best $L_2$ approximation, since those can be computed via inner products on the Lebesgue measure; the latter restriction rules out well-loved classes like the Chebyshev family \citep{cheney2009course}. However, Legendre projections are known to have suboptimal $L_\infty$ approximation guarantees, so we use de la Vallee Poussin mean of Legendre projections which in effect stabilizes the latter, and uses twice the number of moments to give near-optimal guarantees matching the best $L_\infty$ approximation à la Jackon's theorem \citep{mastroianni2008vallee}. A final hurdle is that moments of a distribution naturally correspond to inner products of the moment curve with the pdf, and we would like to compute inner products with the CDF. This gap can be bridged by an application of integration by parts. We supply the details in the Appendix~\ref{sec:appdem}.
$$
 \textstyle\int_0^1 x^j F(x) dx = 1 - \tfrac{1}{j+1}\int_0^1 x^{j+1} f(x) dx.
$$
\begin{theorem}\label{thm:notdisc}
For any $r\in \mathbb{Z}_{>0}$ and  any continuous distribution $F$ where the $r^{th}$ derivative of the CDF is bounded as $B^r$, for some constant $B>0$, when \Cref{alg:2} is run for $n=\mathcal{O}(\frac{\overline{q}^{2/r}K^{2/r}B}{\varepsilon^{2/r}})$ queries and suitable $\tau, \rho$, it produces a menu-signaling-scheme pair $\tilde{\bm{x}}_{\hat{F}}$ satisfying
 \begin{align*}
    Rev(\tilde{\bm{x}}_{\hat{F}};F) \geq Rev^*(F) - \varepsilon.
 \end{align*}
\end{theorem}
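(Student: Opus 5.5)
By \Cref{lemma:momentrecovery}, the $n$ demand queries return the moments $m_j=\mathbb{E}_F[v^j]$ exactly for $j\le n$. The plan has three steps. First, use these moments to build a uniform (Kolmogorov) approximation $\hat F$ of $F$. Second, show that \Cref{alg:1}'s post-processing (bad-signal clumping and linear-additive price discounts) depends only on a uniform bound on $\sup_v|\hat F(v)-F(v)|$. Third, set that bound to the level $\eta$ that the analysis of \Cref{thm:eps3samplecomplexity} demands and count the moments needed to reach it.

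\textbf{Step 1 (approximation).} The integration-by-parts identity $\int_0^1 x^jF(x)\,dx = 1-\tfrac{1}{j+1}m_{j+1}$ lets us compute every coefficient of the Legendre projection of $F$ up to degree $\lceil n/2\rceil$ exactly from $m_1,\dots,m_{n}$. These are the $\alpha_j$ of line 12 of \Cref{alg:2}, possibly off by one index. The de la Vall\'ee Poussin mean $\hat F_{\lceil n/2\rceil}$ has two properties \citep{mastroianni2008vallee}: its $L_\infty$ operator norm is bounded by a constant $C$, and it reproduces polynomials of degree $\le \lceil n/2\rceil$. Hence $\|\hat F_{\lceil n/2\rceil}-F\|_\infty\le (1+C)E_{\lceil n/2\rceil}(F)$, where $E_d$ is the best uniform polynomial approximation error. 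Jackson's theorem with $\|F^{(r)}\|_\infty\le B^r$ gives $E_d(F)\le C_r (B/d)^r$, so $\|\hat F_{\lceil n/2\rceil}-F\|_\infty=\mathcal{O}((B/n)^r)$. The final step of the algorithm replaces $\hat F_{\lceil n/2\rceil}$ by the smallest nondecreasing nonnegative upper envelope $\hat F$. This does not increase the sup-distance to the monotone function $F$ by more than the same amount, since the envelope of $g$ lies between $g$ and $F+\|g-F\|_\infty$. Capping at $1$ keeps $\hat F$ a valid CDF. Set $\eta:=\|\hat F-F\|_\infty=\mathcal{O}((B/n)^r)$.

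\textbf{Step 2 (transfer).} This step reuses the deterministic part of the proof of \Cref{thm:eps3samplecomplexity}, with the sampling-based DKW-type bounds replaced by the uniform bound $\eta$. First, the optimal quantile scheme for $\hat F$, implemented in value space via \Cref{sec:conv}, generates good signals under $F$ with probabilities within $\mathcal{O}(\eta)$ of $t_k-t_{k-1}\ge\tau$. Second, the conditional means $\bar v_{\pi,F}(k)$ and $\bar v_{\pi,\hat F}(k)$ differ by $\mathcal{O}(\eta/\tau)$, because the numerators $\int v\,\pi\,dF$ are controlled by integration by parts against $F-\hat F$. Third, the discount $\rho\gtrsim\overline q\eta/\tau$ then repairs \texttt{IC}/\texttt{IR} for good signals, exactly as in \Cref{lemma:samplecomplexityimpA}. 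Fourth, clumping bad signals costs $\mathcal{O}(K\overline q\tau)$. Fifth, $Rev^*(\hat F)\ge Rev^*(F)-\mathcal{O}(K\overline q\eta/\tau')$ by the same argument applied in reverse to $F$'s optimal scheme. The total loss is $\mathcal{O}(K\overline q\tau + K^2\rho+K\overline q\eta/\tau)$. Balancing the terms gives $\tau\asymp\sqrt{\eta}$, for a loss of order $K\overline q\sqrt{\eta}$ up to $K$ factors absorbed in the constant.

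\textbf{Step 3 (count) and main obstacle.} Requiring $K\overline q\sqrt\eta\le\varepsilon$ means $\eta\le \varepsilon^2/(K\overline q)^2$, i.e. $(B/n)^r\lesssim \varepsilon^2/(K^2\overline q^2)$, so $n=\mathcal{O}(B\,\overline q^{2/r}K^{2/r}/\varepsilon^{2/r})$, as claimed. The main obstacle is Step 2, specifically the claim that a uniform CDF error $\eta$ yields posterior-mean error $\mathcal{O}(\eta/\tau)$ when the signal boundaries are randomized atoms of $\hat F$. Since $\hat F$ is continuous here, the value-space description is a deterministic interval partition $[\hat F^{-1}(t_{k-1}),\hat F^{-1}(t_k))$. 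I would therefore integrate by parts on each interval, so that both the mass and the first moment of each interval differ between $F$ and $\hat F$ by at most $2\eta$ and $3\eta$ respectively.
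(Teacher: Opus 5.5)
Your proposal is correct and follows essentially the same route as the paper. The paper's steps are: exact moments via \Cref{lemma:momentrecovery}; Legendre coefficients by integration by parts; the de la Vall\'ee Poussin bound $\|F-\hat F\|_\infty=\mathcal{O}((B/n)^r)$, which you derive from bounded operator norm plus polynomial reproduction plus Jackson, where the paper cites Theorem 2.1 of \citet{mastroianni2008vallee} directly; the monotone envelope; the interval-wise integration-by-parts transfer to signal masses and posterior means; and finally \Cref{lemma:samplecomplexityimpA} with $\tau\asymp\sqrt{\eta}$ and $\rho\asymp\overline{q}\sqrt{\eta}$. One small slip: the price-discount loss is $K\rho$, not $K^2\rho$, since each buyer pays at most $K\rho$ less and the total mass is one, so the final loss is $\mathcal{O}(K\overline{q}\sqrt{\eta})$ with no extra $K$ factors to absorb, which is exactly what your Step 3 uses.
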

As an immediate corollary, for bounded and Lipschitz pdfs we need $1/\varepsilon^2$ and $1/\varepsilon$ queries, respectively. For analytic functions, we choose $r=\log (1/\varepsilon)$ to get polylogarithmic in $1/\varepsilon$ queries.

%% file: src/robustregret.tex
\section{Regret under Joint Learning}
Followed by  \cite{zu2025learning}, we consider a more general setting where the seller interacts with a buyer on each day, commits to a menu and signaling scheme, and observes the purchase decision and the value of the buyer at the end of the day. More importantly, we allow the case when the population of buyers are learning, perhaps exogenously through reviews of the item on the platform. This setting thus subsumes the previous settings, however, the analysis closely mirrors the earlier one once we introduce the following relaxed behavioral assumption. Since the buyers are learning, the seller asks, what does the belief the buyer who arrives on that day hold? If the seller knew the belief of the buyer, he could solve the offline problem. Since he does not know the buyer's belief precisely, he robustifies the menu and signaling scheme which performs well for a set of beliefs, which the seller iteratively updates knowing that buyers are learning. This idea is captured in the following definition, where the seller confidently assumes that the arriving buyer's belief on day $t$ lies in the set $\mathcal{C}(\hat{F}_t)$ which can simply be $\{F^\star\}$, the \textit{unknown} common belief, $\mathcal{B}_0$ the sellers prior belief about buyers' belief or simply $\hat{F}_t$, the empirical distribution on day $t$.
\begin{definition}
We say that an algorithm  which executes $\tilde{\bm{x}}_t$ in each round is $(\beta,\bm{\kappa})$-robustly revenue efficient, if there exists \textit{data-dependent} sets $\mathcal{C}_t$ such that we have $(i)$ coverage i.e. $\text{Pr}\left(\cap_{t=1}^T\mathcal{C}_t\ni F^\star\right)\geq 1-\beta$ and $(ii)$ robustness, i.e. for $\mathcal{H}_t$ the set of all possible histories up to time $t$, we need for all $t\in[T]$ and for all $h_t\in \mathcal{H}_t$, conditioned on the history $h_t$ that  $\sup_{G\in \mathcal{C}(\hat{F}_t)}\max_{\bm{x}}\texttt{Rev}(\bm{x};\hat{F}_t,\hat{F}_t)-\texttt{Rev}(\bm{\tilde{x}}_t;G,\hat{F}_t)\leq \kappa_t$.
\label{defn:betakappa}
\end{definition}
\begin{assumption} For a $(\beta,\bm{\kappa})$-robustly revenue efficient algorithm w.r.t. sets $\mathcal{C}_t$, the buyers' prior $F^b_t$ at round $t$ is such that $\text{Pr}(\mathcal{C}_t\ni F^b_t)=1\ \forall\ t\in[T]$. The seller knows the set $\mathcal{B}_0$ s.t. $\text{Pr}(\mathcal{B}_0\ni F^\star)=  1$. 
\label{asm:behavioral}
\end{assumption}

Although the Definition~\ref{defn:betakappa} allows for arbitrary sets $\mathcal{C}$, one is interested in sets which are statistically achievable by the buyer. We propose Algorithm~\ref{algo:online},  where the seller chooses parameter sequence based on the assumed sets $\mathcal{C}$. Throughout the following results we take the sets $\mathcal{C}_t(\varepsilon_t):=\{G\in \Delta
    (\mathcal{V}):\sup_v|G(v)-\hat{F}_t(v)|\leq \varepsilon_t
    \}$ for all $t\in[T]$, which are $\varepsilon_t>0$ sized balls centered at $\hat{F}_t$ in the Kolmogorov metric. First, we show the space parameters $\beta,\kappa$ for which the algorithm is robustly revenue efficient.

\begin{algorithm2e}[t]
\caption{{Online Empirical Menu \& Information Design}}
\KwIn{$\tau_t, \varepsilon_t, h(\tau_t, \varepsilon_t)$ for $t=1,\ldots,T$}

Initialize $\hat{F}_1 \in \mathcal{B}_0$\;

\For{$t = 1, \ldots, T$}{
    Call \Cref{alg:1} with the distribution $\hat{F}_t$, cutoff $\tau_t$ and discount $\rho_t$ to obtain $\tilde{\bm{x}}_{\hat{F}_t}$\;
    Execute $\tilde{\bm{x}}_{\hat{F}_t}$ and observe $v_t$ and update $\hat{F}_{t+1}(v) = \frac{1}{t}\sum_{s=1}^{t}\mathbf{1}_{\{v_s \leq v\}}$\;
}
\label{algo:online}
\end{algorithm2e}

\begin{theorem}
 The Algorithm~\ref{algo:online}
    with any cutoff sequence $\{\tau_t\}_{t=1}^T$ and for any sequence $\{\varepsilon_t\}_{t=1}^T$ with $\varepsilon_t\in (0,1]$ and discount sequence given by $\rho_t = 4\Bar{q}\varepsilon_t/\tau_t$
    is $(\beta,\bm{\kappa})$-robustly revenue efficient for sets $\mathcal{C}_t(\varepsilon_t)$ for all $\beta\geq \sum_{t\in[T]}2\exp\left(-t\varepsilon_t^2/2\right)$ and $\kappa_t = K\Bar{q}(4\varepsilon_t/\tau_t + \tau_t)$.
\end{theorem}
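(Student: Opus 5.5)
Two things must be shown: the coverage condition (i) and the robustness condition (ii) of \Cref{defn:betakappa}, with $\mathcal{C}_t=\mathcal{C}_t(\varepsilon_t)$.

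\textbf{Coverage.} Since $\hat{F}_t$ is an empirical CDF from $t$ (or $t-1$) i.i.d.\ draws of $F^\star$, the DKW inequality with Massart's constant gives $\Pr(\sup_v|\hat{F}_t(v)-F^\star(v)|>\varepsilon_t)\le 2\exp(-2t\varepsilon_t^2)$. Off-by-one indexing only shifts $t$ by one, which is absorbed by the weaker exponent $t\varepsilon_t^2/2$. A union bound over $t\in[T]$ then shows that $F^\star\in\mathcal{C}_t$ for all $t$ with probability at least $1-\sum_t 2\exp(-t\varepsilon_t^2/2)\ge 1-\beta$.

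\textbf{Robustness.} Fix a history, so $\hat{F}_t$ is fixed, and fix any $G$ with $\|G-\hat{F}_t\|_\infty\le\varepsilon_t$. Signals are generated according to $\hat{F}_t$ but posteriors are formed under $G$. I would mirror the analysis of \Cref{alg:1} in \Cref{thm:eps3samplecomplexity}, but in a deterministic form: the Kolmogorov ball replaces the concentration step.

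\begin{enumerate}
\item \emph{Posterior means of good signals are stable.} For a good signal $k$ (quantile width $\ge\tau_t$), write its posterior mean under $G$ as a ratio of integrals of $\pi(k|v)$ and $v\,\pi(k|v)$ against $\mathrm{d}G$. The signal is monotone partitional, an interval with possibly randomized endpoints. Integration by parts therefore bounds the numerator and denominator deviations by $O(\varepsilon_t)$, and the denominator is at least $\tau_t - O(\varepsilon_t)$. This should give $|\bar v_{\pi,G}(k)-\bar v_{\pi,\hat F_t}(k)|\le 4\varepsilon_t/\tau_t$. The constant $4$ has to be tracked carefully, and when $\varepsilon_t$ is comparable to $\tau_t$ the bound is trivial anyway because means lie in $[0,1]$.
\item \emph{Buyers stay on the intended item or better.} Apply the discount lemma (\Cref{lemma:samplecomplexityimpA}) with $\rho_t=\bar q\cdot 4\varepsilon_t/\tau_t$. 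The linear-additive discount $\tilde p_k=p_k-k\rho_t$ ensures that a good-signal recipient does not defect to a cheaper (lower-index) item, and that IR still holds. Any defection is therefore to a weakly higher item, and because profits $p_k-c(q_k)$ are non-decreasing, this does not lose revenue beyond the discount.
\item \emph{Total loss is bounded.} Each good signal loses at most $K\rho_t$ per unit of mass. The bad signals, bundled into \texttt{null}, number at most $K$, each has mass below $\tau_t$, and each carries at most $\bar q$ revenue, for a loss of at most $K\bar q\tau_t$. Comparing with $\max_{\bm x}\texttt{Rev}(\bm x;\hat F_t,\hat F_t)$, which \Cref{alg:1} attains on $\hat F_t$ with $\varepsilon_{OPT}=0$, gives $\kappa_t=K\bar q(4\varepsilon_t/\tau_t+\tau_t)$.
\end{enumerate}

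The main obstacle is step 1 combined with step 2: deriving the conditional-mean perturbation bound under the Kolmogorov distance while the signaling randomizes at atoms of $\hat F_t$, and reusing the IC-repair lemma where the buyer's prior $G$ differs from the generating distribution $\hat F_t$. For step 1 I would first try the quantile-space representation from \Cref{sec:conv}, expressing both posterior means as integrals of $v$ against $\pi(k|v)$ and integrating by parts against $G-\hat F_t$.
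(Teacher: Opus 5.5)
Your proposal is correct and follows the paper's proof essentially step for step. Coverage is DKW plus a union bound, and robustness feeds the Kolmogorov-ball posterior-mean bound (the paper's \Cref{lemma:posteriormeanCt}) into the discount-and-\texttt{null} GAP argument (\Cref{lemma:samplecomplexityimp}), giving $K(\rho_t+\bar q\tau_t)$. To land exactly on the stated constants, use the $\hat F_t$-mass of the signal ($\ge\tau_t$ by the definition of good signals) as the denominator in the ratio-perturbation inequality. Also, for $j<k$ bound the lost IC slack directly by $(q_k-q_j)\,\bigl|\mathbb{E}_G[v|k]-\mathbb{E}_{\hat F_t}[v|k]\bigr|\le\rho_t$, rather than quoting \Cref{lemma:samplecomplexityimpA}, whose discount carries an extra factor of $2$ and an unnecessary $\varepsilon_0$ term.
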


We measure the performance of the algorithm through cumulative regret defined as  $\texttt{Reg}:=\sum_{t=1}^T\texttt{Rev}^\star(F^\star)-\texttt{rev}_t(\bm{x}_t;F^b_t)$ where $\texttt{rev}_t(\bm{x}_t;F^b_t)$ denotes the revenue generated by seller on day $t$ upon committing to $\bm{x}_t$ when the arriving buyer hold belief $F^b_t$.

\begin{theorem}  Algorithm~\ref{algo:online}
    with cutoff $\tau_t=\sqrt{\varepsilon_t}$ and  discount $\rho_t = 4\Bar{q}\sqrt{\varepsilon_t}$, for the sets $\mathcal{C}_t(\varepsilon_t)$ with $\varepsilon_t=\sqrt{3\log T/ t}$ gives the cumulative regret $\texttt{Reg}
    \leq    \mathcal{O}\left( K\bar{q}(\log T)^{1/4}T^{3/4}\right)$ with probability at least $1-3T^{-1/2}$.
\end{theorem}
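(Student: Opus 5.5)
We split the regret into a statistical (coverage) part and a per-round robustness part, using the preceding theorem on robust revenue efficiency. With $\varepsilon_t=\sqrt{3\log T/t}$, $\tau_t=\sqrt{\varepsilon_t}$ and $\rho_t=4\bar q\sqrt{\varepsilon_t}=4\bar q\varepsilon_t/\tau_t$, the preceding theorem applies verbatim. It gives $\kappa_t=K\bar q(4\varepsilon_t/\tau_t+\tau_t)=5K\bar q\sqrt{\varepsilon_t}$, and coverage holds with $\beta\ge\sum_t 2\exp(-t\varepsilon_t^2/2)=\sum_t 2T^{-3/2}=2T^{-1/2}$ by the DKW inequality. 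On this coverage event, \Cref{asm:behavioral} places both $F^\star$ and the buyer's belief $F^b_t$ in $\mathcal{C}_t(\varepsilon_t)$ for every $t$.

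Each round's regret is decomposed through the empirical distribution. We write
\begin{align*}
\texttt{Rev}^\star(F^\star)-\texttt{rev}_t(\tilde{\bm x}_t;F^b_t)
&=\bigl[\texttt{Rev}^\star(F^\star)-\texttt{Rev}^\star(\hat F_t)\bigr]
+\bigl[\texttt{Rev}^\star(\hat F_t)-\texttt{Rev}(\tilde{\bm x}_t;F^b_t,\hat F_t)\bigr]\\
&\quad+\bigl[\texttt{Rev}(\tilde{\bm x}_t;F^b_t,\hat F_t)-\texttt{rev}_t(\tilde{\bm x}_t;F^b_t)\bigr].
\end{align*}
\begin{itemize}
\item The middle term is at most $\kappa_t$ by robustness, applied with $G=F^b_t\in\mathcal{C}_t$.
\item For the first term, signals are generated under $\hat F_t$ but realized values come from $F^\star$. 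Optimality of $\texttt{Rev}^\star(F^\star)$ is witnessed by a monotone partitional scheme with at most $K$ items. Transplanting it to $\hat F_t$ costs $O(K\bar q\sqrt{\varepsilon_t})$ by the same cutoff/discount argument that underlies $\kappa_t$, i.e.\ the repair analysis of \Cref{alg:1} used in the proof of \Cref{thm:eps3samplecomplexity}. Alternatively, the sup over $G$ in robustness with $G=F^\star$ gives this directly.
\item The third term is a martingale difference, because $\hat F_t$ and $\tilde{\bm x}_t$ are determined by the history before $v_t\sim F^\star$ is drawn. The difference between revenue with signals drawn from $F^\star$ versus $\hat F_t$ is controlled by the Kolmogorov distance times a bounded-variation factor of the revenue integrand. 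That factor is at most $O(K\bar q)$, since the scheme has at most $K+1$ signals and per-item revenue is at most $\bar q$. This gives $O(K\bar q\varepsilon_t)$ in expectation, and the fluctuation of realized revenue around its conditional mean is handled by Azuma--Hoeffding. Increments are bounded by $\bar q$, so with probability $1-T^{-1/2}$ the martingale sum is $O(\bar q\sqrt{T\log T})$.
\end{itemize}

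Summing over rounds, $\sum_t \sqrt{\varepsilon_t}=(3\log T)^{1/4}\sum_t t^{-1/4}\le \tfrac43(3\log T)^{1/4}T^{3/4}$. The total is therefore $O(K\bar q(\log T)^{1/4}T^{3/4})$, and the lower-order terms $\sum_t\varepsilon_t=O(\sqrt{T\log T})$ and $O(\bar q\sqrt{T\log T})$ are absorbed. A union bound gives failure probability at most $2T^{-1/2}+T^{-1/2}=3T^{-1/2}$.

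The main obstacle is the first and third terms, which relate revenue when signals are drawn from $\hat F_t$ to revenue when they are drawn from $F^\star$. Robustness in \Cref{defn:betakappa} only perturbs the belief argument $G$, not the signal-generating distribution. For the third term, the plan is to write revenue with a fixed value-space scheme as $\int g\,dF$, where $g(v)=\sum_s \pi(s|v)(\tilde p_s-c(q_s))$ depends on the buyer's response computed from the belief $G$. Monotone partitional signaling means $g$ is piecewise constant with at most $K+1$ pieces. Integration by parts then yields $|\int g\,d(F^\star-\hat F_t)|\le 2(K+1)\bar q\,\varepsilon_t$, which is lower order. For the first term, apply the same bound to the $F^\star$-optimal scheme after its cutoff/discount repair.
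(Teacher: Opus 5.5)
Your proof is correct and is essentially the paper's. It uses the same decomposition through $\hat F_t$: your third term is the paper's term (C) plus the martingale increment, handled by the same Kolmogorov/total-variation bound and Azuma--Hoeffding. That martingale step also needs $F^b_t$ to be determined by the history, as the paper assumes. The first term is bounded the same way, through the cutoff/discount-repaired $F^\star$-optimal scheme using $\hat F_t\in\mathcal{C}(F^\star)$ on the coverage event, and the summation and union bound are the same.

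Drop your aside that robustness with $G=F^\star$ bounds the first term directly. Robustness centered at $\hat F_t$ only yields $\texttt{Rev}^\star(\hat F_t)-\texttt{Rev}(\tilde{\bm x}_t;F^\star,\hat F_t)\le\kappa_t$, which is an upper bound on $\texttt{Rev}^\star(\hat F_t)$. Bounding $\texttt{Rev}^\star(F^\star)-\texttt{Rev}^\star(\hat F_t)$ instead requires exhibiting a good scheme under $\hat F_t$. That is exactly what transplanting the repaired $F^\star$-optimal scheme provides, using the symmetry of the Kolmogorov ball.
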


In \Cref{sec:appreg}, we prove the above results and also show that there are (more intricate) statistically plausible sets $\mathcal{C}_t$ for which we obtain a $\tilde{\mathcal{O}}(T^{2/3})$ regret upper bound.

%% file: src/conclusion.tex
\section{Conclusion}
We study a novel learning problem motivated by the fact that sellers and algorithmic platforms today have access to vast user trails that they can use to forecast the personalized value of niche products for a specific user. We give learning algorithms that reflect how a seller might use past experience to craft the best menu of products and prices, along with a recommendation for individual customers on what products to buy. With access to samples of values, we achieved a sample complexity of ordert $1/\varepsilon^3$. We then showed that this can be improved significantly in a model where the seller can observe the buyers' behaviors to carefully chosen menu and recommendations, under some circumstances making the sample complexity near-constant. Finally, we give an efficient algorithm to compute a near-optimal menu and signaling scheme, despite the underlying problem being non-convex. Our work brings a learning-theoretic perspective to the interface between information and (traditional) mechanism design.

\paragraph{Acknowledgements} TP is supported by the Balas Fellowship Award and Ph.D. funding from the Tepper School of Business. We thank Rattana Pukdee, Kiriaki Fragkia and Andreas Kalavas for helpful comments.

%% file: src/appendix.tex
\section*{Appendix}
In \Cref{sec:appval}, we complete the proof for sample complexity with value samples. \Cref{sec:appdem} supplies the proofs for sample complexity of demand queries. In \Cref{sec:appreg} and \Cref{sec:appfptas}, we complete the proofs for the joint-learning-based regret setting and prove the correctness of the FPTAS, respectively. \Cref{app:equiv} provides formal proofs for the signaling transformation discussed in \Cref{sec:conv}.

\section{Sample Complexity with Value Samples}\label{sec:appval}

\subsection{Proof of Theorem~\ref{thm:eps3samplecomplexity}}
We decompose the error as 
    \begin{align*}
          Rev^*(F) -   Rev(\tilde{x}_{\hat{F}};F) & =  \underbrace{Rev^*(F) - Rev^*(\hat{F})}_{\texttt{Term1}}  + \underbrace{Rev^*(\hat{F})- Rev(\tilde{x}_{\hat{F}};F,F)}_{\texttt{Term2}}.
    \end{align*}

Throughout the analysis, we condition on the event in \Cref{lem:conc}, which holds with high probability.
\begin{lemma}\label{lem:conc}
There exists a universal constant $C$ such that for any $\delta>0, n>1$ and distribution $F$, with probability $1-\delta$, for all monotone paritional signaling scheme $\pi$ with $K$ signals, we have that
\begin{align} \label{eq:condmean}
    &\left|\Pr_{\pi\circ F}({k}) - \Pr_{\pi\circ \widehat{F}}({k}) \right| \leq C\left(\sqrt{\min\{\Pr_{\pi\circ {F}}(k),\Pr_{\pi\circ {\widehat{F}}}({k})\}\frac{\log (n /\delta)}{n}} + \frac{(\log n /\delta)}{n}\right), &\\ 
     &\left|\mathbb{E}_{F}[v|k] - \mathbb{E}_{\widehat{F}} [v|k] \right| \leq C\left(\sqrt{\frac{\log (n /\delta)}{n\max\{\Pr_{\pi\circ {F}}(k),\Pr_{\pi\circ {\widehat{F}}}({k})\}}} + \frac{\log (n /\delta)}{n\max\{\Pr_{\pi\circ {F}}(k),\Pr_{\pi\circ {\widehat{F}}}({k})\}}\right),\nonumber 
\end{align}
for all $k\in [K]$, where $\widehat{F}$ is a $n$-point empirical distribution sampled from $F$.

\end{lemma}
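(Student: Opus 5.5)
The plan is to reduce both inequalities to one uniform, variance-sensitive deviation bound over a fixed low-complexity function class. We then invoke a relative-deviation (localized) uniform convergence theorem for that class \citep{haussler1992decision,li2001improved,bartlett2005local}.

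\textbf{Step 1 (function class).} By \Cref{sec:conv}, for any monotone partitional scheme the map $v\mapsto\pi(k\mid v)$ equals $1$ on an open interval $(v_0,v_1)$ and takes fractional values $\alpha,\beta\in[0,1]$ at the two endpoints. This also covers the degenerate case $v_0=v_1$. Equivalently, draw an auxiliary $u\sim U[0,1]$. Then $\pi(k\mid v)=\Pr_u[(v,u)\in A]$, where $A$ is an interval in the lexicographic order on $[0,1]^2$. Lexicographic intervals form a VC class of constant dimension. Consequently the class
\[
\mathcal{G}=\{(v)\mapsto \pi(k\mid v)\}\cup\{(v)\mapsto (v-\mu)\pi(k\mid v):\mu\in[0,1]\}
\]
has constant pseudo-dimension, and hence $L_\infty$ covering numbers polynomial in $n$ and in the inverse covering scale. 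It contains every signal $k$ of every scheme, so no union bound over $k$ or over schemes is needed. This matters because the scheme itself may depend on the data.

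\textbf{Step 2 (marginals).} Apply the relative-deviation bound to the $[0,1]$-valued functions $g=\pi(k\mid\cdot)$. For a universal $C$, with probability $1-\delta$ and simultaneously for all $g$,
\[
|\E_F g-\E_{\widehat F} g|\le C\Bigl(\sqrt{\min\{\E_Fg,\E_{\widehat F}g\}\log(n/\delta)/n}+\log(n/\delta)/n\Bigr).
\]
The $\min$ form follows from the usual two-sided version: the $\sqrt{P}$ and $\sqrt{\widehat P}$ forms are interchangeable up to constants once the $\log/n$ term is present. Since $\E_Fg=\Pr_{\pi\circ F}(k)$ and $\E_{\widehat F}g=\Pr_{\pi\circ\widehat F}(k)$, this is exactly the first inequality.

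\textbf{Step 3 (conditional means).} Write $a=\E_F[v\pi(k\mid v)]$ and $b=\Pr_{\pi\circ F}(k)$, with $\hat a,\hat b$ the empirical analogues, and set $\hat\mu=\hat a/\hat b$. Then
\[
\frac ab-\frac{\hat a}{\hat b}=\frac{1}{b}\Bigl(\E_F[(v-\hat\mu)\pi]-\E_{\widehat F}[(v-\hat\mu)\pi]\Bigr),
\]
using $\E_{\widehat F}[(v-\hat\mu)\pi]=0$. The function $h_\mu=(v-\mu)\pi(k\mid v)$ takes values in $[-1,1]$, and its second moment under $F$ is at most $b$. I would apply a Bernstein-type localized uniform bound (Bartlett--Bousquet--Mendelson local Rademacher, or Maurer--Pontil empirical Bernstein combined with the covering numbers from Step 1) over all $\mu\in[0,1]$. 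This gives
\[
|\E_F h_\mu-\E_{\widehat F}h_\mu|\le C\bigl(\sqrt{b\log(n/\delta)/n}+\log(n/\delta)/n\bigr).
\]
Dividing by $b$ yields $\sqrt{\log/(nb)}+\log/(nb)$. Repeating the argument with the roles of $F$ and $\widehat F$ swapped gives the same bound with $\hat b$: center at $\mu=a/b$, use the empirical second moment $\le\hat b$, and divide by $\hat b$. Taking the better of the two bounds gives the $\max$ in the denominator.

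\textbf{Main obstacle.} The hard step is Step 3, because $\mu$ is data-dependent and we need variance-sensitivity uniformly over the class. A plain VC or DKW argument loses a factor of $\sqrt{b}$. The resolution is to put $\mu$ into the class, so that uniformity absorbs the data dependence, and to check that the local complexity of $\{h\in\mathcal G: \E h^2\le r\}$ scales like $\sqrt{r\,d\log n/n}$. This holds for constant pseudo-dimension by standard chaining, which fixes the $\log(n/\delta)$ factors.
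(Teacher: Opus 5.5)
Your proposal is correct. Steps 1--2 match the paper: a constant-pseudo-dimension class of the signal functions $g_{u,v,\xi_u,\xi_v}$, a variance-sensitive uniform bound (the paper uses Maurer--Pontil with Anthony--Bartlett covering numbers), and the $\sqrt{P}\leftrightarrow\sqrt{\widehat P}$ swap to get the $\min$.

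\textbf{Step 3 takes a different route.} The paper never centers. It observes that $v\,\pi(k\mid v)$ is itself a $[0,1]$-valued function with mean at most $b$. So the same nonnegative relative-deviation bound gives both $|a-\hat a|$ and $|b-\hat b|$ at most $C(\sqrt{\min\{b,\hat b\}\log(n/\delta)/n}+\log(n/\delta)/n)$. The paper then uses the elementary inequality $|A'/B'-A/B|\le |A'-A|/B+(A'/B')|B'-B|/B$. It may take $B$ to be either of $b,\hat b$, because conditional means are at most one, and this yields the $\max$ form directly. So the obstacle you flag does not actually arise: no $\sqrt b$ factor is lost without centering.

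\textbf{Cost and benefit of your centering.} Your route also works: the identity is right, $\mathbb{E}h_\mu^2\le b$, and putting $\mu$ into the class absorbs the data dependence. It costs more, though. Since $h_\mu$ is signed and its variance proxy is not its own mean, you need a genuinely Bernstein-type uniform bound, not a relative-deviation bound for nonnegative functions. Two options:
\begin{itemize}
\item empirical Bernstein on $(h_\mu+1)/2$, where the sample variance is at most of order $\hat b$, followed by Step 2 to pass between $\hat b$ and $b$;
\item local Rademacher complexity with peeling.
\end{itemize}
What it buys is that it exposes $\mathbb{E}_F[(v-\mu)^2\pi]=b(\mathrm{Var}_F(v\mid k)+(\mu_F-\mu)^2)$. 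If you do not crudely bound this by $b$, a self-bounding argument gives an error scaling like $\sqrt{\mathrm{Var}(v\mid k)\log(n/\delta)/(nb)}+\log(n/\delta)/(nb)$. This posterior-variance refinement is not available from the paper's ratio argument, though the lemma as stated does not need it.

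\textbf{Gap in Step 1's justification.} A VC bound on the lexicographic intervals $A$ does not by itself bound the pseudo-dimension of the averaged functions $v\mapsto\Pr_u[(v,u)\in A]$. For example, a family of pairwise disjoint sets has VC dimension one, yet its section measures can pseudo-shatter arbitrarily many points. So your ``consequently'' is unjustified as stated. The conclusion is nevertheless true for this class. One way to see it is via Goldberg--Jerrum, as the paper does: the functions, including your $h_\mu$, have a constant number of real parameters and constant-size evaluation.
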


First, we upper bound \texttt{Term 1}. Let $\bm{x}_F$ be the optimal pair of menu $((q_k, p_k): k\in [K])$ and monotone partitional signaling scheme $(t_k: k\in [K])$ under $F$. Let $\bm{z}_{\tau}=\{k\in [K]: t_k-t_{k-1}< \tau\}$. Define $\tilde{\bm{x}}_F$ with signaling scheme $\tilde{\pi}$ modified so that all signals in $\bm{z}_\tau$ are mapped to a \texttt{null} signal:
\begin{align*}
    \tilde{\pi}(s\mid v) = \begin{cases} \pi(s\mid v) & \text{if } s\notin\bm{z}_\tau,\\ \sum_{k\in\bm{z}_\tau}\pi(k\mid v) & \text{if } s=\texttt{null},\end{cases}  
\end{align*}
and the prices discounted as $\tilde{p}_k = p_k-k\rho$. Then we have
    \begin{align*}
        \texttt{Term1}&:= Rev^*(F) - {Rev}(\tilde{x}_F;\hat{F},\hat{F}) + {Rev}(\tilde{x}_F;\hat{F},\hat{F}) - Rev^*(\hat{F}) \\
        &\leq Rev^*(F) - {Rev}(\tilde{x}_F;\hat{F},\hat{F})
    \end{align*}
    where we appeal to the fact that $Rev^*(\hat{F})$ is the optimal revenue for the empirical distribution $\widehat{F}$. Now, we use \Cref{lemma:samplecomplexityimpA}, which places an upper bound of $\tilde{\mathcal{O}}(\overline{q}K/n^{1/3})$ on \texttt{Term 1}, by choosing $\rho=\widetilde{{\Theta}}(\overline{q}/n^{1/3})$ and $\tau=\widetilde{{\Theta}}(1/n^{1/3})$.

\begin{lemma}\label{lemma:samplecomplexityimpA}
    Consider any two value distributions $F,\widehat{F}$ such that for all monotone partitional signaling schemes $\pi$ with $K$ signals, we have for all $k\in [K]$ that
    \begin{align*}
    \left|\Pr_{\pi\circ F}({k}) - \Pr_{\pi\circ \widehat{F}}({k}) \right| &\leq \varepsilon_0,  \text{ and }  
     \left|\mathbb{E}_{F}[v|k] - \mathbb{E}_{\widehat{F}} [v|k] \right| \leq \frac{\varepsilon_1}{{\Pr_{\pi\circ F}({k})}} + \frac{\varepsilon_2}{\sqrt{\Pr_{\pi\circ F}({k})}}. 
\end{align*}
    Let $\bm{x}_F$ be $\varepsilon_{\textrm{OPT}}$-suboptimal solution for distribution $F$ comprising of an incentive-compatible menu and monotone partitional signaling scheme, for which $p_k-c(q_k)$ is in non-decreasing in $k$. Further, let $\tilde{\bm{x}}_F$ be a modification of $\bm{x}_F$ with all signals with marginal probability smaller than $\tau$ (with respect to $F$) mapped to a {\em null} signal and price modifications $\tilde{p}_k = p_k - k\rho$, where $\rho=2\overline{q}(\varepsilon_1/\tau+\varepsilon_2/\sqrt{\tau})$, $p_k$ is the corresponding price for signal $k$ in $\bm{x}_F$. Then, we have $$
   {Rev}(\bm{x}_F;F,F) - {Rev}(\tilde{\bm{x}}_{F},\widehat{F},\widehat{F})\leq \overline{q}K(\tau + 2(\varepsilon_1/\tau+\varepsilon_2/\sqrt{\tau})+\varepsilon_0) .$$
\end{lemma}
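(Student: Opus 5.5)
The proof tracks how the revenue of $\bm{x}_F$ changes in three steps. First, bundling bad signals into \texttt{null} changes the scheme. Second, the prices are discounted. Third, buyers update from $\widehat{F}$ instead of $F$, and signals are generated from $\widehat{F}$. The decomposition is
\[
Rev(\bm{x}_F;F,F)-Rev(\tilde{\bm{x}}_F;\widehat{F},\widehat{F}) = \bigl[Rev(\bm{x}_F;F,F) - R_{\mathrm{good}}(F)\bigr] + \bigl[R_{\mathrm{good}}(F)-R_{\mathrm{good}}(\widehat{F})\bigr] + \bigl[R_{\mathrm{good}}(\widehat{F}) - Rev(\tilde{\bm{x}}_F;\widehat{F},\widehat{F})\bigr].
\]
Here $R_{\mathrm{good}}(G)=\sum_{k\notin \bm{z}_\tau}\Pr_{\pi\circ G}(k)\,(p_k-c(q_k))$ is the revenue collected on good signals at the undiscounted prices, when signals are generated under $G$. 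The null signal contributes revenue that we simply lower bound by $0$ (the buyer may take the outside option, so that nothing is lost below zero). For this, note that $p_k-c(q_k)$ is nondecreasing and that $q_0=p_0=0$. Then each $p_k-c(q_k)$ is at most $\overline{q}$, and is nonnegative (otherwise drop the item), so every per-signal margin lies in $[0,\overline{q}]$.

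The first bracket is the revenue lost by bundling. There are at most $K$ bad signals, each with mass below $\tau$ and margin at most $\overline{q}$, so the loss is at most $\overline{q}K\tau$.

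The second bracket compares the probabilities of the good signals under $F$ and $\widehat{F}$. Since $\bm{x}_F$'s scheme is monotone partitional, the hypothesis gives $|\Pr_{\pi\circ F}(k)-\Pr_{\pi\circ\widehat{F}}(k)|\le\varepsilon_0$ for each $k$. Summing over at most $K$ good signals with margin at most $\overline{q}$ gives $\overline{q}K\varepsilon_0$.

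The third bracket is the main step. We must show that under prior $\widehat{F}$ a buyer receiving good signal $k$ still buys item $k$ at price $\tilde p_k$. The discount then costs at most $k\rho\le K\rho$ per signal, which totals $K\rho=2\overline{q}K(\varepsilon_1/\tau+\varepsilon_2/\sqrt{\tau})$. For a good signal $\Pr_{\pi\circ F}(k)\ge\tau$, so the hypothesis gives $|\bar v_{\widehat F}(k)-\bar v_F(k)|\le \varepsilon_1/\tau+\varepsilon_2/\sqrt\tau=:\eta$. This is why $\rho=2\overline{q}\eta$.

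We check IC against each item $j$ with a margin argument. Under $F$, IC gives $\bar v_F(k)(q_k-q_j)\ge p_k-p_j$. Changing $\bar v_F(k)$ to $\bar v_{\widehat F}(k)$ perturbs the left side by at most $\eta|q_k-q_j|\le\eta\overline{q}$. For $j<k$ the new prices satisfy $\tilde p_k-\tilde p_j=p_k-p_j-(k-j)\rho\le p_k-p_j-\rho$, and $\rho\ge\eta\overline{q}$, so the inequality survives. IR is handled the same way, with $j=0$ and the discount $k\rho\ge\rho$.

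The delicate case is $j>k$. There the discount works against us, since $\tilde p_j-\tilde p_k$ is smaller by $(j-k)\rho$. If deviation to a higher item occurs, seller-favorable tie-breaking and monotone margins only help: the buyer switches to an item $j$ with $p_j-c(q_j)\ge p_k-c(q_k)$. Its discounted margin is at least $p_k-c(q_k)-K\rho$, so the loss is still bounded by $K\rho$ per signal. If the null signal exists, a deviation to it only changes which item is bought, and is handled by the same monotonicity. Collecting the three bounds yields $\overline{q}K(\tau+2\eta+\varepsilon_0)$.
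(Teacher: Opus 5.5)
Your proof is correct and follows essentially the same route as the paper. Rare signals are bundled at a cost of at most $\overline{q}\tau$ each; the linear-additive discount blocks downward deviations and IR failures on good signals, while monotone margins absorb upward deviations; and the change in signal probabilities costs $\overline{q}\varepsilon_0$ per signal. You only reorder the decomposition (changing signal probabilities before beliefs rather than after), and your direct perturbation bound $\eta|q_k-q_j|$ shows that $\rho=\overline{q}\eta$ would already suffice.

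One small correction, which also applies to the paper's own wording ("refuse to purchase anything"): after discounting, a \texttt{null}-signal buyer may buy an item with negative margin, as low as $-K\rho$. So rather than lower-bounding that revenue by $0$, charge the \texttt{null} signal the same $K\rho$ per unit of probability as the good signals. The total discount loss is then still at most $K\rho$, and the stated bound is unchanged.
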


Similarly, for $\texttt{Term2}$, let $\bm{x}_{\widehat{F}}$ be an $\varepsilon_{OPT}$-suboptimal pair of menu $((q_k, p_k): k\in [K])$ and monotone partitional signaling scheme $(t_k: k\in [K])$ this time under $\widehat{F}$. Let $\bm{\tilde{z}}_{\tau}=\{k\in [K]: t_k-t_{k-1}< \tau\}$. Let $\bm{\tilde{x}}_{\widehat{F}}$ be the analogous modification where all signals in $\bm{\tilde{z}}_{\tau}$ are mapped to a null signal, and prices are linear-additively discounted by $\rho$ at all indices. Now, applying \Cref{lemma:samplecomplexityimpA}, with the roles of $F$ and $\widehat{F}$ reversed, we get 
    \begin{align*}
        \texttt{Term2}&:= Rev^*({\widehat{F}}) - {Rev^*}(\bm{x}_{\widehat{F}};{\widehat{F}},{\widehat{F}}) + {Rev}(\bm{x}_{\widehat{F}};{\widehat{F}},{\widehat{F}}) - Rev^*(\bm{\tilde{\bm{x}}}_{\widehat{F}}; F,F) \\
        &\leq \varepsilon_{\textrm{OPT}}+Rev(\bm{x}_{\widehat{F}};{\widehat{F}},{\widehat{F}}) - {Rev}(\tilde{\bm{x}}_F;\hat{F},\hat{F}) \\
        &\leq \varepsilon_{\textrm{OPT}} + \tilde{\mathcal{O}}(\overline{q}K/n^{1/3}).
    \end{align*}

\subsection{Proof of Lemma~\ref{lemma:samplecomplexityimpA}}
\label{proof:lemmaimpA}
Since $\bm{x}_F$ encodes an incentive-compatible menu $((q_k, p_k):k\in [K])$ under the distribution $F$, we know that for all $k\in [K]$ that \begin{align*}
q_k\cdot \mathbb{E}_{\hat{F}}[v|k] -p_k &\geq q_k\cdot \mathbb{E}_{F}[v|k] - p_k - \overline{q}\left(\frac{\varepsilon_1}{{\Pr_{\pi\circ F}({k})}} + \frac{\varepsilon_2}{\sqrt{\Pr_{\pi\circ F}(k)}}\right) \\
&\geq \max\{0, \max_j \{q_j\cdot \mathbb{E}_{F}[v|k] - p_j\}\} - \overline{q}\left(\frac{\varepsilon_1}{{\Pr_{\pi\circ F}({k})}} + \frac{\varepsilon_2}{\sqrt{\Pr_{\pi\circ F}(k)}}\right)\\
&\geq \max\{0, \max_j \{q_j\cdot \mathbb{E}_{\widehat{F}}[v|k] - p_j\}\} - 2\overline{q}\left(\frac{\varepsilon_1}{{\Pr_{\pi\circ F}({k})}} + \frac{\varepsilon_2}{\sqrt{\Pr_{\pi\circ F}(k)}}\right)
\end{align*}
Now by implementing the modified prices $\tilde{p}_k = p_k - k\rho$, it is clear that participation constraints are upheld for $\widehat{F}$ for all (good) signals in not $\bm{z}_\tau$. We do not claim to uphold incentive compatibility under $\widehat{F}$, only that any buyer receiving signal $k$ does not purchase an item indexed by a lower signal, as long as $k\not\in \bm{z}_\tau$. To observe this, note for all $j<k$ that \begin{align*}
    q_k\cdot \mathbb{E}_{\hat{F}}[v|k] -\widetilde{p}_k &= q_k\cdot \mathbb{E}_{\hat{F}}[v|k] -p_k + k\rho\\
    &\geq q_j\cdot \mathbb{E}_{\hat{F}}[v|k] -p_j + k\rho -\rho\\
    &= q_j\cdot \mathbb{E}_{\hat{F}}[v|k] -\tilde{p}_j + (k-j)\rho -\rho\\
    &\geq q_j\cdot \mathbb{E}_{\hat{F}}[v|k] -\tilde{p}_j.
\end{align*}
Thus, each receipt of a good signal $k\not\in \bm{z}_\tau$ agrees to pay at least $p_k-K\rho$ and generates a margin at least $p_k-c(q_k)-K\rho$ for the buyer under the distribution $\widehat{F}$. Thus, we have $${Rev}(\bm{x}_F;F,F) - {Rev}(\tilde{\bm{x}}_{F},\widehat{F},F) \leq K(\rho + \overline{q}\tau),$$ 
even assuming that in the worst case those assigned the \texttt{null} signal refuse to purchase anything. Finally, using the observations that the marginal probability of signals is similar under $F$ and $\widehat{F}$:
$${Rev}(\tilde{\bm{x}}_{F};\widehat{F},F) - {Rev}(\tilde{\bm{x}}_{F},\widehat{F},\widehat{F}) \leq \overline{q}K\varepsilon_0.$$
Combining the last two inequalities concludes the proof.

\subsection{Proof of Lemma~\ref{lem:conc}}
\label{appsec:conc}
Our plan is to appeal to variance-sensitive versions of uniform convergence bounds for function classes with bounded pseudo-dimension. We start with a result of \citet{Maurer2009EmpiricalBB}, which is stated in terms of $L_\infty$ covering number on $n$-sample distributions, a term we will quickly rid ourselves of. (See \citet{van2023weak} for the definition.)

\begin{theorem}[\cite{Maurer2009EmpiricalBB}]
    Let $X$ be a random variable with values in a set $\mathcal{X}$ with distribution $F$ and let $\mathcal{F}$ be a class of hypotheses $f : \mathcal{X} \rightarrow [0, 1]$. Fix $\delta \in (0, 1)$, $n \ge 16$ and set
$$
\mathcal{M}(n) = 10\mathcal{N}_\infty(1/n, \mathcal{F}, 2n).
$$
Then with probability at least $1 - \delta$ in the random vector $\mathbf{X} = (X_1, \dots, X_n) \sim \mu^n$ we have
$$
\mathbb{E}_{x\sim F}[f(x)] - \mathbb{E}_{x\sim \widehat{F}}[f(x)] \leq \sqrt{\frac{18V_n(f, \mathbf{X}) \ln(\mathcal{M}(n)/\delta)}{n}} + \frac{15 \ln(\mathcal{M}(n)/\delta)}{n-1}, \forall f \in \mathcal{F},
$$
where $V_n(f,\mathbf{X}) = \sum_{i<j}(f(x_i)-f(x_j))^2/n(n-1)$ is the sample variance, and $\widehat{F}$ is the empirical distribution on the dataset $\mathbf{X}$.
\end{theorem}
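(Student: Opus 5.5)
The plan is to prove the statement the way uniform bounds are usually obtained from single-function bounds. First I would establish a fixed-function empirical Bernstein inequality. Then I would lift it to the class $\mathcal{F}$ by symmetrization with a ghost sample, followed by a union bound over an $L_\infty$ cover of $\mathcal{F}$ restricted to the $2n$ points. The constants $18$, $15$ and the factor $10$ in $\mathcal{M}(n)$ should come out of tracking the probabilities lost in each step.

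\emph{Step 1 (single function).} Fix $f$ with values in $[0,1]$ and let $\sigma^2=\mathrm{Var}_F(f)$. Bernstein's inequality gives, with probability $1-\delta$,
\begin{equation*}
\mathbb{E}_F f-\mathbb{E}_{\widehat F} f\le\sqrt{2\sigma^2\ln(1/\delta)/n}+\ln(1/\delta)/(3n).
\end{equation*}
To replace $\sigma$ by the sample standard deviation $\sqrt{V_n}$, I would use that $nV_n$ is a self-bounding function of the sample. This gives a lower-tail concentration
\begin{equation*}
\sigma\le\sqrt{V_n}+\sqrt{2\ln(1/\delta)/(n-1)}.
\end{equation*}
A union bound over the two events yields the empirical Bernstein bound with constants of the stated form, with $\ln(2/\delta)$ and a $1/(n-1)$ lower-order term.

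\emph{Step 2 (symmetrization).} Draw an independent ghost sample $\mathbf{X}'$ of size $n$. Suppose some $f$ violates the inequality on $\mathbf{X}$. Conditionally on $\mathbf{X}$, I would use Chebyshev on $\mathbb{E}_{\widehat F'}f$ together with Step 1 applied to the fixed $f$ on $\mathbf{X}'$, so that $V_n(f,\mathbf{X}')$ does not exceed the true variance by much. With probability at least a constant, say $1/2$, this makes $\mathbb{E}_{\widehat F'}f-\mathbb{E}_{\widehat F}f$ exceed a slightly smaller threshold expressed through the empirical variances of both samples. This step is where $n\ge16$ is needed. Hence the bad-event probability is at most a constant times the probability of a two-sample deviation event that depends only on the $2n$ points $(\mathbf{X},\mathbf{X}')$.

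\emph{Step 3 (cover and swap).} Condition on the multiset of the $2n$ points and randomize by independent swaps $x_i\leftrightarrow x_i'$. Take a $1/n$-cover of $\mathcal{F}|_{(\mathbf{X},\mathbf{X}')}$ in $L_\infty$ of size $\mathcal{N}_\infty(1/n,\mathcal{F},2n)$. Replacing $f$ by its cover element perturbs each empirical mean by at most $1/n$ and $\sqrt{V_n}$ by at most $O(1/n)$, and both perturbations are absorbed into the $\ln(\cdot)/(n-1)$ term. For each cover element, the swap-randomized difference is a sum of independent bounded symmetric terms. A Bernstein/Hoeffding-type bound in terms of $\sum_i(f(x_i)-f(x_i'))^2$, which I would control by the two sample variances, gives tail probability $\delta/\mathcal{M}(n)$ per element. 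A union bound over the cover, combined with the constant factor from Step 2, produces $\mathcal{M}(n)=10\mathcal{N}_\infty(1/n,\mathcal{F},2n)$.

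I expect the main obstacle to be Step 2. Because the threshold is itself random, through $V_n(f,\mathbf{X})$, the standard ghost-sample argument must simultaneously control the ghost mean and the ghost variance. The inflated constants ($18$ instead of $2$, and $15$) absorb this slack. My first attempt would be to bound $\sqrt{V_n(f,\mathbf{X}')}$ from above by $\sigma+O(\sqrt{\ln/n})$ using the upper tail from Step 1 with a fixed constant confidence. I would then use $\sigma\le\sqrt{V_n(f,\mathbf{X})}+\ldots$, valid on the complement of the single-function failure, to express everything in terms of the observed variances.
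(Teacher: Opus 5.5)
The paper does not prove this statement. It is quoted from \citet{Maurer2009EmpiricalBB} and used only as a black box in the proof of \Cref{lem:conc}, so there is no in-paper argument to compare with.

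On its own terms, your outline has a genuine gap in Step~2. The violating $f$ is chosen after seeing $\mathbf X$, so you cannot apply the Step~1 bound $\sigma_f\le\sqrt{V_n(f,\mathbf X)}+\sqrt{2\ln(1/\delta)/(n-1)}$ to it on $\mathbf X$. That bound holds for each fixed $f$, not simultaneously over $\mathcal F$, and a uniform version is as hard as the theorem itself. Without it, the Chebyshev loss $2\sigma_f/\sqrt n$ in the ghost mean is not dominated by the threshold $\sqrt{18V_n(f,\mathbf X)L/n}+15L/(n-1)$, where $L=\ln(\mathcal M(n)/\delta)$.

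Here is a concrete failure. A violator can be constant on $\mathbf X$, so $V_n(f,\mathbf X)=0$, while being $\{0,1\}$-valued under $F$ with $\sigma_f\approx 1/2$ and $\mathbb E_F f-\mathbb E_{\widehat F} f\approx 16L/n$. Then $\mathbb E_{\widehat F'}f-\mathbb E_{\widehat F}f$ is $16L/n$ plus noise of order $1/\sqrt n$, and $V_n(f,\mathbf X')\approx 1/4$. For $n\gg L$, this difference exceeds a threshold of the form $a\sqrt{(V_n(f,\mathbf X)+V_n(f,\mathbf X'))L/n}$ only with probability about $e^{-a^2L/2}$, not with constant probability. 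So the pointwise claim in Step~2 fails, and with it the reduction to a two-sample mean-difference event.

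What such a configuration does exhibit is a large gap between the sample variances of the two halves, and random swaps destroy that gap with overwhelming probability. A correct proof has to exploit this, in one of two ways:
\begin{enumerate}
\item First prove a uniform bound $\sigma_f\le\sqrt{V_n(f,\mathbf X)}+O(\sqrt{L/n})$ over $\mathcal F$, by its own ghost-sample/swap/cover argument, and intersect with that event. After that, your Chebyshev step goes through.
\item Build a variance-comparison event directly into the two-sample event.
\end{enumerate}
Either way, the argument needs, for each cover element on the fixed $2n$ points, concentration of the standard deviation of the swapped half at scale $\sqrt{\ln(1/\delta)/n}$ under the random swaps.

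Your Hoeffding/Bernstein bound on the swapped mean difference does not supply this. Bounded differences is too weak as well: one swap can move $\sqrt{V_n}$ by $1/\sqrt{n-1}$, which only gives $O(1)$ fluctuations. You need a dimension-free tool here. One option is the self-bounding property of $nV_n$, which you already use in Step~1. Another is Talagrand's inequality for the convex, $1/\sqrt{n-1}$-Lipschitz map $\mathbf a\mapsto\sqrt{V_n(\mathbf a)}$. The extra events this introduces each carry their own symmetrization factor, and that cost must be absorbed into the constant in front of $\mathcal N_\infty$.
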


For real-valued function classes with bounded pseudo-dimension, the above statement can be stated in terms of pseudo-dimension.

\begin{definition}
   A function class $\mathcal{F}:\mathcal{X}\to \mathbb{R}$ has pseudo-dimension $d$ if $d$ is the maximum number for which there is $d$-sized set $\mathcal{S}=\{x_1,\ldots,x_d\}\subseteq \mathcal{X}$ and real numbers $r_1,\ldots,r_d$ such that for each $b\in \{0,1\}^d$ there is a function $f_b\in \mathcal{F}$ with $\text{sign}(f_b(x_i)-r_i)=b_i\ \forall i\in[d]$.
\end{definition}

\begin{theorem}[\cite{Anthony_Bartlett_1999}]
   Let $\mathcal{F}$ be a set of real functions from a domain $\mathcal{X}$ to the bounded interval $[0, B]$. Let $\epsilon > 0$ and suppose that the pseudo-dimension of $\mathcal{F}$ is $d$. Then
\begin{align*}
    \mathcal{N}_\infty (\epsilon, F, m) \leq \sum_{i=1}^{d} \binom{m}{i} \left(\frac{B}{\epsilon}\right)^i,
\end{align*}
which is less than $(emB/(\epsilon d))^d$ for $m \ge d$.
\end{theorem}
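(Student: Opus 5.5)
The plan is to reduce the covering-number bound to a combinatorial counting statement, a Sauer-type lemma for finite-valued classes, through quantization. The route is the standard one from \citet{Anthony_Bartlett_1999}. Fix $m$ points $x_1,\ldots,x_m\in\mathcal{X}$. For $f\in\mathcal{F}$ define the quantized function $Q_\epsilon f(x)=\epsilon\lfloor f(x)/\epsilon\rfloor$, which takes values in $\{0,\epsilon,2\epsilon,\ldots,N\epsilon\}$ with $N=\lfloor B/\epsilon\rfloor$. Since $|Q_\epsilon f(x)-f(x)|<\epsilon$ pointwise, the distinct restrictions of $Q_\epsilon\mathcal{F}$ to $(x_1,\ldots,x_m)$ form an $\epsilon$-cover of $\mathcal{F}|_{x_{1:m}}$ in the $L_\infty$ metric. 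It therefore suffices to bound $|Q_\epsilon\mathcal{F}|_{x_{1:m}}|$.

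Next I will check that quantization does not increase the pseudo-dimension. The map $u\mapsto\epsilon\lfloor u/\epsilon\rfloor$ is nondecreasing. Suppose $Q_\epsilon\mathcal{F}$ pseudo-shatters $x_1,\ldots,x_d$ with witnesses $r_i$. Each $r_i$ can be replaced by a level threshold $\ell_i\epsilon$, and then $Q_\epsilon f(x_i)\geq \ell_i\epsilon$ holds if and only if $f(x_i)\geq \ell_i\epsilon$. Hence $\mathcal{F}$ pseudo-shatters the same set, and $\mathrm{Pdim}(Q_\epsilon\mathcal{F})\leq d$. After rescaling, the problem becomes a counting question about a class $H$ of functions from $m$ points to $\{0,1,\ldots,N\}$ with pseudo-dimension at most $d$.

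The core step, and the one I expect to be the main obstacle, is the lemma $|H|\leq\Phi(m,d):=\sum_{i=0}^{d}\binom{m}{i}N^i$. I will prove it by induction on $m$ in the style of Sauer's lemma. Let $H'$ be the set of restrictions of $H$ to the first $m-1$ points. For $g\in H'$, let $k(g)$ be the number of values that extensions of $g$ in $H$ take at $x_m$. Then $|H|=|H'|+\sum_g(k(g)-1)$. For each threshold $r\in\{1,\ldots,N\}$, let $H_r\subseteq H'$ be the set of those $g$ having one extension with value $\geq r$ at $x_m$ and another with value $<r$. An element $g$ with $k(g)$ extension values lies in at least $k(g)-1$ of the sets $H_r$, so $\sum_g(k(g)-1)\leq\sum_{r=1}^N|H_r|$. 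Each $H_r$ has pseudo-dimension at most $d-1$: if $H_r$ pseudo-shatters a set $S$, then $H$ pseudo-shatters $S\cup\{x_m\}$ with witness $r-\tfrac12$ at $x_m$. Applying the inductive hypothesis gives
\begin{align*}
|H|\leq\Phi(m-1,d)+N\,\Phi(m-1,d-1)=\Phi(m,d),
\end{align*}
where the last equality follows from Pascal's identity $\binom{m}{i}=\binom{m-1}{i}+\binom{m-1}{i-1}$. The base cases $m=0$ and $d=0$ are trivial, since a class with pseudo-dimension $0$ restricted to the points is a single function. The care needed here is in the bookkeeping of the thresholds, so that the count picks up the factor $N^i$ rather than $(N+1)^i$.

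Finally, I substitute $N\leq B/\epsilon$ to obtain $\mathcal{N}_\infty(\epsilon,\mathcal{F},m)\leq\sum_{i\leq d}\binom{m}{i}(B/\epsilon)^i$. The index range of the sum ($i=0$ versus $i=1$) only affects an additive constant and is absorbed in the stated form. For the closed form with $m\geq d$, I will use the bound $(B/\epsilon)^i\leq(B/\epsilon)^d$, valid when $B/\epsilon\geq1$, together with the standard estimate $\sum_{i\leq d}\binom{m}{i}\leq(em/d)^d$. This yields the bound $(emB/(\epsilon d))^d$.
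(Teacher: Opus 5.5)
The paper gives no proof of this statement; it is quoted from Anthony--Bartlett. Your route is the standard one: quantize, note that a nondecreasing quantizer cannot raise the pseudo-dimension, then apply a Sauer-type count for $\{0,\dots,N\}$-valued classes. Its core is sound:
\begin{itemize}
\item a $g$ whose extensions take values from $v_{\min}$ to $v_{\max}$ at $x_m$ lies in $H_r$ for each of the $v_{\max}-v_{\min}\ge k(g)-1$ integers $r\in(v_{\min},v_{\max}]$;
\item the witness $r-\tfrac12$ correctly lifts a set pseudo-shattered by $H_r$ to one pseudo-shattered by $H$;
\item Pascal's identity closes the recursion.
\end{itemize}
What you have actually proved is
\begin{align*}
\mathcal{N}_\infty(\epsilon,\mathcal{F},m)\le\sum_{i=0}^{d}\binom{m}{i}\lfloor B/\epsilon\rfloor^{i}.
\end{align*}
From it you get the closed form $(emB/(\epsilon d))^d$ for $m\ge d\ge 1$ and $\epsilon\le B$, which is the only part the paper uses. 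Your restriction $B/\epsilon\ge 1$ is necessary, not just convenient. For $\epsilon>B$ the covering number is $1$, while both displayed bounds can fall below $1$ (take $m=d=1$ and $B/\epsilon$ small). So the statement implicitly assumes $\epsilon\le B$.

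The gap is your claim that the $i=0$ term ``is absorbed in the stated form.'' A bound of $1+X$ does not yield a bound of $X$. When $B/\epsilon$ is an integer, your bound is exactly $1+\sum_{i=1}^{d}\binom{m}{i}(B/\epsilon)^i$, one more than the display. For $d=0$ the display equals $0$ and is false outright.

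For $d\ge1$ and $\epsilon\le B$ you can recover the $i=1$ form, but only by spending the factor-$2$ slack your floor quantizer wastes. Send each value to the midpoint of its cell in a grid of mesh $\eta$ slightly below $2\epsilon$. This map is still nondecreasing and moves each value by at most $\eta/2<\epsilon$. Once $\eta$ is close enough to $2\epsilon$, it has $N'=\lfloor B/\eta\rfloor\le B/(2\epsilon)$ gaps. Your lemma then gives $1+\sum_{i=1}^{d}\binom{m}{i}(N')^{i}$. The $i=1$ term of $\sum_{i=1}^{d}\binom{m}{i}\bigl((B/\epsilon)^i-(N')^{i}\bigr)$ equals $m(B/\epsilon-N')\ge 1$: it is $mB/\epsilon\ge1$ if $N'=0$, and at least $B/(2\epsilon)\ge N'\ge 1$ otherwise. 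That term is what absorbs the extra $1$.

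Alternatively, simply state the $i=0$ version. That is what your argument supports, and it suffices for the paper's application.
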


Now, we perform three steps at once: one, we compose the last two results; two, we apply the results to the class $\mathcal{F}'=\{1-f:f\in \mathcal{F}\}$ in addition to the original class; three, we note for any $f:\mathcal{X}\to[0,1]$ that $V_n(f, \mathbf{X}) = \frac{n}{n-1}(\mathbb{E}_{x\sim \widehat{F}}[f(x)^2]-\mathbb{E}_{x\sim \widehat{F}}[f(x)]^2)\leq 2\mathbb{E}_{x\sim \widehat{F}}[f(x)]$. As a result, we get that there is a universal constant $C$ such that for any function class $\mathcal{F}:\mathcal{X}\to [0,1]$ with pseudo-dimension $d$, with probability $1-\delta$, we have for all $f\in \mathcal{F}$ that 
\begin{align*}
    \bigl|\mathbb{E}_{F}[f(x)]-\mathbb{E}_{\widehat{F}}[f(x)]\bigr| \leq C\left(\sqrt{\frac{\,\mathbb{E}_{\widehat{F}}[f(x)] d \log (n/\delta)}{n}} + \frac{d\log (n/\delta)}{n}\right).
\end{align*}

 Let $\Delta:=|\mathbb{E}_{\hat{F}}[f(x)-\mathbb{E}_F[f(x)]|$ and $b:=d\log (n/\delta)/n$. The above claim can be restated as $\Delta\leq C\sqrt{(\mathbb{E}_{F}[f(x)]+\Delta)\,b}+C'b$ for some constants $C,C'$.  Using $\sqrt{a+b}\leq\sqrt{a}+\sqrt{b}$ and then applying AM-GM as $C\sqrt{\Delta b}\leq\tfrac{1}{2}\Delta+\tfrac{C^2}{2}b$ gives us $\Delta\leq 2C\sqrt{\mathbb{E}_F[f(x)]\cdot b}+(C^2+2C')b$. Thus, whenever the above inequality holds, we also have for a different universal constant $C'$ that
\begin{align*}
    \bigl|\mathbb{E}_{F}[f(x)]-\mathbb{E}_{\widehat{F}}[f(x)]\bigr| \leq C'\left(\sqrt{\frac{\,\mathbb{E}_{F}[f(x)] d \log (n/\delta)}{n}} + \frac{d\log (n/\delta)}{n}\right).
\end{align*}

Let $\mathcal{G}$ be the class of function on $[0,1]$ of the form $  g_{u,v, \xi_u,\xi_v}(x):= \xi_u\mathbf{1}\{x=u\}+\mathbf{1}\{u<x<v\}+ \xi_v\mathbf{1}\{x=v\}$, where $0\leq \xi_u,\xi_v\leq 1,\ u\leq v\in[0,1]$. Let $\mathcal{F}=\{xg(x): g\in \mathcal{G}\}$. We wish to bound the pseudo-dimension of these classes. Recall that for any class $\mathcal{F}$, the pseudo-dimension is precisely the VC dimension of $\{(x,t) \mapsto \mathbf{1}\{f(x)\geq t\}: f\in \mathcal{F}\}$. We use the following result from \citet{goldberg1993bounding} on VC dimension of classes involving real numbers to conclude that the pseudo-dimension of both classes is at most a constant, by observing that membership for the subgraph sets of such function classes can be computed by a constant sized algebraic circuit that permits usual arithmetric operations and (in)equalities.

\begin{theorem}[\cite{goldberg1993bounding}]
Let $\{\mathcal{C}_{k,n} : k, n \in \mathbb{Z}_{>0}\}$ be a family of concept classes where concepts in $\mathcal{C}_{k,n}$ and inputs are represented by $k$ and $n$ real values, respectively. Further, let the test for membership of an instance $x$ in a concept $C$ in $\mathcal{C}_{k,n}$ consist of an algorithm $\mathcal{A}_{k,n}$ taking $k + n$ real inputs representing $C$ and $x$, whose runtime is $t=t(k, n)$, and which returns the truth value $a\in C$. The algorithm $\mathcal{A}_{k,n}$ is allowed to perform
conditional jumps (conditioned on equality and inequality of real values) and execute the standard arithmetic operations on real numbers $(+,-,\times,/)$ in constant time. Then the VC dimension of $\mathcal{C}_{k,n}$ is at most $O(kt)$.
\end{theorem}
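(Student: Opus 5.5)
The plan is to reduce the VC bound to counting sign patterns of polynomials in the $k$ concept parameters. The tool for that count is Warren's theorem: $s$ real polynomials of degree at most $D$ in $k$ variables realize at most $(8eDs/k)^k$ distinct sign vectors (in $\{<0,=0,>0\}^s$, or in $\{<0,\ge 0\}^s$), for $s\ge k$.

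\textbf{Step 1: the algorithm as a computation tree.} Fix an instance $x\in\mathbb{R}^n$ and regard the concept parameters $w\in\mathbb{R}^k$ as indeterminates. Run $\mathcal{A}_{k,n}$ symbolically.
\begin{itemize}
\item Every intermediate value is a rational function of $w$.
\item Each arithmetic step at most doubles the degree of numerator and denominator, since $(a/b)\cdot(c/d)$ and $(a/b)\pm(c/d)=(ad\pm bc)/(bd)$ do so.
\item Hence after $t$ steps every quantity is $P/Q$ with $\deg P,\deg Q\le 2^t$.
\end{itemize}
A conditional jump compares two such values. Its outcome is the sign of a single polynomial $PQ'-P'Q$ times the sign of $QQ'$, all of degree $O(2^t)$. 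The computation is therefore a binary tree of depth at most $t$. Each root-to-leaf path is selected by the signs of at most $t$ polynomials in $w$, and there are at most $2^t$ branch nodes, so at most $2^{t+1}$ polynomials overall, each of degree $O(2^t)$.

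\textbf{Step 2: labelings are determined by sign patterns.} Suppose $x_1,\dots,x_m$ are shattered. Collect the polynomials from the trees of all $m$ instances, giving $s\le m\,2^{t+1}$ polynomials in $w$ of degree $D=O(2^t)$. For each $x_i$, the label assigned by concept $w$ is fully determined by the joint sign vector of this family. So the number of distinct labelings is at most the number of sign vectors, which Warren's theorem bounds.

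\textbf{Step 3: solve for $m$.} Shattering requires
\[
2^m\le\bigl(8e\cdot O(2^t)\cdot m2^{t+1}/k\bigr)^k,
\]
that is,
\[
m\le k\bigl(O(t)+\log_2(m/k)\bigr).
\]
The $\log(m/k)$ term is absorbed in the usual way: if $m\ge Ckt$ for a large constant $C$, the right side is at most $k\,O(t)+k\log_2(Ct)<m$, a contradiction. Hence $m=O(kt)$. Division by a polynomial that vanishes is treated as a separate branch on the sign of the denominator, so it adds only constant overhead.

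\textbf{Main obstacle.} The delicate point is Steps 1--2: making the degree bookkeeping rigorous, including zero denominators and equality tests, which need Warren's bound in the version that counts zero patterns (or a perturbation argument). The key is the observation that exponential degree $2^t$ enters only logarithmically. This is what yields a bound linear in $t$ rather than exponential. Once Warren's theorem is granted, the final inequality is routine.
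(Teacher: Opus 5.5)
Your proposal is correct and is essentially the original Goldberg--Jerrum argument (a computation tree of depth $t$, rational functions of degree at most $2^t$, and the perturbed Warren bound $(8eDs/k)^k$ on $\{<0,=0,>0\}$ sign vectors); the paper only cites this result and does not reprove it. One justification in Step~3 is wrong as written: for $m\ge Ckt$ the right side is at least, not at most, $k\,O(t)+k\log_2(Ct)$, so apply the inequality at $m=\lceil Ckt\rceil$ instead (any subset of a shattered set is shattered), or use that $m-k\log_2(m/k)$ is increasing for $m\ge k/\ln 2$.
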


Note that for any monotone partitional signaling scheme $\pi$, as detailed in \Cref{sec:conv}, $\Pr_{\pi\circ F}[k]$ and $\mathbb{E}_{\pi\circ F}[v\mathbf{1}\{\text{signal } k \text{ is observed}\}]$ can be realized as expectations of specific members in $\mathcal{G}$ and $\mathcal{F}$, respectively. Thus, we already have with high probability for a suitable universal constant $C''$ that
\begin{align*}
    &\max\left\lbrace\left|\Pr_{\pi\circ F}({k}) - \Pr_{\pi\circ \widehat{F}}({k}) \right|, \left|\mathbb{E}_{\pi\circ F}[v\mathbf{1}\{k \text{ is observed}\}] - \Pr_{\pi\circ \widehat{F}}[v\mathbf{1}\{k \text{ is observed}\}] \right|\right\rbrace\\
    &\leq C''\left(\sqrt{\min\{\Pr_{\pi\circ {F}}(k),\Pr_{\pi\circ {\widehat{F}}}({k})\}\frac{\log (n /\delta)}{n}} + \frac{\log (n /\delta)}{n}\right).
\end{align*}

To conclude the statement concerning the conditional means, we observe the following elementary inequality for positive reals:
$\left|\frac{A'}{B'}-\frac{A}{B}\right| \leq \frac{|A'-A|}{B} + \frac{A'}{B'}\frac{|B'-B|}{B}$
and substitute the appropriate quantities from the last inequality, while noting that the condition mean, i.e., $A'/B'$, can at most be one.

\subsection{Ensuring Monotonic Margins}\label{sec:rep}
Let us consider that we are given a monotone partitional scheme $(t_k:k\in [K])$ and an associated incentive compatible direct menu $((q_k, p_k): k \in [K])$ for a value distribution $\widehat{F}$, which might not be revenue maximizing pair. A key condition used in previous proofs are that the seller's margins $p_k-c(q_k)$ are non-decreasing in $k$. This is true by default for optimal incentive-compatible direct menus with respect to $\widehat{F}$; our approach here will give an alternative proof of this. However, if we use the approximately optimal menu from \Cref{sec:fptas}, this may not be true. To placate this worry, we now provide an algorithm that given an incentive-compatible menu converts it into another incentive compatible menu where margins are non-decreasing in the index, while weakly increasing the revenue. Thus, this condition can always be ensured algorithmically.

The modification works as follows: first, we calculate the margins $\pi_k = p_k-c(q_k)$ for all signals. Next, we only retain menu items in the set $S=\{k: \pi_k\geq \max_{j<k} \pi_j\}$ and delete the rest. This results in an indirect menu. So, we calculate what a buyer observing signal $k$ purchases in this smaller menu (with respect to the prior $\widehat{F}$) and designate that as their quality-item pair in a new direct menu.

\begin{proposition}
    Given any incentive-compatible direct menu, the above procedure produces another incentive-compatible direct menu with non-decreasing margins. In addition, during this transformation, the revenue weakly increases.
\end{proposition}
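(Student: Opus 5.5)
The proof splits into two parts: incentive compatibility with monotone margins, and the revenue comparison. Throughout, posteriors $\bar v_k$ are fixed by the signaling scheme, which is unchanged; only the menu changes. Since the original direct menu is incentive compatible, the standard single-crossing structure of the $v\cdot q - p$ utilities holds. Order the signals so that $\bar v_1\le \cdots\le \bar v_K$, as a monotone partitional scheme does. Then the qualities $q_k$ and prices $p_k$ are non-decreasing in $k$.

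\textbf{Step 1: IC and non-decreasing margins.} Let $S=\{k:\pi_k\ge \max_{j<k}\pi_j\}$ be the set of running-maximum (record) indices. On $S$ the margins $\pi_k$ are non-decreasing by construction. A buyer with signal $k$ now faces the sub-menu $\{(q_j,p_j):j\in S\}$ together with the outside option. With ties broken in favor of the seller, let $\sigma(k)\in S\cup\{0\}$ be their choice. The new direct menu assigns $(q_{\sigma(k)},p_{\sigma(k)})$ to signal $k$. This is IC and IR with respect to the same posteriors, because each buyer is best-responding within a menu whose items are exactly the assigned items, plus the outside option. By single crossing, $\sigma$ is monotone in $k$, so the new margins $\pi_{\sigma(k)}$ are non-decreasing in $k$. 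This needs the record margins to be non-decreasing along $S$, with $\pi_0=0$ below them.

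\textbf{Step 2: revenue weakly increases.} It suffices to show $\pi_{\sigma(k)}\ge \pi_k$ for every $k$. If $k\in S$, I claim $\sigma(k)$ is at least as good for the seller as $k$. The buyer with signal $k$ had $k$ as a best response in the full menu, and $k$ is still available, so $\sigma(k)$ is a tie with $k$. Seller-favorable tie-breaking then gives $\pi_{\sigma(k)}\ge\pi_k$. Some care is needed for the case where the original menu's tie-breaking was not seller-favorable among other items; I would break ties within $S$ toward the larger margin.

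If $k\notin S$, let $j^*=\arg\max_{j<k}\pi_j$, taking the largest such index. Then $j^*\in S$ and $\pi_{j^*}>\pi_k$. The goal is to show the buyer with signal $k$ picks some item of $S$ with margin at least $\pi_{j^*}$, or at least $\pi_k$.

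\textbf{Main obstacle.} The difficulty is that the buyer at $k$ may drop to an item with index below $j^*$ whose margin is smaller than $\pi_k$. I would use single crossing and the IC structure of the original menu to rule this out. Because $\bar v_k\ge \bar v_{j^*}$, and $j^*$ is weakly preferred by type $j^*$ to every lower item, type $k$ also weakly prefers $j^*$ to every lower item, since the differences $v(q_{j^*}-q_i)-(p_{j^*}-p_i)$ are increasing in $v$ when $q_{j^*}\ge q_i$. So $\sigma(k)\ge j^*$ in index, or $\sigma(k)$ ties with $j^*$. Any $i\in S$ with $i\ge j^*$ has $\pi_i\ge\pi_{j^*}>\pi_k$ because records are non-decreasing. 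Combined with seller-favorable ties, this gives $\pi_{\sigma(k)}\ge\pi_k$.

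IR is also preserved: $j^*$ gives type $j^*$ nonnegative utility, hence also type $k$, so nobody leaves. Summing $\pi_{\sigma(k)}\ge\pi_k$ against the unchanged signal probabilities gives the weak revenue increase. Finally, the optimal menu is a fixed point of this procedure, since no deletion can strictly improve it. This yields the claimed alternative proof that optimal menus already have non-decreasing margins.
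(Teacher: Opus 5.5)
Your proof is correct and follows the paper's argument: you get IC by relabeling each signal to its best response within the record set $S$, monotone margins by single crossing, and the revenue claim by comparing margins signal by signal (your $j^*$ is exactly the paper's $k^-$). For $k\notin S$, your one-sided argument — type $k$ never strictly prefers anything below $j^*$, and every retained item at or above $j^*$ has margin at least $\pi_{j^*}$ — is slightly leaner than the paper's claim that $\sigma(k)\in\{k^-,k^+\}$. It also handles seller-favorable ties more explicitly.
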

\begin{proof}
    The incentive compatibility holds because of the relabeling in the last step. Notice that the item corresponding to smallest $k$ is retained by definition, and hence participation for all (signal) types is upheld. Let $\sigma(k)$ be the item a buyer observing signal $k$ picks in this smaller menu $S$. It is well known incentive-compatibility for one-dimensional (signal) types ensures that $q_{\sigma(k)}$ is non-decreasing \citep{maskin1984monopoly}, and hence, by construction, $\pi_{\sigma(k)}$ follows the same pattern. As for preserving the revenue, this is immediate for any signal $k$ for which $k$ belongs to $S$, because a buyer receiving this signal will still pick the same item, which in turn has not been deleted. If $k$ is not in $S$, let $k^-$ be the largest index picked in $S$ that is smaller than $k$, and $k^+$ be the smallest index picked in $S$ that is larger than $k$. Since incentive-compatibility constraints only enforce on neighboring types even for suboptimal menus \citep{maskin1984monopoly}, $\sigma(k)$ is either $k^-$ or $k^+$. Because $k$ is not $S$, we know $\pi_{k^{-}}\geq \pi_k$, and similarly, $\pi_{k^{+}}\geq \pi_k$. So, without knowing the precise details of what signal $k$ picks, we know that the margin (and hence the net revenue) is weakly greater.
\end{proof}

\subsection{Proof of Theorem~\ref{thm:justone}}
We consider $\mathcal{X}_{\texttt{s}} = \{((I,\bm{w},\bm{\xi}),(p,q))$ with $I=2$ and thus $\bm{w} = \{0=w_0\leq w_1\leq w_2=1\}$\}, $\bm{\xi} = \{0=\xi_0\leq \xi_1\leq \xi_2=1\}$\}. Thus we can equivalently write $\mathcal{X}_{\texttt{s}}=\{ (w_1,\xi_1,(p,q))\}$. Here the information structure simply consists of two signals $s_{\texttt{H}}$ and $s_{\texttt{L}}$ where 
\begin{align*}
    \pi(s_{\texttt{L}}|v) = \begin{cases}
        1 \ & \text{if}\ v<w_1\\
        \xi_1 & \text{if}\ v=w_1
    \end{cases}, \quad \pi(s_{\texttt{H}}|v) = \begin{cases}
        1 \ & \text{if}\ v>w_1\\
        1-\xi_1 & \text{if}\ v=w_1
    \end{cases}
\end{align*}
 Let $\mu_F(w_1,\xi_1)$ be the posterior mean under signal $s_{\texttt{H}}$, which is explicitly given as 
 \begin{align*}
     \mu_F(w_1,\xi_1) = \frac{\E_F[(1-\xi_1)v\bm{1}\{v=w_1\}+v
     \bm{1}\{v>w_1\}]}{\E_F[(1-\xi_1)\bm{1}\{v=w_1\}+\bm{1}\{v>w_1\}]}.
 \end{align*} 
 The optimization problem can be written as  
\begin{align*}
   \texttt{L}_{F}^{(1)}:\ \max_{\substack{p\in \mathbb{R},q\in[\qmin,\qmax]\\w_1,\xi_1\in [0,1]}}  \underbrace{\E_F[(1-\xi_1)\bm{1}\{v=w_1\}+\bm{1}\{v>w_1\}]}_{\text{probability of realizing signal}\ s_{\texttt{H}}:\ 
\Pr_{F}(s_{\texttt{H}}|w_1,\xi_1)}\cdot\underbrace{ \left( p -c(q)\right)}_{\text{revenue}} \cdot \underbrace{\mathbb{I}\left\{\mu_F(w_1,\xi_1)\cdot q- p\geq 0\right\}}_{\text{participation of the buyer}}.
\end{align*}
Given $N$ samples $v_1,\ldots,v_N\overset{\mathrm{iid}}{\sim} F$, let $\hat{F}(v) = |\{v_i:v_i\leq v\}|/n=\sum_{i=1}^n\mathbb{I}\{v_i\leq v\}/n$ and let $(w,\xi,(p,q))$ and  $(\hat{w},\hat{\xi},(\hat{p},\hat{q}))$ be the solution of $\texttt{L}^{(1)}_{F}$ and $\texttt{L}^{(1)}_{\hat{F}}$ respectively.

Define 
\begin{align*}
    \mathcal{G} = \left\{g_{w,\xi}(v) = (1-\xi)\bm{1}\{v=w\}+\bm{1}\{v>w\}\ \forall \xi\in [0,1],\ w,v\in [0,1]\right\}
\end{align*}

We have with probability at least $1-\delta$, the following from the analysis of section~\ref{appsec:conc}
\begin{align*}
    \sup_{\xi\in[0,1],w\in[0,1]}|\E_{v\sim F}[g_{w,\xi}(v)] - \E_{v\sim \hat{F}}[g_{w,\xi}(v)]\leq \varepsilon
\end{align*}
for $\varepsilon=C \sqrt{\tfrac{1}{2n}\ln\left(\tfrac{1}{\delta}\right)}$ for some constant $C>0$. Using this, a simple application of triangle inequality gives for all $\xi\in[0,1]$ and $w\in[0,1]$ with probability at least $1-\delta$, the following
$$|\mu_F(w,\xi)-\mu_{\hat{F}}(w,\xi)|\leq \min\{\rho_{w,\xi},\hat{\rho}_{w,\xi}\},$$ where $\hat{\rho}_{w,\xi}=\varepsilon/\E_{v\sim \hat{F}}[g_{w,\xi}(v)]$ and $\rho_{w,\xi} = \varepsilon/\E_{v\sim F}[g_{w,\xi}(v)]$. \\
    Thus, $\mu_{\hat{F}}(\hat{w},\hat{\xi})\cdot \hat{q}-\hat{p}\geq 0$ implies $\mu_{F}(\hat{w},\hat{\xi})\cdot \hat{q}-\hat{p}'\geq0$ where $\hat{p}'=\hat{p}-\hat{\rho}_{\hat{w},\hat{\xi}} \qmax$. \\
    Similarly, $\mu_F(w,\xi)\cdot q-p\geq 0$  implies $\mu_{\hat{F}}(w,\xi)\cdot q-p'\geq 0$ for $p'=p-{\rho}_{w,\xi} \qmax$. \\
    This simply means that $(w,\xi,(p',q))$ satisfies the \texttt{IR} conditions under $\hat{F}$ and similarly $(\hat{w},\hat{\xi},(\hat{p}',\hat{q}))$ satisfies the \texttt{IR} conditions under $F$. Now, observe that
\begin{align*}
    \texttt{Rev}(\hat{w},\hat{\xi},(\hat{p}',\hat{q});F)& =  \E_{v\sim F}[g_{\hat{w},\hat{\xi}}(v)]\cdot(\hat{p}'-c(\hat{q})) \\
    &\geq \left(\E_{v\sim \hat{F}}[g_{\hat{w},\hat{\xi}}(v)]-\varepsilon\right)\cdot\left(\hat{p}-\rho_{\hat{w},\hat{\xi}}\qmax-c(\hat{q})\right)\\
    &\geq \left(\E_{v\sim \hat{F}}[g_{\hat{w},\hat{\xi}}(v)]\right)\cdot\left(\hat{p}-
    c(\hat{q})\right)-\E_{v\sim \hat{F}}[g_{\hat{w},\hat{\xi}}(v)]\cdot\rho_{\hat{w},\hat{\xi}}\qmax-\varepsilon \qmax\\
    &=  \texttt{Rev}(\hat{w},\hat{\xi},(\hat{p},\hat{q});\hat{F}) - 2\varepsilon \qmax
    \end{align*}
Using the feasibility of $(w,\xi,(p',q))$ for $\texttt{L}_{\hat{F}}$, we get
\begin{align*}
  \texttt{Rev}(\hat{w},\hat{\xi},(\hat{p}',\hat{q});F)&\geq   \texttt{Rev}(\hat{w},\hat{\xi},(\hat{p},\hat{q});\hat{F}) - 2\varepsilon \qmax
  \\& \geq  \texttt{Rev}(w,\xi,(p',q);\hat{F}) -2\varepsilon \qmax\\
  &= \E_{v\sim \hat{F}}[g_{w,\xi}(v)]\cdot(p'-c(q))-2\varepsilon \qmax\\
       &\geq  \left(
       \E_{v\sim F}[g_{w,\xi}(v)]-\varepsilon\right)\cdot\left(p-\rho_{w,\xi}\qmax-c(q)\right)-2\varepsilon \qmax \\
       &\geq  
       \E_{v\sim F}[g_{w,\xi}(v)]\cdot\left(p-c(q)\right)-\varepsilon \qmax-\E_{v\sim F}[g_{w,\xi}(v)]\cdot \rho_{w,\xi}\qmax-2\varepsilon \qmax \\
        &\geq \texttt{Rev}(w,\xi,(p,q);F) - 4\varepsilon \qmax
\end{align*}
 Thus with probability at least $1-\delta$ for $\varepsilon$ additive error we need $\mathcal{O}(\tfrac{\qmax^2}{\varepsilon^2}\log \tfrac{1}{\delta})$ samples.

\section{Sample Compelxity with Demand Queries}\label{sec:appdem}
\subsection{Proof of Theorem~\ref{thm:notdisc}}
We already know from \Cref{lemma:momentrecovery} that $m_j$ captures the $j^{th}$ moment exactly. Let $\tilde{F}_j=\sum_{i=0}^j \alpha_i L_i(x)$ be the degree-$j$ Legendre projection of $F$. Since $L_j$'s are a family of orthogonal polynomials under the usual measure, we know \begin{align*}
    \alpha_j &= \int_0^1 F(x) L_j(x)dx = \sum_{i=0}^j a_{ji} \int_0^1 F(x)x^{i} dx \\
    &= \sum_{i=0}^j a_{ji} \left(1-\int_0^1 \frac{x^{i+1}}{(i+1)} f(x) dx\right) = \sum_{i=0}^j a_{ji} \left(1-\frac{m_{i+1}}{i+1}\right),
\end{align*}
as claimed. For the de la Vallee Poussin mean, which are suffix averages of higher-degree Legendre projections, using Theorem 2.1 from \citet{mastroianni2008vallee}, with $\alpha=\beta=\delta=\gamma=\nu=0$ and $p=\infty$ substituted in their statement, for a constant $C$ that changes neither with $n$ nor $F$, we have $$
\max_{v\in [0,1]}|F(v) - \hat{F}_{\lceil n/2\rceil} (x)| \leq {\Delta} :={\frac{CB^r}{(n/2)^r}}.
$$

Now it follows that $\max_{v\in [0,1]} |F(v)-\widehat{F}(v)|\leq \Delta$, because forcing the CDF estimate to be non-negative and non-decreasing can only (weakly) decrease the $L_\infty$ distance. Note that in this section, both $F$ and $\widehat{F}$ are continuous, the former by assumption, and the latter because it is a polynomial. For continuous distributions, signals are simple partitions of the value space, as explained in \Cref{sec:conv}. Consider any generic signal $k$ such that $\pi(k|v) = \mathbf{1}\{v\in [v_0, v_1)\}$. Now, we have that
\begin{align*}\Pr_{\pi\circ F}[k] &= F(v_1)-F(v_0),\\ \mathbb{E}_F[v\mathbf{1}\{v\in [v_0, v_1)\}] &= \int_{v_0}^{v_1} v f(v) dv = v_1F(v_1) - v_0F(v_0) - \int_{v_0}^{v_1} F(v) dv.
\end{align*}
Thus, we can transfer $L_\infty$ approximation guarantees from $F$ to probability of occurrence of signals, and the corresponding posterior means. Concretely, we have \begin{align*}
    \left|\Pr_{\pi\circ F}({k}) - \Pr_{\pi\circ \widehat{F}}({k}) \right| &\leq \Delta,  \text{ and }  
     \left|\mathbb{E}_{F}[v|k] - \mathbb{E}_{\widehat{F}} [v|k] \right| \leq \frac{\Delta}{\max\{\Pr_{\pi\circ F}({k}), \Pr_{\pi\circ \widehat{F}}({k})\}}, 
\end{align*}
where the last display uses the inequality that
$\left|\frac{A'}{B'}-\frac{A}{B}\right| \leq \frac{|A'-A|}{B} + \frac{A'}{B'}\frac{|B'-B|}{B}$.

From here on, we essentially retrace the proof of \Cref{thm:eps3samplecomplexity}. We decompose the error as 
    \begin{align*}
          Rev^*(F) -   Rev(\tilde{x}_{\hat{F}};F) & =  \underbrace{Rev^*(F) - Rev^*(\hat{F})}_{\texttt{Term1}}  + \underbrace{Rev^*(\hat{F})- Rev(\tilde{x}_{\hat{F}};F,F)}_{\texttt{Term2}}.
    \end{align*}

To upper bound \texttt{Term 1}, let $\bm{x}_F$ be the optimal pair of menu $((q_k, p_k): k\in [K])$ and monotone partitional signaling scheme $(t_k: k\in [K])$ under $F$. Let $\bm{z}_{\tau}=\{k\in [K]: t_k-t_{k-1}< \tau\}$. Define $\tilde{\bm{x}}_F$ with signaling scheme $\tilde{\pi}$ modified so that all signals in $\bm{z}_\tau$ are mapped to a \texttt{null} signal,
and the prices discounted as $\tilde{p}_k = p_k-k\rho$. Then, we have that
    \begin{align*}
        \texttt{Term1}&:= Rev^*(F) - {Rev}(\tilde{x}_F;\hat{F},\hat{F}) + {Rev}(\tilde{x}_F;\hat{F},\hat{F}) - Rev^*(\hat{F}) \\
        &\leq Rev^*(F) - {Rev}(\tilde{x}_F;\hat{F},\hat{F})
    \end{align*}
    where we once again appeal to the fact that $Rev^*(\hat{F})$ is the optimal revenue for the empirical distribution $\widehat{F}$. Now, we use \Cref{lemma:samplecomplexityimpA}, which places an upper bound of ${\mathcal{O}}(\overline{q}K\sqrt{\Delta})$ on \texttt{Term 1}, by choosing $\rho=4\overline{q}\sqrt{\Delta}$ and $\tau=\sqrt{\Delta}$.

Similarly, for $\texttt{Term2}$, let $\bm{x}_{\widehat{F}}$ be the optimal pair of menu $((q_k, p_k): k\in [K])$ and monotone partitional signaling scheme $(t_k: k\in [K])$ this time under $\widehat{F}$. Let $\bm{\tilde{z}}_{\tau}=\{k\in [K]: t_k-t_{k-1}< \tau\}$. Let $\bm{\tilde{x}}_{\widehat{F}}$ be the analogous modification where all signals in $\bm{\tilde{z}}_{\tau}$ are mapped to a null signal, and prices are linear-additively discounted by $\rho$ at all indices. Now, applying \Cref{lemma:samplecomplexityimpA}, with the roles of $F$ and $\widehat{F}$ reversed, we get 
    \begin{align*}
        \texttt{Term2}&:= Rev^*({\widehat{F}}) - {Rev^*}(\bm{x}_{\widehat{F}};{\widehat{F}},{\widehat{F}}) + {Rev}(\bm{x}_{\widehat{F}};{\widehat{F}},{\widehat{F}}) - Rev^*(\bm{\tilde{\bm{x}}}_{\widehat{F}}; F,F) \\
        &\leq Rev(\bm{x}_{\widehat{F}};{\widehat{F}},{\widehat{F}}) - {Rev}(\tilde{\bm{x}}_F;\hat{F},\hat{F}) ={\mathcal{O}}(\overline{q}K\sqrt{\Delta}).
    \end{align*}
Now, we solve for $n$ that makes the net revenue loss $\varepsilon$.

\section{Regret under Joint Learning}\label{sec:appreg}
Define the following quantity,
\begin{align*}
    \texttt{GAP}(\bm{x},F,\mathcal{C}(F)) := \sup_{G\in \mathcal{C}(F)}\left(\underbrace{Rev(\bm{x}^\star_F;F,F)}_{\substack{\text{optimal revenue under the}\\\text{common prior of $F$}}}-\underbrace{Rev(\bm{x};G,F)}_{\substack{\text{revenue under $\bm{x}$}\\\text{when buyer holds prior $G$}}} \right). 
\end{align*}
For any CDF $F$ over $[0,\vmax]$ with $\vmax\leq 1$, let the solution of $\texttt{L}_{F}$ be denoted by $\bm{x}_F=((K,\{t_i\}_{i=0}^K),(p_i,q_i)_{i=1}^K)$ where $0=t_0<\ldots<t_K=1$ is the quantile partition of the motonone partitional information structure, and $(p_i,q_i)$ is the \textit{direct} menu item corresponding to the quantile interval $(t_{k-1},t_k]$. For $\tau\in [0,1]$ and $\rho\geq  0$, define the transformation $\tilde{\bm{x}}_F=\mathsf{Tr}(\bm{x}_F;\tau,\rho)$ where $\tilde{\bm{x}}_F$ consists of the following modified signaling scheme in the value-space for $\bm{z}_{\tau}=\{k\in [K]:t_k-t_{k-1}<\tau\}$ 
\begin{align*}
    \tilde{\pi}(s\mid v) = \begin{cases} \pi(s_k\mid v) & \text{if } s=s_k,\ k\notin\bm{z}_\tau,\ \forall v\in [0,\vmax]\\ \sum_{k\in\bm{z}_\tau}\pi(s_k\mid v) & \text{if } s=\texttt{null}, \ \forall v\in [0,\vmax]\\
    1 & \text{if } s=\texttt{null}, \ \forall v\in (\vmax,1]\end{cases}  
\end{align*}
where $(\pi(k\mid v):\forall v\in[0,\vmax], k\in [K])$ is the value-space description w.r.t $F$ from  \Cref{sec:conv} with a modified menu $(\tilde{p}_i,q_i)_{i=0}^K$ where $\tilde{p}_i=p_i-i\rho\ \forall i\in[K]$.
\begin{lemma} \label{lemma:impsamplecomplexity}
    Let $F$ be a CDF over $\mathcal{V}=[0,\vmax]$ with $\vmax\leq 1$ and let $\bm{x}_F=((K,\{t_i\}_{i=0}^K),(p_i,q_i)_{i=1}^K)$ be the solution of $\texttt{L}_{F}$. For each CDF $G$, define $\E_G[v|k]$ the posterior mean of signal $s_k$ (which corresponds to the quantile interval $(t_{k-1},t_k]$ of $F$) under $G$.  Let $\mathcal{C}(F)$ be the family of CDFs $G$ on $[0,1]$ such that  we have $\left|\E_F[v|k]-\E_G[v|k]\right|\leq \rho/\qmax \quad  \forall \ k\in[K]\backslash\bm{z}_{\tau}$ for $\bm{z}_{\tau}=\{k\in[K]:t_k-t_{k-1}<\tau\}$.
    Then for $\bm{\tilde{x}}_F = \mathsf{Tr}(\bm{x}_F;\tau,\rho)$, we have $ \texttt{GAP}(\tilde{x}_F,F,\mathcal{C}(F)) \leq K(\rho + \Bar{q}\tau)$. 
\label{lemma:samplecomplexityimp}
\end{lemma}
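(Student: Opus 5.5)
Fix an arbitrary $G\in\mathcal{C}(F)$. The plan is to bound $Rev(\bm{x}_F;F,F)-Rev(\tilde{\bm{x}}_F;G,F)$ by $K(\rho+\overline{q}\tau)$ uniformly in $G$, following the proof of \Cref{lemma:samplecomplexityimpA}. Two simplifications make this easier than that lemma. First, signals are still generated according to $F$, so the marginal probability of each signal is exactly $t_k-t_{k-1}$. Hence there is no $\varepsilon_0$ term. Second, the posterior-mean perturbation is already given directly by membership in $\mathcal{C}(F)$, namely $|\E_F[v|k]-\E_G[v|k]|\le\rho/\overline{q}$ for good $k$. Since $\bm{x}_F$ solves $\texttt{L}_F$ exactly, its margins $p_k-c(q_k)$ are non-decreasing, as noted in \Cref{sec:rep}. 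They are also nonnegative, or else the item could be dropped.

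The first step is to repair the buyer's behaviour on good signals. Fix $k\notin\bm{z}_\tau$. Under $F$, the incentive constraints give $q_k\E_F[v|k]-p_k\ge\max\{0,\max_j(q_j\E_F[v|k]-p_j)\}$. Switching the posterior mean from $F$ to $G$ moves each side by at most $\overline{q}\cdot\rho/\overline{q}=\rho$. Therefore
\[
q_k\E_G[v|k]-p_k\ \ge\ \max\Bigl\{0,\ \max_j\bigl(q_j\E_G[v|k]-p_j\bigr)\Bigr\}-2\rho .
\]
Now apply the discounts $\tilde p_k=p_k-k\rho$. Individual rationality holds because $k\rho\ge\rho$ for $k\ge1$. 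I expect to need the participation slack $\rho$ only once, not $2\rho$, if I argue directly from $F$-IR. For $j<k$, the chain in \Cref{proof:lemmaimpA} gives the following bound, which is nonnegative when $k-j\ge 1$ (up to the factor-of-two bookkeeping, see below):
\[
q_k\E_G[v|k]-\tilde p_k-\bigl(q_j\E_G[v|k]-\tilde p_j\bigr)\ \ge\ (k-j)\rho-2\rho .
\]
So a recipient of good signal $k$ buys some item $j\ge k$, breaking ties toward the seller. The margin obtained is then
\[
\tilde p_j-c(q_j)\ \ge\ p_j-c(q_j)-K\rho\ \ge\ p_k-c(q_k)-K\rho ,
\]
using monotone margins.

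The second step handles the \texttt{null} signal. In the worst case, the buyers who receive it purchase nothing and contribute zero revenue. These are the bad signals, plus values in $(\vmax,1]$, which have $F$-mass zero. Each bad signal has $F$-mass below $\tau$ and margin at most $\overline{q}$. So the total loss from bad signals is at most $K\overline{q}\tau$.

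Summing over signals weighted by their $F$-probabilities, which sum to at most one, the loss on good signals is at most $K\rho$. The loss on bad signals is at most $K\overline{q}\tau$. Taking the supremum over $G\in\mathcal{C}(F)$ then yields $\texttt{GAP}\le K(\rho+\overline{q}\tau)$.

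The main obstacle is the constant bookkeeping in the downward-deviation step. With a deviation of $\rho/\overline{q}$ in posterior means, comparing two items' utilities changes only by $(q_k-q_j)\rho/\overline{q}\le\rho$, not $2\rho$. This is because both utilities share the same posterior mean. I would use this sharper observation, $|(q_k-q_j)(\E_G-\E_F)|\le\rho$, so that a discount step of $\rho$ per index exactly suffices. The same observation handles participation, where the comparison is against $q_0=0$.
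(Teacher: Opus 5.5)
Your argument is essentially the paper's. You perturb the $F$-IR/IC constraints by the posterior-mean gap on good signals, block downward deviations with the linear discount $\tilde p_k=p_k-k\rho$, handle upward deviations with non-decreasing margins, and charge at most $\overline{q}\tau$ per bad signal. Your common-posterior-mean bound $|(q_k-q_j)(\E_G[v|k]-\E_F[v|k])|\le\rho$ is actually needed for the stated constant: the paper's displayed chain only yields slack $2\rho$, yet its next line uses $k\rho-\rho$, so the adjacent case $j=k-1$ is not justified there.

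One imprecision is shared by you and the paper. \texttt{null} recipients may strictly prefer an item whose discounted margin $p_j-c(q_j)-j\rho$ is negative, so ``buys nothing'' is not literally the worst case. This does not break the bound: by nonnegativity of the original margins, that discounted margin is at least $-K\rho$. Charging $K\rho$ per unit of all $F$-mass (total one) plus $\overline{q}$ per unit of bad mass therefore still yields $K(\rho+\overline{q}\tau)$.
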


\begin{proof}[Proof of Lemma~\ref{lemma:samplecomplexityimp}]
Since $(p_k,q_k)$ is a solution of $\texttt{L}_{F}$, it satisfies individual rationality (IR) under $F$, i.e. $E_F[v|k]\cdot q_k-p_k \geq 0$ for all $k\in [K]$.  For any $k\in[K]\backslash\bm{z}_{\tau}$, since $|\E_F[v|k]-\E_G[v|k]|\leq\rho/ \qmax$, we have under $G$ that \texttt{IR} for all $k\in[K]\backslash\bm{z}_{\tau}$ with the modified prices.

Furthermore, the incentive compatibility constraint under $F$ gives, for all $k,j\in[K]$:
\begin{align}
       \E_F[v|k]\cdot q_k-p_k  \geq   \E_F[v|k]\cdot q_j-p_j. \label{eq:lemma1ic}
\end{align}
Observe that for all $k\in[K]$ we have that 
\begin{align*}
q_k\cdot \mathbb{E}_{G}[v|k] -p_k &\geq q_k\cdot \mathbb{E}_{F}[v|k] - p_k -  \rho \\
&\geq \max\{0, \max_j \{q_j\cdot \mathbb{E}_{F}[v|k] - p_j\}\} -  \rho\\
&\geq \max\{0, \max_j \{q_j\cdot \mathbb{E}_{G}[v|k] - p_j\}\} - 2 \rho
\end{align*}

Now observe that for all $j<k$ that 
\begin{align*}
    q_k\cdot \mathbb{E}_{G}[v|k] -\tilde{p}_k &= q_k\cdot \mathbb{E}_{G}[v|k] -p_k + k\rho\\
    &\geq q_j\cdot \mathbb{E}_{G}[v|k] -p_j + k\rho -\rho\\
    &= q_j\cdot \mathbb{E}_{G}[v|k] -\tilde{p}_j + (k-j)\rho -\rho\\
    &\geq q_j\cdot \mathbb{E}_{G}[v|k] -\tilde{p}_j.
\end{align*}

Thus $\texttt{GAP}(\tilde{x}_F,F,C(F))\leq K(\rho + \Bar{q}\tau)$ using the same argument as in Lemma~\ref{proof:lemmaimpA}.
\end{proof}

\begin{definition}\label{def:G}
Let $\mathcal{G}$ denote the class of functions $g_{u,v,\xi_u,\xi_v}:[0,\vmax]\to[0,1]$ of the form
\[
    g_{u,v,\xi_u,\xi_v}(x) := \xi_u\mathbf{1}\{x=u\}+\mathbf{1}\{u<x<v\}+\xi_v\mathbf{1}\{x=v\},\quad u\leq v\in[0,\vmax],\quad \xi_u,\xi_v\in[0,1].
\]
\end{definition}

\begin{lemma}\label{lemma:posteriormeanCt}
Let $F,G\in\Delta([0,1])$ be CDFs with $\sup_v |F(v)-G(v)|\leq \varepsilon$. For any
$g\in\mathcal{G}$ such that $\E_F[g],\E_G[g]>0$, we have
\begin{align*}
    \left|\frac{\E_F[xg]}{\E_F[g]}-\frac{\E_G[xg]}{\E_G[g]}\right|
    \leq 4\varepsilon \min\left\{\frac{1}{\E_F[g]},\frac{1}{\E_G[g]}\right\}.
\end{align*}
\end{lemma}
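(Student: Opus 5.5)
The plan is to transfer the Kolmogorov closeness of $F$ and $G$ to the numerator and denominator of each posterior mean, and then combine the two using the elementary ratio inequality already used in the proof of \Cref{lem:conc}.

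\textbf{Step 1: control $\E[g]$.} Write $g=g_{u,v,\xi_u,\xi_v}$. Then
\[
\E_F[g]=\xi_u\,(F(u)-F(u^-)) + (F(v^-)-F(u)) + \xi_v\,(F(v)-F(v^-)).
\]
Rewrite this as a combination with nonnegative coefficients:
\[
\E_F[g]=(1-\xi_v)F(v^-)+\xi_vF(v)-(1-\xi_u)F(u)-\xi_uF(u^-).
\]
The map $H\mapsto \E_H[g]$ is therefore the difference of two convex combinations of values of $H$ (or its left limits) at single points. The Kolmogorov bound also holds for left limits: $\sup_v|F(v^-)-G(v^-)|\le\varepsilon$. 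Each convex combination thus moves by at most $\varepsilon$, which gives $|\E_F[g]-\E_G[g]|\le 2\varepsilon$. The degenerate case $u=v$ should be checked separately; there $g$ is a point mass with weight at most $\max(\xi_u,\xi_v)$, and the same bound holds.

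\textbf{Step 2: control $\E[xg]$.} For the numerator I would use integration by parts (a Stieltjes version on $[u,v]$ that handles atoms via the $\xi$ weights). Set $\Phi_H(x)=\E_H[g\,\mathbf 1\{\cdot\le x\}]$, the $g$-weighted CDF. Then
\[
\E_H[xg]=\int x\,d\Phi_H = v\,\Phi_H(v)-\int_u^v \Phi_H(x)\,dx .
\]
The same convex-combination argument as in Step 1 gives $|\Phi_F(x)-\Phi_G(x)|\le 2\varepsilon$ for every $x$. Hence
\[
|\E_F[xg]-\E_G[xg]|\le 2\varepsilon v+2\varepsilon(v-u)\le 4\varepsilon .
\]
I would try to tighten this to $2\varepsilon$ by pairing the terms: $x\,dH$ summed against the $\xi$-weights telescopes to $v\cdot(\text{upper combination})-u\cdot(\text{lower combination})-\int_u^v H$. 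A cleaner route is to write
\[
\E_H[xg]-u\,\E_H[g]=\int (x-u)g\,dH ,
\]
whose difference between $F$ and $G$ is bounded by $(v-u)\varepsilon$ plus a boundary term.

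\textbf{Step 3: combine, and the main obstacle.} Apply
\[
\left|\frac{A'}{B'}-\frac AB\right|\le \frac{|A'-A|}{B}+\frac{A'}{B'}\cdot\frac{|B'-B|}{B},
\]
taking $A/B$ as the $G$-mean and noting $A'/B'\le 1$. This yields $(|\Delta A|+|\Delta B|)/\E_G[g]$, and by symmetry the same with $\E_F[g]$ in the denominator; taking the better of the two gives the $\min$. The main obstacle is the constant: the crude numerator bound $4\varepsilon$ plus the denominator bound $2\varepsilon$ gives $6\varepsilon$. To reach $4\varepsilon$, I would instead center at $u$, using $\frac{\E[xg]}{\E[g]}=u+\frac{\E[(x-u)g]}{\E[g]}$. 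This cancels the common $u$ and, with $(x-u)\le v-u\le 1$, bounds $|\Delta \E[(x-u)g]|\le 2\varepsilon$. The ratio inequality then gives $(2\varepsilon+2\varepsilon)/\E_G[g]=4\varepsilon/\E_G[g]$, since the centered mean is also at most $1$.
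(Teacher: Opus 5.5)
Your skeleton is the paper's: the ratio inequality with the other prior's posterior mean bounded by $1$, plus $2\varepsilon$ bounds on the changes of $\E[g]$ and of the numerator. Your Step 1 is correct, and it is an atom-careful version of the paper's estimate $d_{\mathrm{TV}}(g)\le 2$.

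The gap is in the step you yourself flagged as decisive. You justify $|\E_F[(x-u)g]-\E_G[(x-u)g]|\le 2\varepsilon$ only by $0\le x-u\le 1$ on the support of $g$. A sup-norm bound on a test function gives no control under Kolmogorov closeness. For example, take $F$ uniform on $\{0,\tfrac1N,\dots,\tfrac{N-1}{N}\}$ and $G$ its shift by $\tfrac1{2N}$: then $\sup_v|F(v)-G(v)|=\tfrac1N$, yet the indicator of the support of $F$ has expectations $1$ and $0$. Nor is centering what repairs the constant. When $u=0$ it changes nothing. Moreover, your Step 2 estimate applied to $\E_H[(x-u)g]=(v-u)\Phi_H(v)-\int_u^v\Phi_H(x)\,dx$, bounding the two terms separately, gives $4(v-u)\varepsilon$, hence again up to $6\varepsilon/\E_G[g]$.

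What closes the gap is the pairing you alluded to. Write
\[
\E_H[xg]=u\,\Phi_H(v)+\int_u^v\bigl(\Phi_H(v)-\Phi_H(x)\bigr)\,dx ,
\]
and observe that for $u<x<v$ one has $\Phi_H(v)-\Phi_H(x)=(1-\xi_v)H(v^-)+\xi_vH(v)-H(x)$. This quantity moves by at most $2\varepsilon$ between $F$ and $G$. Hence $|\E_F[xg]-\E_G[xg]|\le 2u\varepsilon+2(v-u)\varepsilon=2v\varepsilon$. The ratio inequality then gives $(2v\varepsilon+2\varepsilon)/\E_G[g]\le 4\varepsilon/\E_G[g]$ directly, with no centering. (Applied to the centered numerator, the same pairing gives the marginally sharper $4(v-u)\varepsilon/\E_G[g]$.)

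This $2v\varepsilon$ is exactly what the paper obtains in one line from $|\E_F[h]-\E_G[h]|\le d_{\mathrm{TV}}(h)\sup_v|F(v)-G(v)|$ with $d_{\mathrm{TV}}(xg)=2v$. Your explicit bookkeeping with $H(u^-)$ and $H(v^-)$ handles atoms at the endpoints transparently. The paper's integration-by-parts identity $\int h\,d(F-G)=-\int(F-G)\,dh$ glosses over those atoms, although the inequality it concludes is true.
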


\begin{proof}
By triangle inequality 
\begin{align*}
\left|\frac{\E_F[xg]}{\E_F[g]}-\frac{\E_G[xg]}{\E_G[g]}\right|
&\le \frac{|\E_F[xg]-\E_G[xg]|}{\E_F[g]}
   + \frac{\E_G[xg]}{\E_F[g]}\cdot \frac{|\E_F[g]-\E_G[g]|}{\E_G[g]}.
\end{align*}
Since $\E_G[xg]/\E_G[g]\leq 1$ as $x\in[0,1]$, it suffices to bound
$|\E_F[h]-\E_G[h]|$ for $h\in\{g,xg\}$.
Observe that
\begin{align*}
    |\E_F[h]-\E_G[h]|\leq d_{\mathrm{TV}}(h)\sup_v|F(v)-G(v)|\leq \varepsilon d_{\mathrm{TV}}(h)
\end{align*}
where $d_{\mathrm{TV}}(h)=\sup_n\sup_{0=x_0<x_1<\ldots<x_n=1}\sum_{i=1}^n|h(x_i)-h(x_{i-1})|$ is the total variation of the function $h$ on $[0,1]$, this is so since
\begin{align*}
    \E_F[h]-\E_G[h]=\int_{0}^1h\mathrm{d}(F-G)=-\int_{0}^1 (F-G)\mathrm{d}h.
\end{align*}
Now $g\in\mathcal G$ has the total variation at most $2$, and $x\cdot g$ has the total
variation at most $2$ since $x\leq 1$. Hence $|\E_F[g]-\E_G[g]|\leq 2\varepsilon$ and $
|\E_F[xg]-\E_G[xg]|\leq 2\vmax\varepsilon.$
Substituting gives us the bound. 
Swapping $F$ and $G$ yields the same bound as $\E_G[g]$ in the denominator,
and taking the minimum proves the claim.
\end{proof}

\begin{theorem}
    \Cref{algo:online}
    with cutoff $\tau_t$ and  discount $\rho_t = 4\Bar{q}\varepsilon_t/\tau_t$
    is $(\beta,\bm{\kappa})$-robustly revenue efficient for sets $\mathcal{C}(\hat{F}_t)=\{G\in \Delta
    (\mathcal{V}):\sup_v|G(v)-\hat{F}_t(v)|\leq \varepsilon_t
    \}$ for $\beta\geq \sum_{t\in[T]}2\exp\left(-t\varepsilon_t^2/2\right)$ and $\kappa_t = K\Bar{q}(4\varepsilon_t/\tau_t + \tau_t)$ where $\mathcal{V}=[0,1]$ for any sequence $\{\varepsilon_t\}_{t=T},\{\tau_t\}_{t=1}^T$ with $\tau_t,\varepsilon_t\in (0,1]\ \forall t\in [T]$.
    \label{thm:robustefficiency}
\end{theorem}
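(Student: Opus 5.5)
The proof splits into the two requirements of \Cref{defn:betakappa}: coverage and robustness.

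\emph{Coverage.} Here $\hat{F}_t$ is the empirical CDF of the $t-1$ (or $t$) i.i.d.\ draws from $F^\star$. We need $\sup_v|F^\star(v)-\hat{F}_t(v)|\leq\varepsilon_t$ for every $t$ simultaneously. For each fixed $t$, the DKW inequality with Massart's constant gives $\Pr(\sup_v|\hat{F}_t(v)-F^\star(v)|>\varepsilon_t)\leq 2\exp(-2t\varepsilon_t^2)$. This is at most $2\exp(-t\varepsilon_t^2/2)$, and the slack absorbs the off-by-one in the sample count. A union bound over $t\in[T]$ then gives $\Pr(\cap_t\mathcal{C}_t\ni F^\star)\geq 1-\sum_t 2\exp(-t\varepsilon_t^2/2)\geq 1-\beta$.

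\emph{Robustness.} Fix $t$ and a history $h_t$. Then $\hat{F}_t$ is a fixed distribution, and $\tilde{\bm{x}}_t=\mathsf{Tr}(\bm{x}_{\hat{F}_t};\tau_t,\rho_t)$ is exactly what \Cref{alg:1} outputs on input $\hat{F}_t$ with $\varepsilon_{\rm OPT}=0$. The plan is to invoke \Cref{lemma:samplecomplexityimp} with $F=\hat{F}_t$. This requires showing that $\mathcal{C}(\hat{F}_t)$, the Kolmogorov ball of radius $\varepsilon_t$, is contained in the family of $G$ with $|\E_{\hat{F}_t}[v|k]-\E_G[v|k]|\leq\rho_t/\qmax$ for all good signals $k\notin\bm{z}_{\tau_t}$.

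By \Cref{sec:conv}, the indicator of signal $k$ in the value-space description is a function $g\in\mathcal{G}$ of \Cref{def:G}. Also, $\E_{\hat{F}_t}[g]=t_k-t_{k-1}\geq\tau_t$ for every good signal. \Cref{lemma:posteriormeanCt} with $F=\hat{F}_t$ then gives
\[
|\E_{\hat{F}_t}[v|k]-\E_G[v|k]|\leq 4\varepsilon_t/\E_{\hat{F}_t}[g]\leq 4\varepsilon_t/\tau_t=\rho_t/\qmax .
\]
If $\E_G[g]=0$, the signal never occurs under $G$ and contributes nothing, so this case can be handled by using the $\E_{\hat{F}_t}[g]$ branch of the minimum, or simply by noting it is vacuous.

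\Cref{lemma:samplecomplexityimp} then gives $\texttt{GAP}\leq K(\rho_t+\qmax\tau_t)=K\qmax(4\varepsilon_t/\tau_t+\tau_t)=\kappa_t$. Since $\bm{x}_{\hat{F}_t}$ is optimal for $\hat{F}_t$, we have $\max_{\bm{x}}Rev(\bm{x};\hat{F}_t,\hat{F}_t)=Rev(\bm{x}^\star_{\hat{F}_t};\hat{F}_t,\hat{F}_t)$. So the GAP bound is precisely the robustness condition, with signals generated under $\hat{F}_t$ and beliefs updated under $G$.

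\emph{Main obstacle.} The delicate step is confirming that the revenue accounting in \Cref{lemma:samplecomplexityimp} matches $Rev(\tilde{\bm{x}};G,\hat{F}_t)$. Signal probabilities must be taken under $\hat{F}_t$, so no $\varepsilon_0$ term appears, while posterior means are taken under $G$. The margins $p_k-c(q_k)$ must be non-decreasing; this holds for the exact optimum, per \Cref{sec:rep}. Only buyers receiving bad or \texttt{null} signals are counted as lost, which costs at most $\qmax\tau_t$ per bad signal, $K$ signals in total. Good-signal recipients never defect downward and pay at least $p_k-K\rho_t$. Upward defection only raises the margin by monotonicity, and this monotonicity is the point that needs care.
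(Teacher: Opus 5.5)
Your proposal is correct and follows the paper's proof essentially step for step. Coverage comes from DKW plus a union bound over $t$. Robustness comes from using \Cref{lemma:posteriormeanCt} with $\E_{\hat F_t}[g]=t_k-t_{k-1}\geq\tau_t$ to place the Kolmogorov ball $\mathcal{C}(\hat F_t)$ inside the family required by \Cref{lemma:samplecomplexityimp}, which then yields $\texttt{GAP}\leq K(\rho_t+\qmax\tau_t)=\kappa_t$.

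One correction to your side remark: a good signal with $\E_G[g]=0$ is not vacuous. In $Rev(\tilde{\bm{x}};G,\hat F_t)$ signals are drawn from $\hat F_t$, where that signal has mass at least $\tau_t$, and the buyer's posterior under $G$ is simply undefined. This case can only arise when $\tau_t\leq 2\varepsilon_t$, and the paper also leaves it unaddressed.
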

\begin{proof}[Proof of Theorem~\ref{thm:robustefficiency}]
    
Consider $G\in \mathcal{C}(\hat{F}_t)$, then we have by definition $ \sup_{v\in\mathcal{V}}|G(v) - \hat{F}_t(v) |\leq \varepsilon_t$.

From Lemma~\ref{lemma:posteriormeanCt}, we have $\forall G\in\mathcal{C}(\hat{F}_t)$ the following for all $[u,v]\subseteq \mathcal{V}$
     \begin{align*}
    \left|\frac{\E_F[xg]}{\E_F[g]}-\frac{\E_G[xg]}{\E_G[g]}\right|
    \leq 4\varepsilon \min\left\{\frac{1}{\E_F[g]},\frac{1}{\E_G[g]}\right\}.
\end{align*}
     Let $\bm{x}_{\hat{F}}=\left((I,\bm{t}),(p_i,q_i)_{i\in [I]})\right)$ be the solution of $\texttt{L}_{\hat{F}_t}$ and let  $\bm{z}_{\tau_t}=\{i\in [I]: t_i-t_{i-1}< \tau_t\}\subseteq [I]$. Then from above we have $\forall G\in\mathcal{C}(\hat{F}_t) $ the following
       \begin{align*}
 \left| \E_G[v|k]-\E_{\hat{F}_t}[v|k]
\right| &\leq 4\varepsilon_t/\tau_t=:\rho_t/\Bar{q} \quad  \forall \ i\in[I]\backslash\bm{z}_{\tau}.
    \end{align*}
    For $\tilde{x}_{\hat{F}_t}=\mathsf{Tr}(\bm{x}_{\hat{F}_t};\tau_t,\rho_t)$ using Lemma~\ref{lemma:samplecomplexityimp}, we get 
    \begin{align*}
        \texttt{GAP}(\tilde{x}_{\hat{F}_t},\hat{F}_t,\mathcal{C}(\hat{F}_t)) \leq K\Bar{q}(4\varepsilon_t/\tau_t + \tau_t)=:\kappa_t.
    \end{align*}
Finally,
    \begin{align*}
    \text{Pr}_{F^\star}\left(\cap_{t=1}^T\mathcal{C}(\hat{F}_t)\not\ni F^\star\right) &=    \text{Pr}_{F^\star}\left(\cup_{t=1}^T\mathcal{C}(\hat{F}_t)^c\ni F^\star\right)  \\
    &\leq \sum_{t\in[T]}\text{Pr}_{F^\star}\left(\mathcal{C}(\hat{F}_t)^c\ni F^\star\right) \\
    &= \sum_{t\in[T]}\text{Pr}_{F^\star}\left(\sup_{v}|F^\star(v) - \hat{F}_t(v)|> \epsilon_t\right) \\
    &\leq \sum_{t\in[T]}2\exp\left(-2t\epsilon_t^2\right),
\end{align*}

where the last inequality follows from the Dvoretzky–Kiefer–Wolfowitz (DKW) \cite{Dvoretzky1956asymptotic,massart1990tight} inequality  given as follows
\begin{align*}
    \text{Pr} \left(\sup_{v\in \mathbb{R}} |F(v)-\hat{F}(v)|>\epsilon_{n}\right)\leq 2\exp\left(-2n\epsilon_n^2\right).
\end{align*}
\end{proof}

\begin{theorem}  \Cref{algo:online}
    with cutoff $\tau_t=\sqrt{\varepsilon_t}$ and  discount $\rho_t = 4\Bar{q}\sqrt{\varepsilon_t}$, for $\mathcal{C}(\hat{F}_t)=\{G\in \Delta
    (\mathcal{V}):\sup_v|G(v)-\hat{F}_t(v)|\leq \varepsilon_t
    \}$ with $\varepsilon_t=\sqrt{3\log T/ t}$ gives the following cumulative regret $\texttt{Reg}
    \leq    \mathcal{O}\left( K\bar{q}(\log T)^{1/4}T^{3/4}\right)$ with probability at least $1-3T^{-1/2}$.
\label{thm:regretfirst}
\end{theorem}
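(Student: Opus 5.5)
Plan: decompose the per-round regret into a statistical term and a robustness term, and control both on the good event of \Cref{thm:robustefficiency}. For each $t$, write
\begin{align*}
\texttt{Rev}^\star(F^\star)-\texttt{rev}_t(\tilde{\bm{x}}_t;F^b_t)
= \bigl(\texttt{Rev}^\star(F^\star)-\texttt{Rev}^\star(\hat{F}_t)\bigr)
+ \bigl(\texttt{Rev}^\star(\hat{F}_t)-\texttt{Rev}(\tilde{\bm{x}}_t;F^b_t,\hat{F}_t)\bigr)
+ \bigl(\texttt{Rev}(\tilde{\bm{x}}_t;F^b_t,\hat{F}_t)-\texttt{Rev}(\tilde{\bm{x}}_t;F^b_t,F^\star)\bigr),
\end{align*}
and then sum, taking the expectation of the realized revenue over the fresh value $v_t\sim F^\star$ and handling the martingale fluctuation separately.

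\textbf{Step 1 (good event).} With $\varepsilon_t=\sqrt{3\log T/t}$, DKW gives $2\exp(-2t\varepsilon_t^2)=2T^{-6}$. A union bound over $t\le T$ shows that $F^\star\in\mathcal{C}(\hat{F}_t)$ for all $t$ with probability at least $1-2T^{-5}$. By \Cref{asm:behavioral}, $F^b_t\in\mathcal{C}(\hat{F}_t)$ always.

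\textbf{Step 2 (second term).} On this event, \Cref{thm:robustefficiency} with $\tau_t=\sqrt{\varepsilon_t}$ gives $\rho_t=4\bar q\varepsilon_t/\tau_t=4\bar q\sqrt{\varepsilon_t}$. Since $F^b_t\in\mathcal{C}(\hat F_t)$, the middle term is at most $\kappa_t=K\bar q(4\varepsilon_t/\tau_t+\tau_t)=5K\bar q\sqrt{\varepsilon_t}$.

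\textbf{Step 3 (first term).} The first term is bounded by re-using \Cref{lemma:samplecomplexityimp} with the roles swapped. Take $\bm{x}_{F^\star}$ and transform it with $\mathsf{Tr}(\cdot;\tau_t,\rho_t)$. By \Cref{lemma:posteriormeanCt}, good-signal posterior means under $\hat F_t$ are within $4\varepsilon_t/\tau_t=\rho_t/\bar q$ of those under $F^\star$. Hence the robustness argument applies, and the signal probabilities move by at most $2\varepsilon_t$. This gives $\texttt{Rev}^\star(F^\star)-\texttt{Rev}^\star(\hat F_t)\le K(\rho_t+\bar q\tau_t)+2K\bar q\varepsilon_t=O(K\bar q\sqrt{\varepsilon_t})$.

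\textbf{Step 4 (third term).} The third term changes only the distribution generating signals, with beliefs fixed at $F^b_t$. The per-signal revenue is bounded by $\bar q$, and the signaling functions have total variation at most $2$. So, exactly as in \Cref{lemma:posteriormeanCt}, the term is at most $O(K\bar q\varepsilon_t)$.

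\textbf{Step 5 (realized vs.\ expected revenue).} Replace $\mathtt{rev}_t$ by its conditional expectation $\texttt{Rev}(\tilde{\bm x}_t;F^b_t,F^\star)$ given the history. The differences form a martingale difference sequence bounded by $O(\bar q)$. Azuma's inequality with failure probability $T^{-1/2}$ gives an additive $O(\bar q\sqrt{T\log T})$, which is lower order. Combined with Step 1, this yields the overall probability $1-3T^{-1/2}$, which is loose.

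\textbf{Step 6 (summation).} Summing, $\sum_t K\bar q\sqrt{\varepsilon_t}=K\bar q(3\log T)^{1/4}\sum_t t^{-1/4}=O(K\bar q(\log T)^{1/4}T^{3/4})$. This dominates the other contributions.

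The main obstacle is Step 3, specifically verifying that \Cref{lemma:samplecomplexityimp} transfers cleanly when the quantile partition is defined via $F^\star$ but evaluated under the discrete $\hat F_t$. Signals that are good under $F^\star$ have $\hat F_t$-mass at least $\tau_t-2\varepsilon_t$, so I would use the $\min$ form of \Cref{lemma:posteriormeanCt} and absorb the resulting constant factor into $\rho_t$.
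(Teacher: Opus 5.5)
Your proposal is correct and follows the paper's proof essentially step for step. It uses the same three-term decomposition, bounds the middle term by $\kappa_t$ via the robust-efficiency guarantee, and bounds the first term through $\mathsf{Tr}(\bm{x}_{F^\star};\tau_t,\rho_t)$, the total-variation comparison and suboptimality under $\hat F_t$. The remaining steps also match: Azuma for realized revenue, then $\sum_t t^{-1/4}\le\tfrac43 T^{3/4}$. The differences are cosmetic. You apply DKW directly, giving failure probability $2T^{-5}$ instead of the paper's $2T^{-1/2}$ from the stated $\beta$. Your Step 3 worry is moot, because the $\min$ form of \Cref{lemma:posteriormeanCt} lets you use the $F^\star$-denominator $\ge\tau_t$ directly, which yields exactly $4\varepsilon_t/\tau_t=\rho_t/\bar q$ with no constant to absorb.
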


\begin{proof}[Proof of Theorem~\ref{thm:regretfirst}]
Define the event $\mathcal{E}:=\cap_{t=1}^T\{F^\star\in\mathcal{C}(\hat{F}_t)\}$. 
Let $\bm{x}_{\hat{F}}=\left((I,\bm{t}),(p_i,q_i)_{i\in [I]})\right)$ be the solution of $\texttt{L}_{\hat{F}_t}$ and let  $\bm{z}_{\tau_t}=\{i\in [I]: t_i-t_{i-1}< \tau_t\}\subseteq [I]$. Then recall that \Cref{algo:online} uses $\tilde{x}_{\hat{F}_t}=\mathsf{Tr}(\bm{x}_{\hat{F}_t};\tau_t,\rho_t)$.
Then for $\varepsilon_t=\sqrt{3\log T/t}$ and $\tau_t=\sqrt{\varepsilon_t}\ \forall t\in[T]$, \Cref{thm:robustefficiency} gives $(\beta,\bm{\kappa})$-robustly revenue efficiency for $\beta=\sum_{t=1}^T 2\exp(-t\varepsilon_t^2/2)=\sum_{t=1}^T 2T^{-3/2}=2T^{-1/2}$ and $\kappa_t=K\Bar{q}(4\varepsilon_t/\tau_t + \tau_t) = 5K\Bar{q}\sqrt{\varepsilon_t}$. This implies $\Pr(\mathcal{E})\geq 1-2T^{-1/2}$.

Then $\forall G\in\mathcal{C}(\hat{F}_t) $ we have the following from \Cref{lemma:posteriormeanCt}
       \begin{align*}
 \left| \E_G[v|k]-\E_{\hat{F}_t}[v|k]
\right| &\leq 4\varepsilon_t/\tau_t=:\rho_t/\Bar{q} \quad  \forall \ i\in[I]\backslash\bm{z}_{\tau},
    \end{align*}
    and using Lemma~\ref{lemma:samplecomplexityimp}, we get 
    \begin{align*}
        \texttt{GAP}(\tilde{x}_{\hat{F}_t},\hat{F}_t,\mathcal{C}(\hat{F}_t)) \leq K\Bar{q}(4\varepsilon_t/\tau_t + \tau_t)=\kappa_t.
    \end{align*}

\textit{Regret upper bound}: 
On $\mathcal{E}$, $\sup_v|F^\star(v)-\hat{F}_t(v)|\leq\varepsilon_t\ \forall t\in [T]$. Also $F^b_t\in\mathcal{C}(\hat{F}_t)$ by Assumption~1.  Decompose the $t^{\text{th}}$-regret term as follows
\begin{align*}
    \texttt{OPT}(F^\star) - Rev(\tilde{\bm{x}}_{\hat{F}_t};F^b_t,F^\star)
    &= \underbrace{\bigl[\texttt{OPT}(F^\star)-\texttt{OPT}(\hat{F}_t)\bigr]}_{\mathrm{(A)}}
    +\underbrace{\bigl[\texttt{OPT}(\hat{F}_t)-Rev(\tilde{\bm{x}}_{\hat{F}_t};F^b_t,\hat{F}_t)\bigr]}_{\mathrm{(B)}}\\
    &+\underbrace{\bigl[Rev(\tilde{\bm{x}}_{\hat{F}_t};F^b_t,\hat{F}_t)-Rev(\tilde{\bm{x}}_{\hat{F}_t};F^b_t,F^\star)\bigr]}_{\mathrm{(C)}}.
\end{align*}

 Since $F^b_t\in\mathcal{C}(\hat{F}_t)$  we get $\mathrm{(B)}\leq \texttt{GAP}(\tilde{\bm{x}}_{\hat{F}_t},\hat{F}_t,\mathcal{C}(\hat{F}_t))\leq \kappa_t.         $\\
 For $\mathrm{(C)}$ the buyer's decision is the same, so it can be upper bounded by $\Bar{q}\sup_{v}|F^\star(v)-\hat{F}_t(v)|\sum_{k=1}^Kd_{\mathrm{TV}}(g_k)\leq 2\Bar{q}\varepsilon_tK$ since $g$ has 2 jumps of size at most 1, where $d_{\mathrm{TV}}(\cdot)$ is the total variation distance.

Define $\tilde{\bm{x}}_{F^\star}=\mathsf{Tr}(\bm{x}_{F^\star};\tau_t,\rho_t)$ for $\bm{x}_{F^\star}=((K^\star,\bm{t}^\star),(p_i^\star,q_i^\star)_{i=1}^K)$ solution of $\texttt{L}_{F^\star}$. Then $\forall G\in\mathcal{C}(F^\star) $ we have the following from \Cref{lemma:posteriormeanCt}
       \begin{align*}
 \left| \E_G[v|k]-\E_{F^\star}[v|k]
\right| &\leq 4\varepsilon_t/\tau_t=\rho_t/\Bar{q} \quad  \forall \ k\in[K^\star]: t^\star_k-t^\star_{k-1}\geq \tau_t,
    \end{align*}
    and using Lemma~\ref{lemma:samplecomplexityimp}, we get 
    \begin{align*}
        \texttt{GAP}(\tilde{\bm{x}}_{F^\star},F^\star,\mathcal{C}(F^\star)) \leq K\Bar{q}(4\varepsilon_t/\tau_t + \tau_t)=\kappa_t.
    \end{align*}
   Recall that under $\mathcal{E}$, we also have $\hat{F}_t\in \mathcal{C}(F^\star)\ \forall t\in[T]$. 
    For (A) we perform the following decomposition
\begin{align*}
    \mathrm{(A)}
    &\leq\underbrace{\bigl[\texttt{OPT}(F^\star)-Rev(\tilde{\bm{x}}_{F^\star};\hat{F}_t,F^\star)\bigr]}_{\leq\,\texttt{GAP}(\tilde{\bm{x}}_{F^\star},F^\star,\mathcal{C}(F^\star))\,\leq\,\kappa_t}
    +\underbrace{\bigl[Rev(\tilde{\bm{x}}_{F^\star};\hat{F}_t,F^\star)-Rev(\tilde{\bm{x}}_{F^\star};\hat{F}_t,\hat{F}_t)\bigr]}_{\leq 2\Bar{q}\varepsilon_tK\ \text{ using same TV argument as (C)}}\\
    &+\underbrace{\bigl[Rev(\tilde{\bm{x}}_{F^\star};\hat{F}_t,\hat{F}_t)-\texttt{OPT}(\hat{F}_t)\bigr]}_{\leq\,0\ 
    \text{due to suboptimality of }\tilde{\bm{x}}_{F^\star}\text{ under }\hat{F}_t}\\
    &\leq\kappa_t+2\Bar{q}\varepsilon_tK.
\end{align*}
Combining we get overall 
\begin{align*}
    \texttt{OPT}(F^\star)-Rev(\tilde{\bm{x}}_{\hat{F}_t};F^b_t,F^\star)\leq 2\kappa_t+4\Bar{q}\varepsilon_tK.
\end{align*}

Finally, define $Y_t:=\texttt{rev}_t(\tilde{x}_{\hat{F}_t};F^b_t)-Rev(\tilde{\bm{x}}_{\hat{F}_t};F^b_t,F^\star)$. Since $\tilde{\bm{x}}_{\hat{F}_t}$ and $F^b_t$ are $\mathcal{H}_t$-measurable and $v_t|\mathcal{H}_t\sim F^\star$, we have $\E[Y_t|\mathcal{H}_t]=0$ and $|Y_t|\leq \Bar{q}$ where $\mathcal{H}_t$ denotes the $\sigma$-algebra of all the histories upto time $t$.  By Azuma-Hoeffding, we get
\begin{align*}
    \text{Pr}\left(\sum_{t=1}^T(-Y_t)\geq z\Bar{q}\right)\leq\exp\!\left(-\frac{z^2}{2T}\right).
\end{align*}
Setting $z=\sqrt{T\log T}$ gives $T^{-1/2}$ upper bound.

Thus overall on the event $\mathcal{E}\cap\{\sum_t(-Y_t)\leq z\Bar{q}\}$, which now holds with probability $\geq 1-2T^{-1/2}-T^{-1/2}=1-3T^{-1/2}$, we get 
\begin{align*}
    \texttt{Reg}
    =\sum_{t=1}^T\bigl(\texttt{OPT}(F^\star)-\texttt{rev}_t\bigr)
    &=\sum_{t=1}^T\bigl(\texttt{OPT}(F^\star)-Rev(\tilde{\bm{x}}_{\hat{F}_t};F^b_t,F^\star)\bigr)+\sum_{t=1}^T(-Y_t)\\
    &\leq\sum_{t=1}^T(2\kappa_t+4\Bar{q}\varepsilon_tK)+\Bar{q}\sqrt{T\log T}.
\end{align*}

Substituting $\tau_t=\sqrt{\varepsilon_t}$ gives us $\kappa_t =K\Bar{q}(4\varepsilon_t/\tau_t + \tau_t) = 5K\Bar{q}\sqrt{\varepsilon_t} $ and since $\varepsilon_t=\sqrt{3\log T/t}$, we have $\sqrt{\varepsilon_t}=(3\log T)^{1/4}t^{-1/4}$ and $\varepsilon_t=(3\log T)^{1/2}t^{-1/2}$.  Using $\sum_{t=1}^T t^{-1/4}\leq\frac{4}{3}T^{3/4}$ and $\sum_{t=1}^T t^{-1/2}\leq 2T^{1/2}$ we get the leading term as follows 
\begin{align*}
\sum_{t=1}^T\sqrt{\varepsilon_t}&\leq\tfrac{4}{3}(3\log T)^{1/4}\,T^{3/4}=\mathcal{O}\left((\log T)^{1/4}T^{3/4}\right)
\end{align*}
Finally we have
\begin{align*}
    \texttt{Reg}\leq \mathcal{O}\left(K\Bar{q}(\log T)^{1/4}T^{3/4}\right).
\end{align*}
\end{proof}

Consider the following set for which we show the $\mathcal{O}(T^{2/3})$ regret guaranty. 
\begin{align*}
    \mathcal{C}'(\hat{F}_t) = \left\{ G\in\Delta(\mathcal{V}): \left|\frac{\E_{\hat{F}_t}[xg]}{\E_{\hat{F}_t}[g]}-\frac{\E_G[xg]}{\E_G[g]}\right|\leq \varepsilon_t\sqrt{\frac{1}{\max\{\E_{\hat{F}_t}[g],\E_{G}[g]\}}}+\varepsilon_t^2\frac{1}{{\max\{\E_{\hat{F}_t}[g],\E_{G}[g]\}}}\right\}.
\end{align*}

\begin{theorem}
     \Cref{algo:online}
    with cutoff $\tau_t$ and  discount $\rho_t = \Bar{q}(\varepsilon_t/\sqrt{\tau_t}+\varepsilon_t^2/\tau_t)$
    is $(\beta,\bm{\kappa})$-robustly revenue efficient for the sets $\mathcal{C}'(\hat{F}_t)$ for $\beta\geq \sum_{t\in[T]}t\exp\left(-t\varepsilon_t^2/C\right)$ and $\kappa_t = K\Bar{q}(\varepsilon_t/\sqrt{\tau_t}+\varepsilon_t^2/\tau_t + \tau_t)$ where $\mathcal{V}=[0,1]$ for any sequence $\{\varepsilon_t\}_{t=T},\{\tau_t\}_{t=1}^T$ with $\tau_t,\varepsilon_t\in (0,1]\ \forall t\in [T]$. 
    \label{thm:robustefficiency2}
\end{theorem}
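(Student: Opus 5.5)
The proof mirrors that of \Cref{thm:robustefficiency}. It has two parts: robustness, which is deterministic given the definition of $\mathcal{C}'(\hat{F}_t)$, and coverage, which requires a variance-sensitive concentration inequality instead of DKW.

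\emph{Robustness.} Fix $t$ and a history $h_t$. Let $\bm{x}_{\hat{F}_t}$ be the solution of $\texttt{L}_{\hat{F}_t}$ with quantile partition $(t_k)$, and set $\bm{z}_{\tau_t}=\{k: t_k-t_{k-1}<\tau_t\}$. For every good signal $k\notin\bm{z}_{\tau_t}$, the function $g_k\in\mathcal{G}$ from \Cref{sec:conv} realizes signal $k$ and satisfies $\E_{\hat{F}_t}[g_k]=t_k-t_{k-1}\geq\tau_t$. Hence $\max\{\E_{\hat{F}_t}[g_k],\E_G[g_k]\}\geq\tau_t$. By the definition of $\mathcal{C}'(\hat{F}_t)$, every $G$ in the set satisfies
\[
\bigl|\E_G[v|k]-\E_{\hat{F}_t}[v|k]\bigr|\leq \varepsilon_t/\sqrt{\tau_t}+\varepsilon_t^2/\tau_t=\rho_t/\Bar{q}.
\]
Thus $G\in\mathcal{C}(\hat{F}_t)$ in the sense of \Cref{lemma:samplecomplexityimp}, applied with $F=\hat{F}_t$. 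That lemma gives $\texttt{GAP}(\tilde{\bm{x}}_{\hat{F}_t},\hat{F}_t,\mathcal{C}'(\hat{F}_t))\leq K(\rho_t+\Bar{q}\tau_t)=\kappa_t$. One point needs care: the definition of $\mathcal{C}'$ only makes sense when $\E_G[g]>0$. If $\E_G[g_k]=0$, the posterior under $G$ is vacuous, and no revenue is lost from that signal beyond what the lemma already accounts for. I will handle this by convention, taking the constraint to apply only where both expectations are positive.

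\emph{Coverage.} We need $\Pr(\exists t: F^\star\notin\mathcal{C}'(\hat{F}_t))\leq\sum_t t\exp(-t\varepsilon_t^2/C)$. By a union bound over $t$, it suffices to show for each $t$ that, with probability at least $1-t\exp(-t\varepsilon_t^2/C)$, every $g\in\mathcal{G}$ satisfies the conditional-mean bound with $n=t$. I would invoke the derivation in \Cref{appsec:conc}: the Maurer–Pontil bound combined with the pseudo-dimension covering bound and the Goldberg–Jerrum constant pseudo-dimension of $\mathcal{G}$ and $x\mathcal{G}$. With failure probability $\delta$, this yields for $h\in\{g,xg\}$
\[
|\E_F[h]-\E_{\hat{F}}[h]|\leq C'\left(\sqrt{\E[g]\,\log(t/\delta)/t}+\log(t/\delta)/t\right),
\]
and it holds with $\E[g]$ taken as either the population or the empirical mean; here I use $\E[xg]\leq\E[g]$. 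Now set $\varepsilon_t^2\asymp\log(t/\delta)/t$, which is equivalent to $\delta=t\exp(-t\varepsilon_t^2/C)$. Applying the ratio inequality $|A'/B'-A/B|\leq |A'-A|/B+(A'/B')|B'-B|/B$ with $B$ chosen as the larger of the two denominators then gives
\[
\left|\frac{\E_F[xg]}{\E_F[g]}-\frac{\E_{\hat{F}_t}[xg]}{\E_{\hat{F}_t}[g]}\right|\lesssim \varepsilon_t/\sqrt{B}+\varepsilon_t^2/B.
\]
Absorbing constants into $C$ places $F^\star$ in $\mathcal{C}'(\hat{F}_t)$.

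\emph{Main obstacle.} The hard step is matching the $\max$ in the denominator. The ratio inequality lets me divide by either $B$, but the numerator's deviation bound must then be expressed in terms of that same $\max$-mean. When the larger mean is the empirical one, I use the empirical-variance form of Maurer's bound directly. When it is the population one, I use the converted form derived by the AM-GM step in \Cref{appsec:conc}. Both forms hold on the same high-probability event, so the two cases combine with only a constant-factor loss. This constant, together with the $\log t$ factor hidden in $\delta$, is what produces the leading factor $t$ and the constant $C$ in the exponent of $\beta$.
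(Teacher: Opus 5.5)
Your proposal is correct and follows the paper's proof. Robustness comes from the definition of $\mathcal{C}'(\hat F_t)$, using $\max\{\E_{\hat F_t}[g_k],\E_G[g_k]\}\geq\tau_t$ on good signals, followed by \Cref{lemma:samplecomplexityimp}; coverage is a union bound over $t$ with the variance-aware bound of \Cref{lem:conc} at $\delta=t\exp(-t\varepsilon_t^2/C)$, which you unpack in more detail than the paper does (your two-case ``main obstacle'' is unnecessary, since whichever mean sits under the square root is at most the larger one, and your aside that a good signal with $\E_G[g_k]=0$ loses no extra revenue is unjustified because that signal still occurs with $\hat F_t$-probability at least $\tau_t$ and the buyer's response is undefined, though the paper is silent on this degenerate case too).
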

\begin{proof}[Proof of Theorem~\ref{thm:robustefficiency2}]
    
Consider $G\in \mathcal{C}'(\hat{F}_t)$.   Let $\bm{x}_{\hat{F}}=\left((I,\bm{t}),(p_i,q_i)_{i\in [I]})\right)$ be the solution of $\texttt{L}_{\hat{F}_t}$. Then we have $\forall G\in\mathcal{C}'(\hat{F}_t)$ the following by definition 
       \begin{align*}
 \left| \E_G[v|k]-\E_{\hat{F}_t}[v|k]
\right| &\leq \varepsilon_t/\sqrt{\tau_t}+\varepsilon_t^2/\tau_t=:\rho_t/\bar{q} \forall \ i\in[I]\backslash\bm{z}_{\tau}.
    \end{align*}
    For $\tilde{x}_{\hat{F}_t}=\mathsf{Tr}(\bm{x}_{\hat{F}_t};\tau_t,\rho_t)$ using Lemma~\ref{lemma:samplecomplexityimp}, we get 
    \begin{align*}
        \texttt{GAP}(\tilde{x}_{\hat{F}_t},\hat{F}_t,\mathcal{C}'(\hat{F}_t)) \leq K\Bar{q}(\varepsilon_t/\sqrt{\tau_t}+\varepsilon_t^2/\tau_t + \tau_t)=:\kappa_t.
    \end{align*}

    Finally, \begin{align*}\text{Pr}_{F^\star}\left(\cap_{t=1}^T\mathcal{C}'(\hat{F}_t)\not\ni F^\star\right) &=    \text{Pr}_{F^\star}\left(\cup_{t=1}^T\mathcal{C}'(\hat{F}_t)^c\ni F^\star\right)  \\
    &\leq \sum_{t\in[T]}\text{Pr}_{F^\star}\left(\mathcal{C}'(\hat{F}_t)^c\ni F^\star\right) \\
    &\leq \sum_{t\in[T]}t\exp\left(-t\varepsilon_t^2/C\right),
\end{align*}

where the last inequality follows from \Cref{lem:conc} for some constant $C$.
\end{proof}

\begin{theorem}  \Cref{algo:online}
    with cutoff $\tau_t=\varepsilon_t^{1/3}$ and  discount $\rho_t = \Bar{q}(\varepsilon_t/\sqrt{\tau_t}+\varepsilon_t^2/\tau_t)$, for $\mathcal{C}'(\hat{F}_t)$ with $\varepsilon_t=\sqrt{5\log T/(2t)}$ gives the following cumulative regret $\texttt{Reg}
    \leq    \mathcal{O}\left(K\Bar{q}(\log T)^{1/3}T^{2/3}\right)$ with probability at least $1-\mathcal{O}(T^{-1/2})$.
\label{thm:regretsecond}
\end{theorem}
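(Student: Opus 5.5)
We mirror the proof of \Cref{thm:regretfirst}, replacing the Kolmogorov balls by the variance-aware sets $\mathcal{C}'(\hat{F}_t)$ and \Cref{thm:robustefficiency} by \Cref{thm:robustefficiency2}.

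\emph{Coverage.} Set $\varepsilon_t=\sqrt{5\log T/(2t)}$ and let $\mathcal{E}:=\cap_{t}\{F^\star\in\mathcal{C}'(\hat{F}_t)\}$. \Cref{thm:robustefficiency2}, whose coverage rests on \Cref{lem:conc}, gives $\Pr(\mathcal{E}^c)\leq\sum_{t}t\exp(-t\varepsilon_t^2/C)$. Since $t\varepsilon_t^2=\tfrac52\log T$, each summand is at most $t\,T^{-5/(2C)}$. Taking the constant inside $\varepsilon_t$ large relative to the universal $C$ of \Cref{lem:conc} makes the sum $\mathcal{O}(T^{-1/2})$. This is the first place where constants must be tracked.

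\emph{Per-round decomposition.} On $\mathcal{E}$, split the round-$t$ loss $\texttt{OPT}(F^\star)-Rev(\tilde{\bm{x}}_{\hat{F}_t};F^b_t,F^\star)$ into (A), (B), (C) exactly as in \Cref{thm:regretfirst}.
\begin{itemize}
\item \textbf{(B).} Since $F^b_t\in\mathcal{C}'(\hat{F}_t)$ by \Cref{asm:behavioral}, (B) is at most $\kappa_t=K\overline{q}(\varepsilon_t/\sqrt{\tau_t}+\varepsilon_t^2/\tau_t+\tau_t)$, by \Cref{thm:robustefficiency2} together with \Cref{lemma:samplecomplexityimp}.
\item \textbf{(C).} The buyer's choice is unchanged and only the signal probabilities move. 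By \Cref{lem:conc} each good signal probability changes by $\mathcal{O}(\varepsilon_t\sqrt{\Pr(k)}+\varepsilon_t^2)$. Summing over $K$ signals and multiplying by margins at most $\overline{q}$ gives $\mathcal{O}(K\overline{q}\varepsilon_t)$.
\item \textbf{(A).} Compare to $\tilde{\bm{x}}_{F^\star}=\mathsf{Tr}(\bm{x}_{F^\star};\tau_t,\rho_t)$. The symmetric form of $\mathcal{C}'$ (the $\max$ in the denominators) gives $\hat{F}_t\in\mathcal{C}'(F^\star)$ on $\mathcal{E}$. Then \Cref{lemma:samplecomplexityimp} bounds (A) by $\kappa_t+\mathcal{O}(K\overline{q}\varepsilon_t)$, using suboptimality of $\tilde{\bm{x}}_{F^\star}$ under $\hat{F}_t$.
\end{itemize}
The martingale term $\sum_t(-Y_t)$ is handled by Azuma--Hoeffding at level $\overline{q}\sqrt{T\log T}$, failing with probability $T^{-1/2}$. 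A union bound then gives overall probability $1-\mathcal{O}(T^{-1/2})$.

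\emph{Summation, the main obstacle.} With the stated cutoff $\tau_t=\varepsilon_t^{1/3}$ we have
\[
\varepsilon_t/\sqrt{\tau_t}=\varepsilon_t^{5/6},\qquad \varepsilon_t^2/\tau_t=\varepsilon_t^{5/3},\qquad \tau_t=\varepsilon_t^{1/3}.
\]
For $\varepsilon_t\leq1$ the last term dominates, so $\kappa_t\leq 3K\overline{q}\,\varepsilon_t^{1/3}$. Substituting $\varepsilon_t^{1/3}=(5\log T/2)^{1/6}t^{-1/6}$ gives
\[
\sum_t\kappa_t=\mathcal{O}\big(K\overline{q}(\log T)^{1/6}T^{5/6}\big).
\]
This is \emph{larger} than the claimed $K\overline{q}(\log T)^{1/3}T^{2/3}$. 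So the final arithmetic step, not the probabilistic part, is where the statement as worded is in tension.

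I would resolve it by checking whether some earlier bound can absorb the $\tau_t$ term more cheaply than $K\overline{q}\tau_t$. The first candidate is the bundling loss in \Cref{lemma:samplecomplexityimpA}, since bundled bad signals have total mass at most $K\tau_t$. If no such sharpening exists, the stated exponent $T^{2/3}$ with factor $(\log T)^{1/3}$ matches the balancing choice $\varepsilon_t/\sqrt{\tau_t}=\tau_t$. That choice yields $\kappa_t=\mathcal{O}(K\overline{q}\varepsilon_t^{2/3})$, whose sum is $\mathcal{O}(K\overline{q}(\log T)^{1/3}T^{2/3})$, so the cutoff should be read as $\tau_t=\varepsilon_t^{2/3}$. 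Under the literal cutoff $\varepsilon_t^{1/3}$, the proof above delivers $T^{5/6}$.
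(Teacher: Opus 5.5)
Your proposal follows the paper's own proof step for step. It uses coverage of $\mathcal{C}'$ via \Cref{thm:robustefficiency2} and \Cref{lem:conc}, and the (A)/(B)/(C) split with (B) bounded by $\kappa_t$. It then uses the symmetry of $\mathcal{C}'$ to get $\hat{F}_t\in\mathcal{C}'(F^\star)$ for (A), and Azuma--Hoeffding for $\sum_t(-Y_t)$.

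The discrepancy you flag in the final summation is genuine, and the error is in the paper's proof, not in your argument. The paper asserts $\varepsilon_t^{1/3}=(5C\log T/(2t))^{2/3}$, whereas in fact $\varepsilon_t^{1/3}=(5C\log T/(2t))^{1/6}$. Even the asserted exponent would not yield $(\log T)^{1/3}T^{2/3}$. Under the literal cutoff, the term $K\overline{q}\tau_t$ inside $\kappa_t$ alone forces $\sum_t\kappa_t=\Theta(K\overline{q}(\log T)^{1/6}T^{5/6})$.

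Your correction is the right one. With $\tau_t=\varepsilon_t^{2/3}$ (so $\rho_t=\overline{q}(\varepsilon_t^{2/3}+\varepsilon_t^{4/3})$) one gets $\kappa_t=\mathcal{O}(K\overline{q}\varepsilon_t^{2/3})$, which sums to exactly $\mathcal{O}(K\overline{q}(\log T)^{1/3}T^{2/3})$.

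The sharpening you propose to look for will not rescue $\tau_t=\varepsilon_t^{1/3}$. In (A), the cutoff of $\tilde{\bm{x}}_{F^\star}$ is a free analysis parameter. In (B), however, it is the algorithm's own cutoff. There the paper's only control of the bundled signals under an adversarial $F^b_t\in\mathcal{C}'(\hat{F}_t)$ is the worst-case loss $K\overline{q}\tau_t$, which assumes \texttt{null} recipients buy nothing. Your remark on constants in the coverage step also matches the paper, which silently switches to $\varepsilon_t=\sqrt{5C\log T/(2t)}$ in its final computation.

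One bookkeeping point concerns (C) and the middle term of (A). There you use the signal-probability part of \Cref{lem:conc}, but your event $\mathcal{E}$ only records $F^\star\in\mathcal{C}'(\hat{F}_t)$, which constrains conditional means only. There are two fixes:
\begin{enumerate}
\item define $\mathcal{E}$ as the full high-probability event of \Cref{lem:conc} at every $t$, with the union bound taken over $t$; or
\item as the paper does, intersect with a DKW event at level $\xi_t$, which bounds both terms by $2\overline{q}K\xi_t$ and sums to $\mathcal{O}(K\overline{q}\sqrt{T\log T})$.
\end{enumerate}
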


\begin{proof}[Proof of Theorem~\ref{thm:regretsecond}]

Define the event $\mathcal{E}:=\cap_{t=1}^T\{F^\star\in\mathcal{C}(\hat{F}_t):=\{G\in\Delta([0,1]):\max_v|G(v)-\hat{F}_t(v)|\leq \xi_t\}\}$ and $\mathcal{E}':=\cap_{t=1}^T\{F^\star\in\mathcal{C}'(\hat{F}_t)\}$. 
Let $\bm{x}_{\hat{F}}=\left((I,\bm{t}),(p_i,q_i)_{i\in [I]})\right)$ be the solution of $\texttt{L}_{\hat{F}_t}$ and let  $\bm{z}_{\tau_t}=\{i\in [I]: t_i-t_{i-1}< \tau_t\}\subseteq [I]$. Then recall that \Cref{algo:online} uses $\tilde{x}_{\hat{F}_t}=\mathsf{Tr}(\bm{x}_{\hat{F}_t};\tau_t,\rho_t)$. Then for $\varepsilon_t=\sqrt{5\log T/(2t)}$ and $\tau_t=\varepsilon_t^{1/3}$, Theorem~\ref{thm:robustefficiency2} gives $(\beta,\bm{\kappa})$-robustly revenue efficiency for $\beta=\sum_{t\in[T]}t\exp\left(-t\varepsilon_t/C\right)=\mathcal{O}(T^{-1/2})$ and $\kappa_t=K\Bar{q}(\varepsilon_t/\sqrt{\tau_t}+\varepsilon_t^2/\tau_t + \tau_t)=\mathcal{O}(K\qmax\varepsilon_t^{1/3})$. This gives $\Pr(\mathcal{E}')\geq 1-\mathcal{O}(T^{-1/2})$.

Also, we have $\forall G\in\mathcal{C}'(\hat{F}_t)$ the following by definition 
       \begin{align*}
 \left| \E_G[v|k]-\E_{\hat{F}_t}[v|k]
\right| &\leq \varepsilon_t/\sqrt{\tau_t}+\varepsilon_t^2/\tau_t=\rho_t/\bar{q} \forall \ i\in[I]\backslash\bm{z}_{\tau}.
    \end{align*}
    For $\tilde{x}_{\hat{F}_t}$ using Lemma~\ref{lemma:samplecomplexityimp}, we get 
    \begin{align*}
        \texttt{GAP}(\tilde{x}_{\hat{F}_t},\hat{F}_t,\mathcal{C}'(\hat{F}_t)) \leq K\Bar{q}(\varepsilon_t/\sqrt{\tau_t}+\varepsilon_t^2/\tau_t + \tau_t)=\kappa_t.
    \end{align*}
    
Using DKW we also have
\begin{align*}
    1-\text{Pr}(\mathcal{E})\leq \sum_{t\in[T]}2\exp(-2t\xi_t^2).
\end{align*}
For $\xi_t = \sqrt{\frac{\log(2T^{3/2})}{2t}}$ we get $\text{Pr}(\mathcal{E})\geq 1-\mathcal{O}(T^{-1/2})$ and therefore $\text{Pr}(\mathcal{E}\cap\mathcal{E}')\geq 1-\mathcal{O}(T^{-1/2})$.

\textit{Regret upper bound}:  $F^b_t\in\mathcal{C}(\hat{F}_t)$ by Assumption~1. Under the event $\mathcal{E}\cap\mathcal{E}'$ decompose the $t^{\text{th}}$-regret term as follows
\begin{align*}
    \texttt{OPT}(F^\star) - Rev(\tilde{\bm{x}}_{\hat{F}_t};F^b_t,F^\star)
    &= \underbrace{\bigl[\texttt{OPT}(F^\star)-\texttt{OPT}(\hat{F}_t)\bigr]}_{\mathrm{(A)}}
    +\underbrace{\bigl[\texttt{OPT}(\hat{F}_t)-Rev(\tilde{\bm{x}}_{\hat{F}_t};F^b_t,\hat{F}_t)\bigr]}_{\mathrm{(B)}}\\
    &+\underbrace{\bigl[Rev(\tilde{\bm{x}}_{\hat{F}_t};F^b_t,\hat{F}_t)-Rev(\tilde{\bm{x}}_{\hat{F}_t};F^b_t,F^\star)\bigr]}_{\mathrm{(C)}}.
\end{align*}

 Since $F^b_t\in\mathcal{C}'(\hat{F}_t)$, we get $  \mathrm{(B)}\leq \texttt{GAP}(\tilde{x}_{\hat{F}_t},\hat{F}_t,\mathcal{C}'(\hat{F}_t)) \leq \kappa_t$\\
 For $\mathrm{(C)}$ the buyer's decision is the same, so it can be upper bounded by $\Bar{q}\sup_{v}|F^\star(v)-\hat{F}_t(v)|\sum_{k=1}^Kd_{\mathrm{TV}}(g_k)\leq 2\Bar{q}\xi_tK$ since $g$ has atmost 2 jumps.\\
Define $\tilde{\bm{x}}_{F^\star}=\mathsf{Tr}(\bm{x}_{F^\star};\tau_t,\rho_t)$ for $\bm{x}_{F^\star}=((K^\star,\bm{t}^\star),(p_i^\star,q_i^\star)_{i=1}^K)$ solution of $\texttt{L}_{F^\star}$. Then $\forall G\in\mathcal{C}'(F^\star) $ we have the following by definition
       \begin{align*}
 \left| \E_G[v|k]-\E_{F^\star}[v|k]
\right| &\leq \rho_t/\Bar{q} \quad  \forall \ k\in[K^\star]: t^\star_k-t^\star_{k-1}\geq \tau_t,
    \end{align*}
    and using Lemma~\ref{lemma:samplecomplexityimp}, we get 
    \begin{align*}
        \texttt{GAP}(\tilde{\bm{x}}_{F^\star},F^\star,\mathcal{C}'(F^\star)) \leq \kappa_t.
    \end{align*}
   Recall that under $\mathcal{E}\cap\mathcal{E}'$, we also have $\hat{F}_t\in \mathcal{C}'(F^\star)\ \forall t\in[T]$. 
    For $\mathrm{(A)}$ we perform the following decomposition
\begin{align*}
    \mathrm{(A)}
    &\leq\underbrace{\bigl[\texttt{OPT}(F^\star)-Rev(\tilde{\bm{x}}_{F^\star};\hat{F}_t,F^\star)\bigr]}_{\leq\,\texttt{GAP}(\tilde{\bm{x}}_{F^\star},F^\star,\mathcal{C}'(F^\star))\,\leq\,\kappa_t}
    +\underbrace{\bigl[Rev(\tilde{\bm{x}}_{F^\star};\hat{F}_t,F^\star)-Rev(\tilde{\bm{x}}_{F^\star};\hat{F}_t,\hat{F}_t)\bigr]}_{\leq 2\Bar{q}\xi_tK\ \text{ using same TV argument as (C)}}\\
    &+\underbrace{\bigl[Rev(\tilde{\bm{x}}_{F^\star};\hat{F}_t,\hat{F}_t)-\texttt{OPT}(\hat{F}_t)\bigr]}_{\leq\,0\ 
    \text{due to suboptimality of }\tilde{\bm{x}}_{F^\star}\text{ under }\hat{F}_t}\\
    &\leq\kappa_t+2\Bar{q}\xi_tK.
\end{align*}
 Combining we get overall 
\begin{align*}
    \texttt{OPT}(F^\star)-Rev(\tilde{\bm{x}}_{\hat{F}_t};F^b_t,F^\star)\leq 2\kappa_t+4\Bar{q}\xi_tK.
\end{align*}

Finally, define $Y_t:=\texttt{rev}_t(\tilde{x}_{\hat{F}_t};F^b_t)-Rev(\tilde{\bm{x}}_{\hat{F}_t};F^b_t,F^\star)$. Since $\tilde{\bm{x}}_{\hat{F}_t}$ and $F^b_t$ are $\mathcal{H}_t$-measurable and $v_t|\mathcal{H}_t\sim F^\star$, we have $\E[Y_t|\mathcal{H}_t]=0$ and $|Y_t|\leq \Bar{q}$ where $\mathcal{H}_t$ denotes the $\sigma$-algebra of all the histories upto time $t$.  By Azuma-Hoeffding, we get
\begin{align*}
    \text{Pr}\left(\sum_{t=1}^T(-Y_t)\geq z\Bar{q}\right)\leq\exp\!\left(-\frac{z^2}{2T}\right).
\end{align*}
Setting $z=\sqrt{T\log T}$ gives $T^{-1/2}$ upper bound.

Thus overall on the event $\mathcal{E}\cap\mathcal{E}'\cap\{\sum_t(-Y_t)\leq z\Bar{q}\}$, which now holds with probability $\geq 1-\mathcal{O}(T^{-1/2})$, we get 
\begin{align*}
    \texttt{Reg}
    =\sum_{t=1}^T\bigl(\texttt{OPT}(F^\star)-\texttt{rev}_t\bigr)
    &=\sum_{t=1}^T\bigl(\texttt{OPT}(F^\star)-Rev(\tilde{\bm{x}}_{\hat{F}_t};F^b_t,F^\star)\bigr)+\sum_{t=1}^T(-Y_t)\\
    &\leq\sum_{t=1}^T(2\kappa_t+4\Bar{q}\xi_tK)+\Bar{q}\sqrt{T\log T}.
\end{align*}

Substituting $\tau_t=\varepsilon_t^{1/3}$ gives us $\kappa_t =K\Bar{q}(\varepsilon_t/\sqrt{\tau_t}+\varepsilon_t^2/\tau_t + \tau_t) = \mathcal{O}(K\bar q\varepsilon_t^{1/3})$ and since $\varepsilon_t=\sqrt{5C\log T/(2t)}$, we have $\varepsilon_t^{1/3}=(5C\log T/(2t))^{2/3}$ and $\sum_{t=1}^T\xi_t=\mathcal{O}
(\sqrt{3T\log T})$, which gives us 
\begin{align*}
    \texttt{Reg}\leq \mathcal{O}\left(K\Bar{q}(\log T)^{1/3}T^{2/3}\right)
\end{align*}
\end{proof}

\section{Proofs for the FPTAS}\label{sec:appfptas}
\begin{lemma}
    Consider any quantile interval $[t_1,t_2]$ with $0\leq t_1< t_2\leq 1$, and define $
        \tilde t_1 := \varepsilon \lfloor \tfrac{t_1}{\varepsilon}\rfloor,
        \tilde t_2 := \varepsilon \lfloor \tfrac{t_2}{\varepsilon}\rfloor.$
    Then $  |(t_2-t_1)-(\tilde t_2-\tilde t_1)|\leq \varepsilon.$
    Moreover, let $\mu(t_1,t_2) = \dfrac{\int_{t_1}^{t_2}Q(u)\mathrm{d}u}{t_2-t_1}$ and define $\mu(\tilde{t}_1,\tilde{t}_2)=\dfrac{\int_{\tilde{t}_1}^{\tilde{t}_2}Q(u)\mathrm{d}u}{\tilde{t}_2-\tilde{t}_1}$ and as $Q(\tilde{t}_1)$ whenever $\tilde{t}_1=\tilde{t}_2$. Then 
    \begin{align*}
         |\mu(t_1,t_2)
        -
        \mu(\tilde{t}_1,\tilde{t}_2)|\leq  \frac{3 \varepsilon}{t_2-t_1}.
    \end{align*}
\label{lemma:posteriorerrorquantilegrid}
\end{lemma}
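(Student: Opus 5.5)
The plan is to treat the two claims separately: first bound how much the interval length moves under rounding, then bound how much the integral of $Q$ moves, and finally combine these through the elementary ratio inequality $\left|\frac{A'}{B'}-\frac{A}{B}\right| \leq \frac{|A'-A|}{B} + \frac{A'}{B'}\frac{|B'-B|}{B}$, already used in the proof of \Cref{lem:conc}. Throughout I use only that $Q=F^{-1}$ is non-decreasing with values in $[0,1]$, since values lie in $\mathcal{V}=[0,1]$.

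\textbf{Step 1: interval length.} Write $\delta_i := t_i-\tilde t_i$. Flooring onto the grid gives $\delta_i\in[0,\varepsilon)$ for $i=1,2$. Then
\begin{align*}
(t_2-t_1)-(\tilde t_2-\tilde t_1)=\delta_2-\delta_1\in(-\varepsilon,\varepsilon).
\end{align*}
This is the first claim. I also record $\tilde t_2\geq \tilde t_1$, because flooring is monotone.

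\textbf{Step 2: the integral.} Set $I=\int_{t_1}^{t_2}Q$ and $\tilde I=\int_{\tilde t_1}^{\tilde t_2}Q$. Their difference is
\begin{align*}
I-\tilde I=\int_{\tilde t_2}^{t_2}Q-\int_{\tilde t_1}^{t_1}Q .
\end{align*}
Each of these two pieces lies in $[0,\varepsilon)$ because $0\le Q\le 1$. Hence $|I-\tilde I|\le\varepsilon$. A cruder bound of $2\varepsilon$ from the triangle inequality also suffices for the stated constant.

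\textbf{Step 3: combine.} Let $L=t_2-t_1$ and $\tilde L=\tilde t_2-\tilde t_1$, and split on whether $\tilde L$ vanishes.

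If $\tilde L>0$, apply the ratio inequality with $A=I$, $B=L$, $A'=\tilde I$, $B'=\tilde L$. This gives
\begin{align*}
|\mu(t_1,t_2)-\mu(\tilde t_1,\tilde t_2)| \le \frac{|I-\tilde I|}{L}+\mu(\tilde t_1,\tilde t_2)\frac{|L-\tilde L|}{L}.
\end{align*}
Using $\mu(\tilde t_1,\tilde t_2)\le 1$ together with Steps 1 and 2, the right-hand side is at most $\frac{2\varepsilon}{L}+\frac{\varepsilon}{L}=\frac{3\varepsilon}{L}$.

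If $\tilde L=0$, then by Step 1 we have $L<\varepsilon$, so $3\varepsilon/L>3$. Both $\mu(t_1,t_2)$ and the convention value $Q(\tilde t_1)$ lie in $[0,1]$, so their difference is at most $1$ and the bound holds trivially.

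The only delicate point I expect is this degenerate case together with the convention for $\mu(\tilde t_1,\tilde t_2)$. The rest is bookkeeping that relies on $Q$ being bounded in $[0,1]$.
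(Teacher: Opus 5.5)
Your proof is correct and follows the paper's argument essentially step for step. It uses the same floor bounds $t_i-\varepsilon<\tilde t_i\le t_i$ and the same degenerate case, where $\tilde t_1=\tilde t_2$ forces $t_2-t_1<\varepsilon$ so the trivial bound of $1$ suffices. It also uses the same ratio decomposition with $\int_{\tilde t_1}^{\tilde t_2}Q\le \tilde t_2-\tilde t_1$ (i.e., $Q\le 1$) to get $2\varepsilon/(t_2-t_1)+\varepsilon/(t_2-t_1)$. One small bonus: your remark that the two boundary integrals enter with opposite signs and each lie in $[0,\varepsilon)$ sharpens the paper's $2\varepsilon$ bound on the integral difference to $\varepsilon$, which would improve the constant from $3$ to $2$, though the statement does not need this.
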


\begin{proof}[Proof of Lemma~\ref{lemma:posteriorerrorquantilegrid}]
By definition, $\tilde{t}_1\leq t_1$ and $t_1-\varepsilon\leq \tilde{t}_1$ and similarly, $\tilde{t}_2\leq t_2$ and $t_2-\varepsilon\leq \tilde{t}_2$. Thus $t_{2}-t_{1}\leq \tilde{t}_2+\varepsilon-\tilde{t}_{1}$ and $t_2-t_{1}\geq \tilde{t}_2-\tilde{t}_{1}-\varepsilon$. Thus $|t_2-t_{1}-(\tilde{t}_2-\tilde{t}_{1})|\leq \varepsilon$. \\
Thus if $\tilde{t}_2-\tilde{t}_1=0$, then $t_2-t_1<\varepsilon$, then the inequality immediately follows
\begin{align*}
    |\mu(t_2,t_1)-Q(\tilde{t}_1)|\leq 1\leq 3 \varepsilon/(t_2-t_1).
\end{align*}
Otherwise for $\tilde{t}_2>\tilde{t}_1$ we have
\begin{align*}
    \left|\int_{t_{1}}^{t_2} Q(u)\mathrm{d}u -\int_{\tilde{t}_{1}}^{\tilde{t}_2} Q(u)\mathrm{d}u\right| &\leq  \left|\int_{t_{1}}^{\tilde{t}_{1}} Q(u)\mathrm{d}u\right| +\left|\int_{t_2}^{\tilde{t}_2} Q(u)\mathrm{d}u\right|\leq  2\varepsilon
\end{align*}
which gives
\begin{align*}
    |\mu(t_1,t_2) -\mu(\tilde{t}_1,\tilde{t}_1)|&\leq \dfrac{\left|\int_{t_{1}}^{t_2} Q(u)\mathrm{d}u -\int_{\tilde{t}_{1}}^{\tilde{t}_2} Q(u)\mathrm{d}u\right|}{t_2-t_{1}}+\int_{\tilde{t}_{1}}^{\tilde{t}_2} Q(u)\mathrm{d}u\left|\dfrac{1}{t_2-t_{1}}-\dfrac{1}{\tilde{t}_2-\tilde{t}_{1}}\right|\\
    &\leq \dfrac{2 \varepsilon}{t_1-t_{2}}+\int_{\tilde{t}_{1}}^{\tilde{t}_2} Q(u)\mathrm{d}u\left|\dfrac{1}{t_1-t_{2}}-\dfrac{1}{\tilde{t}_1-\tilde{t}_{2}}\right|\\
     &\leq \dfrac{2 \varepsilon}{t_2-t_{1}}+ (\tilde{t}_2-\tilde{t}_{1}) \dfrac{\varepsilon}{(t_2-t_{1})(\tilde{t}_2-\tilde{t}_{1})}\\
     &\leq 3 \varepsilon/\tau
\end{align*}

\end{proof}

\begin{theorem}[Correctness of Algorithm~\ref{alg:dp}]
For $\texttt{L}^s_F$ restricted to quantiles in $\mathcal{E}_t$ and qualities in $\mathcal{E}_q$, the 
Algorithm~\ref{alg:dp} outputs an optimal menu $\{(p_k, q_{j_k})\}_{k=1}^{k^\star}$  and  a valid information structure $\mathcal{I}=(k^\star,\bm{t})$.
\label{thm:dp_correctness}
\end{theorem}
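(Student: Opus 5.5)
Our approach is the standard exchange argument for dynamic programs. First we show that the grid-restricted problem $\texttt{L}^s_F$ has an additive objective along a chain of intervals. Then we prove by induction on $k$ that $\texttt{Rev}[k,r,j]$ equals the optimum of a suitably defined subproblem. Finally, backtracking and the payment formula yield a feasible direct menu whose revenue equals the DP value.

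\textbf{Step 1 (decomposition).} For a feasible solution with grid quantiles $0=t_{r_0}<t_{r_1}<\cdots<t_{r_I}=t_{n_t}=1$ and grid qualities $q_{j_1}\leq\cdots\leq q_{j_I}$ (with $j_0=0$, i.e.\ $q_0=0$), the objective of $\texttt{L}^s_F$ is $\sum_{i=1}^I \bar v_i(1-t_{r_{i-1}})(q_{j_i}-q_{j_{i-1}})-(t_{r_i}-t_{r_{i-1}})c(q_{j_i})$. Since $\bar v_i=\mu_F(r_{i-1},r_i)$ by definition of $A$, this sum equals $\sum_i \mathcal{U}[r_{i-1},r_i,j_{i-1},j_i]$. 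Each term depends only on consecutive pairs $(r_{i-1},j_{i-1}),(r_i,j_i)$.

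\textbf{Step 2 (induction).} We define $\mathrm{OPT}(k,r,j)$ as the maximum of $\sum_{i=1}^k\mathcal{U}[r_{i-1},r_i,j_{i-1},j_i]$ over strictly increasing grid indices $0=r_0<\cdots<r_k=r$ and nondecreasing quality indices $0=j_0\leq j_1\leq\cdots\leq j_k=j$. This maximum is $-\infty$ if the set is empty. We claim $\texttt{Rev}[k,r,j]=\mathrm{OPT}(k,r,j)$. The base case $k=0$ matches the initialization: only $(0,0,0)$ has the value $0$.

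For the inductive step, any feasible chain for $(k,r,j)$ splits into a chain for $(k-1,\ell,j')$ with $\ell<r$ and $j'\le j$, plus the last term $\mathcal{U}[\ell,r,j',j]$. Conversely, any such pair concatenates into a feasible chain. The recursion in line 3 maximizes over exactly these pairs, so the claim follows. Monotonicity of qualities is enforced by the constraint $j'\leq j$, and nonemptiness of intervals by $\ell<r$. We should also note that $m$ in the algorithm plays the role of $n_t$, and that $j_1\ge 1$ holds because the loop ranges over $j\geq 1$.

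Step 1 then shows that $\max_{k\le K,\,j}\texttt{Rev}[k,n_t,j]$ equals the optimum of $\texttt{L}^s_F$ restricted to the grids with at most $K$ pools.

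\textbf{Step 3 (reconstruction and validity).} The pointers reproduce an optimal chain $(r_k,j_k)$, so the output quantiles $t_{r_k}$ form a valid monotone partition as in Definition~\ref{def:1}. The prices in line 11 are the payment formula $p_k=\bar v_kq_k-\sum_{\ell<k}(\bar v_{\ell+1}-\bar v_\ell)q_\ell$. We must check that this menu is IR and IC under the posterior means $\mu_F(r_{k-1},r_k)$, and that its revenue $\sum_k (t_{r_k}-t_{r_{k-1}})(p_k-c(q_k))$ equals the DP objective. The revenue identity follows from Abel summation, rewriting $\sum_k (t_{r_k}-t_{r_{k-1}})p_k$ as $\sum_k \bar v_k(1-t_{r_{k-1}})(q_k-q_{k-1})$. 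For IC, posterior means of consecutive quantile pools are nondecreasing because $F^{-1}$ is nondecreasing, and qualities are monotone. Hence the standard single-crossing argument of \citet{maskin1984monopoly} applies: binding local downward constraints together with monotone qualities imply all global constraints, and IR for the lowest pool binds.

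\textbf{Main obstacle.} We expect the main difficulty in Step 3, specifically the equivalence between $\texttt{L}^s_F$ and the original program~\eqref{eq:opt1}. The derivation of $\texttt{L}^s_F$ only used necessary conditions, so we must confirm sufficiency: payment formula plus monotonicity implies IC and IR. The first thing to try is the telescoping identity
\[
U_k(k)-U_k(j)=\sum_{\ell=j}^{k-1}(\bar v_k-\bar v_{\ell+1})(q_{\ell+1}-q_\ell)\geq 0 \quad\text{for } j<k,
\]
where $U_k(j):=\bar v_kq_j-p_j$, together with the symmetric identity for $j>k$. Ties between equal posterior means in adjacent pools are harmless, since the seller-favorable tie-breaking resolves them.
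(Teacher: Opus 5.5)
Your proposal is correct and follows the paper's proof. You use the same value function (the paper's $V(k,r,j)$), the same induction on $k$ via the two inequalities, and then backtracking plus the payment formula; your Step 3 (Abel summation for the revenue, and the telescoping identities for IC/IR, which use monotone posterior means and qualities) just fills in what the paper asserts in one line. One small fix: put $j_i\ge 1$ for $i\ge 1$ into the definition of $\mathrm{OPT}(k,r,j)$, since $\texttt{Rev}[k,r,0]=-\infty$ for all $k\ge 1$; without it, the claimed equality $\texttt{Rev}[k,r,j]=\mathrm{OPT}(k,r,j)$ can fail for $k\ge 2$.
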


\begin{proof}
    For $k\in [K]$, $r\in [n_t]$ and $j\in[n_{q}]$, define 
    \begin{align*}
        V(k,r,j) := \max\left\{\sum_{i=1}^k\mathcal{U}[\ell_{i-1},\ell_i,j_{i-1},j_i] \Big| 0=\ell_0<\ldots<\ell_k=r, 0=j_0\right\}
    \end{align*}
which denotes the maximum revenue that can be obtain from partitioning $(0,t_r]$ into $k$ nonempty quantile grid intervals with the last interval assigned quality $q_j$. Define $V(0,0,0)=0$ and $V(0,r,j)=-\infty \ \forall (r,j)\neq (0,0)$. We claim $\texttt{Rev}[k,r,j]=V(k,r,j)\ \forall\ k,r,j$. \\
We will show this by an induction over $k$.\\
\textit{Base case} is when $k=0$, by initialization we have equality.\\
\textit{Inductive step} assume that $\texttt{Rev}[k-1,\cdot,\cdot]=V(k-1,\cdot,\cdot).$ The \texttt{DP} recursion gives us 
\begin{align*}
    \texttt{Rev}[k,r,j] = \max_{\substack{0\leq \ell<r\\0\leq j'\leq j}}\{\texttt{Rev}[k-1,\ell,j']+\mathcal{U}[\ell,r,j',j]\} = \max_{\substack{0\leq \ell<r\\0\leq j'\leq j}}\{V(k-1,\ell,j')+\mathcal{U}[\ell,r,j',j]\}
\end{align*}
We will first show that $\texttt{Rev}[k,r,j]\leq V(k,r,j)\ \forall r,j$. To do this, fix any $0\leq \ell<r$ and $0\leq j'\leq j$ and let $0=\ell_0<\ldots <\ell_{k-1}=\ell$ be a partition of intervals $(k-1)$ with quality assignment $0=j_0\leq \ldots\leq j_{k-1}=j'$ where the quantile interval indexed by $[\ell_{k-2},\ell_{k-1}]$ is assigned quality indexed $j_{k-1}$ and so on. Appending $(\ell,r]$ at the end of the quantile interval gives us a valid valid partition as $\ell<r$ and assigning quality $q_j$ to this interval is valid since it maintains monotonicity, such a scheme gives us revenue of $V(k-1,\ell,j')+\mathcal{U}[\ell,r,j',j]$. By feasibility we have $V(k-1,\ell,j')+\mathcal{U}[\ell,r,j',j]\leq V(k,r,j)\ \forall 0\leq \ell<r,0\leq j'\leq j$. Taking maximum on all such $\ell,j'$ gives us $\texttt{Rev}[k,r,j]\leq V(k,r,j) \forall r,j$.\\

 Now we show that $V(k,r,j)\leq \texttt{Rev}[k,r,j]\ \forall r,j$. Consider the $k$ partition $0\leq \ell_0<\ldots<\ell_k=r$ and the quality assignment of $0=j_0\leq \ldots\leq j_k=j$ that obtains the value $V(k,r,j)$. Set $\ell=\ell_{k-1}$ and $j'=j_{k-1}$. Then the partition $0=\ell_0<\ldots<\ell$ is a valid  $(k-1)$ partition with feasible quality assingment $0=j_0\leq \ldots \leq j'$ giving us $\sum_{i=1}^{k-1}\mathcal{U}[\ell_{i-1},\ell_i,j_{i-1},j_i] \leq V(k-1,\ell,j')$. Finally, by definition, we have 
\begin{align*}
    V(k,r,j) = \sum_{i=1}^{k-1}\mathcal{U}[\ell_{i-1},\ell_i,j_{i-1},j_i] + \mathcal{U}[\ell,r,j',j] \leq V(k-1,\ell,j')+\mathcal{U}[\ell,r,j',j]\leq \texttt{Rev}[k,r,j]
\end{align*}
where the last inequality follows from the \texttt{DP} update.

Optimality of the partitions and quality assignment follows from the greedy backward pass, and optimality  and \texttt{IC} of the menu follow from the payment formula. 
    
\end{proof}

\begin{theorem}
    Let $\bm{x} = \left((I,\bm{t}),(p_i,q_i)_{i\in[I]}\right)$ be the solution of $\texttt{L}_F$. Then the \texttt{DP} output $\bm{x}' = \left((I',\bm{t}'),(p'_i,q'_i)_{i\in[I]}\right)$ is such that $\texttt{Rev}(\bm{x}';F)\geq \texttt{Rev}(\bm{x};F)- \varepsilon_{\mathsf{OPT}}$ where the optimization error is given as  $\varepsilon_{\mathsf{OPT}}=I(6\qmax +1+\qmax)\sqrt{\varepsilon} $.
    \label{thm:errorduetogrids}
\end{theorem}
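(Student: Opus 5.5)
The plan is to compare against the optimal solution $\bm{x}$ by building a grid-restricted feasible solution $\bm{x}^{g}$ for $\texttt{L}^s_F$ (quantiles in $\mathcal{E}_t$, qualities in $\mathcal{E}_q$) whose objective is within $I(6\qmax+1+\qmax)\sqrt{\varepsilon}$ of $\texttt{Rev}(\bm{x};F)$. Then Theorem~\ref{thm:dp_correctness} gives that the \texttt{DP} output $\bm{x}'$ is optimal over the grid, so $\texttt{Rev}(\bm{x}';F)\geq \texttt{Rev}(\bm{x}^g;F)$, and the claim follows. The revenue of the grid solution is evaluated through the payment formula, so it equals the $\texttt{L}^s_F$ objective $\sum_i \bar v_i(1-t_{i-1})(q_i-q_{i-1})-(t_i-t_{i-1})c(q_i)$. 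I will need that this objective is a genuine revenue for an \texttt{IC}/\texttt{IR} menu. Since qualities stay monotone and posterior means of consecutive partition cells are monotone, the payment formula yields an incentive-compatible menu.

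\textbf{Construction.} Round each quantile $t_i$ down to $\tilde t_i=\varepsilon\lfloor t_i/\varepsilon\rfloor$, keeping $\tilde t_0=0$ and $\tilde t_I=1$. Round each quality down to the nearest grid point $\tilde q_i\geq \qmin$, which preserves monotonicity and incurs error at most $\varepsilon$. Collapsed intervals with $\tilde t_i=\tilde t_{i-1}$ are dropped, and the quality sequence is merged accordingly. Split the cells into \emph{large} cells with $t_i-t_{i-1}\geq\sqrt{\varepsilon}$ and \emph{small} ones.

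\textbf{Large cells.} Lemma~\ref{lemma:posteriorerrorquantilegrid} gives $|\bar v_i-\tilde{\bar v}_i|\leq 3\varepsilon/\sqrt{\varepsilon}=3\sqrt\varepsilon$. Each term of the objective then changes by at most $\qmax\cdot 3\sqrt\varepsilon$ from the posterior mean shift in the $(q_i-q_{i-1})$ factor; since this shift can hit both $q_i$ and $q_{i-1}$, I bound it by $6\qmax\sqrt{\varepsilon}$. It changes by at most $\varepsilon$ from $|t_{i-1}-\tilde t_{i-1}|$, by at most $\varepsilon$ from quality rounding, and by at most $\varepsilon\,c(\qmax)$ from the cost term, using $|(t_i-t_{i-1})-(\tilde t_i-\tilde t_{i-1})|\leq\varepsilon$ and Lipschitzness of $c$ on the grid scale. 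With $\varepsilon\leq\sqrt\varepsilon$, these collect into $(6\qmax+1+\qmax)\sqrt\varepsilon$ per cell.

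\textbf{Small cells and the main obstacle.} Small cells have mass below $\sqrt\varepsilon$, so their direct contribution to revenue is at most $\qmax\sqrt\varepsilon$. Their posterior means, however, are uncontrolled, and they enter the information-rent terms of all higher cells through the factor $(1-t_{i-1})(q_i-q_{i-1})$. I expect this to be the hardest step. My first attempt will rewrite the objective via the telescoped payment form $\sum_i q_i\big[\bar v_i(t_i-t_{i-1}) - (\bar v_{i+1}-\bar v_i)(1-t_i)\big]$, so that each cell's error is charged to itself and its neighbor. If that fails, I will instead merge each small cell into an adjacent large cell, assigning it the neighbor's quality. This keeps monotonicity and changes the objective by at most a mass-$\sqrt\varepsilon$ perturbation of that neighbor's mean, again $O(\qmax\sqrt\varepsilon)$ per cell. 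Summing over the at most $I$ cells gives $\varepsilon_{\mathsf{OPT}}=I(6\qmax+1+\qmax)\sqrt\varepsilon$.
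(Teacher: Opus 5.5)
Your overall frame is sound: build a grid-feasible solution and invoke \Cref{thm:dp_correctness}. The gap is at the step you flag yourself, and neither proposed fix closes it.

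In $\texttt{L}^s_F$ the objective depends on a pool mean $\bar v_j$ only through the coefficient $(1-t_{j-1})(q_j-q_{j-1})$. This coefficient does not scale with the pool's mass. Meanwhile \Cref{lemma:posteriorerrorquantilegrid} only gives $|\bar v_j-\tilde{\bar v}_j|\le 3\varepsilon/(t_j-t_{j-1})$. So for a cell of mass $\Theta(\varepsilon)$ that rounding down shifts, the error you must control can be of order $\qmax$, not $\qmax\sqrt\varepsilon$.

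Your first fix is only an algebraic identity. In $\sum_i q_i[\bar v_i(t_i-t_{i-1})-(\bar v_{i+1}-\bar v_i)(1-t_i)]$ the total coefficient of $\bar v_j$ is $q_j(1-t_{j-1})-q_{j-1}(1-t_{j-1})$, exactly as before, so nothing is localized.

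Your second fix rests on a wrong heuristic. Absorbing a cell of mass $m<\sqrt\varepsilon$ into the neighbor above (mass $M\ge\sqrt\varepsilon$) at quality $q_{j+1}$ lowers the objective by $(\bar v_{j+1}-\bar v_j)(1-t_{j-1})\bigl[\tfrac{m}{m+M}(q_{j+1}-q_{j-1})-(q_j-q_{j-1})\bigr]$, up to terms of order $m$. Since $m$ and $M$ can both be of order $\sqrt\varepsilon$, the fraction $\tfrac{m}{m+M}$ need not be small. A mass-$\sqrt\varepsilon$ perturbation of a pool's mean does not give an $O(\qmax\sqrt\varepsilon)$ revenue change, because that mean is multiplied by the mass of everything above it.

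Two smaller points. The cost term needs no Lipschitz assumption (the paper makes none), since rounding qualities down only lowers cost. Also, the $6\qmax$ in the statement does not come from $\bar v_i$ hitting both $q_i$ and $q_{i-1}$.

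The paper avoids this by never re-deriving prices from the rounded means.
\begin{itemize}
\item It keeps the original prices $p_i$.
\item For each cell with $t_i-t_{i-1}\ge\tau$ it applies \Cref{lemma:posteriorerrorquantilegrid} twice: once to the buyer's own item and once to every competing item. This is where $6\qmax\varepsilon/\tau$ comes from.
\item It uses the linear-additive discount $\tilde p_i=p_i-i\rho$ with $\rho=6\qmax\varepsilon/\tau+\varepsilon$. A buyer in such a cell then never defects to a lower index and pays at least $p_i-I\rho$.
\item Cells below $\tau$ are simply written off at cost $\qmax\tau$ each.
\item Setting $\tau=\sqrt\varepsilon$ and invoking \Cref{thm:dp_correctness} finishes the proof.
\end{itemize}
With prices frozen, a small cell's distorted mean cannot leak into what higher cells pay.

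If you want to keep your objective-based comparison, replace the magnitude bound by a sign argument: round every boundary $t_i$ \emph{up} to the grid. Because $F^{-1}$ is non-decreasing, moving $t_i$ up by $\delta_i\le\varepsilon$ weakly raises both $\bar v_i$ and $\bar v_{i+1}$ and weakly lowers total cost. The only loss then comes from the factor $(1-t_i)$ and is at most $\delta_i(q_{i+1}-q_i)$. Summing gives at most $\varepsilon\qmax$, with no small/large split at all; a cell that becomes empty can be dropped without loss.
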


\begin{proof}[Proof of Theorem~\ref{thm:errorduetogrids}]
    We use the following decomposition
    \begin{align*}
        Rev(\bm{x};F) - Rev(\bm{x}';F) =  Rev(\bm{x};F) -  Rev(\bm{\tilde{x}};F) +\underbrace{{Rev(\bm{\tilde{x}};F)-Rev(\bm{x}';F)}}_{\leq 0\ \text{(by correctness of \texttt{DP})}}
    \end{align*}
    where $\bm{\tilde{x}}\gets \mathsf{modification}(\bm{x})$ such that for $\bm{\tilde{x}}=((I,\tilde{\bm{t}}),(\tilde{p}_i,\tilde{q}_i)_{i\in[I]})$ where each component of $\tilde{\bm{t}}$ and each $\tilde{q}_i$ lie on the grid $\mathcal{E}_{t}$ and $\mathcal{E}_{q}$ respectively, with $\tilde{p}_i$ defined later. In particular let $\tilde{q}_i=\varepsilon \lfloor \frac{q_i}{\varepsilon}\rfloor$ and $\tilde{t}_i=\varepsilon \lfloor \frac{t_i}{\varepsilon}\rfloor$. Thus the monotonicity of $\tilde{q}_i$ is preserved.  Based on the modification, we have $\tilde{q}_i\leq q_i$ and $q_i-\varepsilon\leq \tilde{q}_i$ and similarly, $\tilde{t}_i\leq t_i$ and $t_i-\varepsilon\leq \tilde{t}_i$.
    Let $\bm{z}_{\tau}=\{i\in [I]:t_{i-1}-t_i<\tau\}$.\\
    Consider $i\not\in \bm{z}_{\tau}$. Then using Lemma~\ref{lemma:posteriorerrorquantilegrid} we get
    \begin{align*}
        \tilde{q}_i\cdot   \mu(\tilde{t}_{i-1},\tilde{t}_i) - p_i&\geq q_i\cdot   \mu(t_{i-1},t_i) - p_i -3\qmax   \varepsilon/(t_{i-1}-t_i)-\varepsilon \\
        &\geq q_i\cdot   \mu(t_{i-1},t_i) - p_i -3\qmax   \varepsilon/\tau-\varepsilon \\
        &\geq \max\{0,\max_j\{q_j\cdot \mu(t_{i-1},t_i)-p_j\}\}-3\qmax   \varepsilon/\tau-\varepsilon \\
        &\geq \max\{0,\max_j\{q_j\cdot \mu(\tilde{t}_{i-1},\tilde{t}_i)-p_j\}\}-6\qmax   \varepsilon/\tau-\varepsilon \\
         &\geq \max\{0,\max_j\{\tilde{q}_j\cdot \mu(\tilde{t}_{i-1},\tilde{t}_i)-p_j\}\}-6\qmax   \varepsilon/\tau-\varepsilon 
    \end{align*}
    Let $\tilde{p}_i = p_i-i\rho$ for $\rho=6\qmax   \varepsilon/\tau+\varepsilon $. Then observe that for all $j<i$
    \begin{align*}
        \mu(\tilde{t}_{i-1},\tilde{t}_i)\cdot \tilde{q}_i-\tilde{p}_i&=    \mu(\tilde{t}_{i-1},\tilde{t}_i)\cdot \tilde{q}_i-p_i+i\rho \\
        &\geq \mu(\tilde{t}_{i-1},\tilde{t}_i)\cdot \tilde{q}_j - p_j+i\rho - \rho\\
        &= \mu(\tilde{t}_{i-1},\tilde{t}_i)\cdot \tilde{q}_j-\tilde{p}_j+(i-j)\rho-\rho\\
        &\geq\mu(\tilde{t}_{i-1},\tilde{t}_i)\cdot \tilde{q}_j - \tilde{p}_j
    \end{align*}
    Therefore, buyer upon receiving signal $i\not\in\bm{z}_{\tau}$ agrees to pay at least $p_i-I\rho$, generating a margin of $p_i-c(q_i)-I\rho$, therefore, 
    \begin{align*}
        Rev(\bm{x};F) -  Rev(\bm{\tilde{x}};F)\leq I(\rho+\qmax\tau)
    \end{align*}
    where for upper bound we assume the worst case that buyers with signal $i\in\bm{z}_{\tau}$ do not purchase anything.  Setting $\tau = \sqrt{\varepsilon}$, we get 
    \begin{align*}
          Rev(\bm{x};F) - Rev(\bm{x}';F)\leq I(6\qmax +1+\qmax)\sqrt{\varepsilon} 
    \end{align*}

 \end{proof}

\begin{theorem}
  For discrete $F$ with $\mathcal{O}(m)$ atoms,  Algorithm~\ref{alg:dp} has a runtime complexity of  $\mathcal{O}\left(K\cdot m^2\cdot n_q^2\right)$ which is $\mathcal{O}(1/\varepsilon^4)$ for $m=\mathcal{O}(1/\varepsilon)$, and has memory complexity $\mathcal{O}(K\cdot m\cdot n_q)$.
    \label{thm:complexities}
\end{theorem}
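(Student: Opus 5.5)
The plan is to count work in two phases: a precomputation phase for the quantities the recursion reads, and the dynamic-programming phase itself. Here $m$ is the number of quantile grid points $n_t+1$ actually used, which is what the loop in \Cref{alg:dp} ranges over. With $\mathcal{O}(m)$ atoms, the quantile function $F^{-1}$ is a step function with $\mathcal{O}(m)$ pieces. We first compute $A(t_r)=\int_0^{t_r}F^{-1}(s)\,\mathrm{d}s$ for every grid point by a single sweep that merges the sorted atom locations with the grid. This costs $\mathcal{O}(m)$ time after sorting, since each grid cell or atom boundary is crossed once and contributes a piecewise-linear increment.

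With the prefix values in hand, each $\mu_F(\ell,r)$ is an $\mathcal{O}(1)$ difference quotient. Hence each transition utility $\mathcal{U}[\ell,r,j',j]$ is computed in $\mathcal{O}(1)$ on demand, given $c(q_j)$, which we precompute for all $n_q+1$ grid qualities. We never need to store the full four-index table $\mathcal{U}$, which would cost $\mathcal{O}(m^2n_q^2)$ memory. Instead $\mathcal{U}$ is evaluated inside the max.

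For the main loop, there are $K\cdot m\cdot n_q$ states $(k,r,j)$. Each update in line 3 maximizes over at most $m$ choices of $\ell<r$ and at most $n_q+1$ choices of $j'\le j$, each in $\mathcal{O}(1)$. This gives $\mathcal{O}(K m^2 n_q^2)$ time. The final argmax over $(k,j)$ costs $\mathcal{O}(Kn_q)$. The backtracking pass costs $\mathcal{O}(K)$. The payment computation in the last loop costs $\mathcal{O}(K^2)$, or $\mathcal{O}(K)$ with running sums. All of these are dominated.

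Memory consists of the tables $\texttt{Rev}$ and $\texttt{ptr}$, each of size $\mathcal{O}(Kmn_q)$, plus the $\mathcal{O}(m+n_q)$ prefix arrays. This gives $\mathcal{O}(Kmn_q)$. Finally, we substitute $n_q=\lceil(\qmax-\qmin)/\varepsilon\rceil=\mathcal{O}(1/\varepsilon)$ and $m=\mathcal{O}(1/\varepsilon)$, treating $K$ and $\qmax$ as constants, to get $\mathcal{O}(1/\varepsilon^4)$.

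The only step needing care is justifying $\mathcal{O}(1)$ access to $\mu_F(\ell,r)$ through the prefix integrals $A(t_r)$, rather than recomputing integrals of $F^{-1}$ per pair. The rest is direct counting.
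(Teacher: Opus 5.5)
Your proposal is correct and follows the paper's proof. Both precompute $A(t_r)$ at the grid points so that each $\mu_F(\ell,r)$ costs $\mathcal{O}(1)$, count $K m n_q$ states each maximizing over $\mathcal{O}(m n_q)$ pairs $(\ell,j')$, observe that the backward pass and the price computation are dominated, charge memory to the \texttt{Rev} and \texttt{ptr} tables, and substitute $m,n_q=\mathcal{O}(1/\varepsilon)$. Your point that $\mathcal{U}[\ell,r,j',j]$ should be evaluated on the fly rather than tabulated is a worthwhile addition: the main text says this table is stored, which would cost $\mathcal{O}(m^2 n_q^2)$ memory and break the stated memory bound.
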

\begin{proof}[Proof of Thm~\ref{thm:complexities}]
    \textit{Complexity.}
Algorithm~\ref{alg:dp} preprocesses $A(\cdot)$ at all $m+1$ grid points in $\mathcal{O}(m+N)$ time, so that each $\mu_F(\ell,r)$ is evaluated in $\mathcal{O}(1)$.  The DP table has $K\cdot m\cdot n_q$ entries where filling each entry requires maximizing over $O(m\cdot n_q)$ pairs $(\ell,j')$, giving a total forward-pass runtime complexity of $\mathcal{O}\left(K\cdot m^2\cdot n_q^2\right)$.
The backward pass runs in $ O(K)$ with price computation in $O(K^2)$, and the signaling scheme in $O(K\cdot N)$, all dominated by the forward pass.  The grid $\{t_0,\ldots,t_m\}$ can be chosen freely, choosing $m$ points with maximum spacing $\Delta:=\max_i(t_i-t_{i-1})=\mathcal{O}(\varepsilon)$ requires $m=\mathcal{O}(1/\varepsilon)$.  With $n_q=\mathcal{O}((\qmax-\qmin)/\varepsilon)$, the total time is thus $\mathcal{O}(K/\varepsilon^4)$ which polynomial in $1/\varepsilon$ for fixed $K$, $\qmax$, and $\qmin$.

 Memory complexity is $\mathcal{O}(K\cdot m\cdot n_q)$ for the \texttt{Rev} and \texttt{ptr} tables. 
\end{proof}

\begin{theorem}[{Additive-FPTAS}] For any $\varepsilon_{\mathsf{OPT}}>0$, given a discrete distribution with support size $N$, \Cref{alg:dp} runs in $\mathsf{poly}(K,N,\overline{q},1/\varepsilon_{\mathsf{OPT}})$ time and outputs a menu $\{(p_k, q_{j_k})\}_{k=1}^{k^\star}$  and
    a monotone paritional signaling scheme $\mathcal{I}=(k^\star,(t_{r_k}:k\in [k^\star]))$ with revenue exceeding ${Rev}^\star(F)-\varepsilon_{\mathsf{OPT}}$ by setting the resolution of the grid as $\varepsilon = (\varepsilon_{\mathsf{OPT}})^2/(K(6\qmax +1+\qmax))^2$.
\end{theorem}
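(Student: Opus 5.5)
The proof combines three results already established in the appendix. Fix the grid resolution $\varepsilon = (\varepsilon_{\mathsf{OPT}})^2/(K(6\qmax+1+\qmax))^2$ and let $\bm{x}$ be an optimal solution of $\texttt{L}_F$. By the theorem of \citet{bergemann2026screening}, $\bm{x}$ uses some $I\le K$ signals and is monotone partitional, so it is a feasible competitor in the quantile-space program $\texttt{L}^s_F$.

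The plan has three steps.

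\textbf{Step 1 (revenue loss from rounding).} Apply Theorem~\ref{thm:errorduetogrids} to $\bm{x}$. Rounding the quantiles and qualities down to the grids $\mathcal{E}_t,\mathcal{E}_q$ gives a grid-restricted solution $\tilde{\bm{x}}$. This uses the additive price discount $\tilde p_i=p_i-i\rho$ with $\tau=\sqrt{\varepsilon}$, and bundles rare signals into the null signal, exactly as in Lemma~\ref{lemma:samplecomplexityimpA}. The resulting loss is at most $I(6\qmax+1+\qmax)\sqrt{\varepsilon}$. Since $I\le K$, substituting the chosen $\varepsilon$ makes this at most $\varepsilon_{\mathsf{OPT}}$.

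\textbf{Step 2 (DP output beats the rounded solution).} Invoke Theorem~\ref{thm:dp_correctness}: the DP returns an optimum of $\texttt{L}^s_F$ restricted to the grids, so $Rev(\bm{x}';F)\ge Rev(\tilde{\bm{x}};F)$. Here $\bm{x}'$ is the DP output, and it is a genuine monotone partitional scheme with an incentive compatible menu, the latter by the payment formula. Chaining Steps 1 and 2 gives $Rev(\bm{x}';F)\ge Rev^\star(F)-\varepsilon_{\mathsf{OPT}}$.

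One subtlety needs checking. The rounded $\tilde{\bm{x}}$ has its null signal and discounted prices, which are not literally of the form the DP enumerates. I need to argue that its revenue is dominated by a grid-restricted monotone partitional scheme that the DP evaluates. The plan is to merge each null (bad) interval into a neighbouring grid interval, or to observe that the DP value with the tight payment formula weakly exceeds any IC menu on the same partition and qualities. The paper's own Theorem~\ref{thm:errorduetogrids} glosses over this point. I expect this comparison, together with the fact that rounded intervals may collapse to empty ones (fewer pieces, handled by maximizing over $k\le K$), to be the main obstacle. My first approach would be to argue via the payment formula, which is optimal for fixed partition and qualities.

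\textbf{Step 3 (running time).} Use Theorem~\ref{thm:complexities}. The runtime is $\mathcal{O}(K m^2 n_q^2)$, with $m=n_t=1/\varepsilon$ and $n_q=\lceil(\qmax-\qmin)/\varepsilon\rceil$. Preprocessing $A(\cdot)$ on the grid costs $\mathcal{O}(m+N)$, and producing the signaling scheme costs $\mathcal{O}(KN)$. Since $1/\varepsilon=\mathsf{poly}(K,\qmax,1/\varepsilon_{\mathsf{OPT}})$, the total time is $\mathsf{poly}(K,N,\qmax,1/\varepsilon_{\mathsf{OPT}})$.

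For a discrete $F$, the value-space implementation of the output quantiles follows the conversion in \Cref{sec:conv}. That conversion preserves signal probabilities and posterior means, so the revenue computed in quantile space equals the revenue actually attained.
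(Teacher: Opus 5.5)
Your proposal is the paper's proof. The paper also just chains Theorems~\ref{thm:dp_correctness}, \ref{thm:errorduetogrids} and \ref{thm:complexities}, with $I\le K$ making $I(6\qmax+1+\qmax)\sqrt{\varepsilon}\le\varepsilon_{\mathsf{OPT}}$ at the stated resolution. The domination step you flag is indeed left implicit there, and your payment-formula route closes it: the proof of Theorem~\ref{thm:errorduetogrids} does not actually bundle bad signals into a \texttt{null} signal (it keeps them as grid intervals and only assumes they buy nothing), so the rounded scheme is already grid monotone partitional, realized purchases are monotone in the signal, consecutive signals buying the same item merge into grid intervals, and payment-formula prices are the largest \texttt{IC}/\texttt{IR} prices for that allocation; the one residual detail is that bottom bad signals may refuse to buy, so the first DP interval should be allowed the outside option $q_0=0$, which the loop over $j=1,\dots,n_q$ excludes.
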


\begin{proof}
 The proof follows by a combination of earlier results, namely, Theorems~ \ref{thm:dp_correctness}, \ref{thm:errorduetogrids}, and \ref{thm:complexities}.
\end{proof}

\section{Equivalent Representations of Signaling Schemes}\label{app:equiv}
\begin{lemma}
    Consider the following equivalent representation of the signaling scheme.
    \begin{enumerate}[itemindent=1.3em, leftmargin=0em, itemsep=0em]
        \item Quantile space representation: $\mathcal{S} = \{r_i\}_{i=1}^{[I]}$ with the signaling scheme as\\ 
        $\pi(r_i|u) = \bm{1}\{u\in [t_{i-1},t_i]\}$ for all $u\sim \texttt{Unif}[0,1]$ where $t_0=1-t_I=0$ with  $t_{i-1}\leq t_i$ and $t_i\in [0,1]\ \forall\ i\in[I]$. 
        \item Value space representation: $\mathcal{S} = \{s_i\}_{i=1}^{[I]}$ $w_0\leq w_1\leq\cdots\leq w_I$, $\xi_i\in[0,1]$ with $\xi_0=0$, $\xi_I=1$, and $\xi_{i-1}\leq\xi_i$ whenever $w_{i-1}=w_i$, with the signaling scheme as
        \begin{align*}
            \pi(s_i|v) = \begin{cases}
                (1-\xi_{i-1})\mathbf{1}\{v=w_{i-1}\}+\mathbf{1}\{w_{i-1}<v<w_i\}+\xi_i\mathbf{1}\{v=w_i\} & \text{if}\ w_{i-1}<w_i,\\
                (\xi_i-\xi_{i-1})\mathbf{1}\{v=w_{i-1}\} & \text{if}\ w_{i-1}=w_i.
            \end{cases}
        \end{align*}
        for all $v\sim F$.
    \end{enumerate}
    Then for $t_i = (1-\xi_i)F(w_i^-)+\xi_iF(w_i)$, the posterior mean under signal $s_i$ and signal $r_i$ is equal to $\int_{t_{i-1}}^{t_i}Q(t)\mathrm{d}t/(t_i-t_{i-1})$ where $Q(t) = \inf\{v:F(v)\geq t\}$. \label{lemma:posteriormeanequivalence}
\end{lemma}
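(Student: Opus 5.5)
The plan is to reduce everything to the quantile coupling $V = Q(U)$ with $U\sim\texttt{Unif}[0,1]$, which has law $F$. I will show that the value-space scheme $\pi(s_i\mid v)$ is the conditional law of the quantile-space signal $r_i$ given $Q(U)=v$. In other words, I will exhibit the joint law of $(V,s)$ as the pushforward of the joint law of $(U,r)$ under $(u,r_i)\mapsto(Q(u),s_i)$. Once this is done, posterior means agree automatically:
\begin{align*}
\E[V\mid s_i]=\frac{\E[Q(U)\mathbf{1}\{U\in[t_{i-1},t_i]\}]}{\Pr(U\in[t_{i-1},t_i])}=\frac{\int_{t_{i-1}}^{t_i}Q(t)\,\mathrm{d}t}{t_i-t_{i-1}}.
\end{align*}

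The key structural fact is the description of the preimage $Q^{-1}(\{w\})$. For an atom $w$, i.e. $F(w)>F(w^-)$, this preimage is the interval $(F(w^-),F(w)]$, up to an endpoint of measure zero. For a non-atom, the preimage has measure zero. Moreover, $Q(u)\in(w_{i-1},w_i)$ holds exactly when $u\in(F(w_{i-1}),F(w_i^-))$, again up to null sets; this uses monotonicity of $Q$ and the Galois relation $Q(u)\le v\iff u\le F(v)$. Given $V=w$ an atom, $U$ is uniform on $(F(w^-),F(w)]$. Hence the conditional probability that $U\le t_i$ equals $(t_i-F(w_i^-))/(F(w_i)-F(w_i^-))=\xi_i$, by the definition $t_i=(1-\xi_i)F(w_i^-)+\xi_iF(w_i)$.

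I will then verify the three pieces of $\pi(s_i\mid v)$ in the case $w_{i-1}<w_i$.
\begin{itemize}
\item At $v=w_{i-1}$, the conditional probability that $U\in[t_{i-1},t_i]$ is $1-\xi_{i-1}$, because $t_i\ge F(w_{i-1})$.
\item On the open interval $(w_{i-1},w_i)$, the conditional probability is $1$.
\item At $v=w_i$, the conditional probability is $\xi_i$.
\end{itemize}
If $w_{i-1}=w_i$, the conditional probability is $\xi_i-\xi_{i-1}$, which is nonnegative by the ordering assumption on the $\xi$'s.

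When an endpoint $w$ is not an atom, $\xi$ is irrelevant: the point has $F$-measure zero, and $t=F(w)$ regardless of $\xi$. This gives the same integrals, so the identity $\int_{t_{i-1}}^{t_i}Q = \E_F[v\,\pi(s_i\mid v)]$ still holds. Equivalently, I can write $\E_F[v\,\pi(s_i\mid v)]=\int_0^1 Q(u)\,\pi(s_i\mid Q(u))\,\mathrm{d}u$ and check directly that $\pi(s_i\mid Q(u))$ integrates to the same thing as $\mathbf{1}\{u\in[t_{i-1},t_i]\}$ on each atom block. The same computation without the factor $v$ gives the marginal $t_i-t_{i-1}$.

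The main obstacle is the bookkeeping at atoms. One has to handle endpoint conventions, closed versus half-open intervals, and the degenerate cases $w_{i-1}=w_i$ and $\xi\in\{0,1\}$. To make this rigorous, I would first reduce to showing $\int_0^1 h(Q(u))\,\pi(s_i\mid Q(u))\,\mathrm{d}u=\int_{t_{i-1}}^{t_i}h(Q(u))\,\mathrm{d}u$ for $h\in\{1,v\}$. I would then split $[0,1]$ into the atom blocks $(F(w^-),F(w)]$ and the complementary continuous part. On each atom block, $h(Q)$ is constant, so only lengths matter, and the length identity is exactly the definition of $t_i$. On the continuous part, the two integrands agree pointwise almost everywhere.
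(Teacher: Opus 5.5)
Your proposal is correct and follows essentially the paper's route: both compute the numerator and denominator under $s_i$ through the change of variables $v=Q(u)$. Both split the support into the two endpoint atoms and the open interval $(w_{i-1},w_i)$, and use $t_i=(1-\xi_i)F(w_i^-)+\xi_iF(w_i)$ to match the endpoint atom masses with the quantile lengths $F(w_{i-1})-t_{i-1}$ and $t_i-F(w_i^-)$. Your coupling framing, which shows $\pi(s_i\mid v)$ is the conditional law of $r_i$ given $Q(U)=v$, packages this slightly more cleanly. It also handles explicitly the non-atomic endpoints and the interior atoms of $(w_{i-1},w_i)$, which the paper glosses over with its remark that $F$ is continuous there.
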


\begin{proof}[Proof of Lemma~\ref{lemma:posteriormeanequivalence}]
In the value space, it is a little messy and can be written as $\pi(r_i|v) = \lambda([F(v^-),F(v)]\cap [t_{i-1},t_i])/(F(v)-F(v^-))$, where $\lambda(\cdot)$ captures the length of the interval, concretely, a Lebesgue measure. The posterior expected value under $r_i$ is $\texttt{N}/\texttt{D}$ where
    \begin{align*}
        \texttt{N} = \int_{0}^1 Q(u)\bm{1}\{u\in[t_{i-1},t_i]\} \mathrm{d}u = \int_{t_{i-1}}^{t_i} Q(u)\mathrm{d}u
    \end{align*} where we used $v=Q(u)$ and
    \begin{align*}
        \texttt{D}=\int_{0}^1 \bm{1}\{u\in[t_{i-1},t_i]\} \mathrm{d}u =t_{i}-t_{i-1}
    \end{align*}

Consider first the case when $w_{i-1}<w_i$, this implies $F$ is continuous and strictly increasing on $(w_{i-1},w_i)$.  Then the posterior expected value under $s_i$ is $\texttt{N}'/\texttt{D}'$ where
        \begin{align*}
            \texttt{N}' = \int_{0}^{\vmax} v\mathrm{d}F(v) = w_{i-1} (1-\xi_{i-1})(F(w_{i-1})-F(w_{i-1}^-))+\int_{(w_{i-1},w_i)} v\mathrm{d}F(v) + w_i\xi_i(F(w_{i})-F(w_{i}^-)) \end{align*}
            Note by the definition of $Q(\cdot)$ we have $Q(t) = w_{i-1}\ \forall t\in [t_{i-1},F(w_{i-1})]$ and $Q(t) = w_i\ \forall t\in[F(w_i^-),t_i]$. Therefore $\int_{(w_{i-1},w_i)}v\mathrm{d}F(v) = \int_{F(w_{i-1})}^{F(w_i^-)}Q(t)\mathrm{d}t$, substituting this in $\texttt{N}'$ and writing $\xi_i$ in terms of $t_i$ gives us $\texttt{N}'=\texttt{N}$. Also,
            \begin{align*}
                \texttt{D}' = \int_{0}^{\bar{v}}\mathrm{d}F(v)&=  (1-\xi_{i-1})(F(w_{i-1})-F(w_{i-1}^-))+\int_{(w_{i-1},w_i)} \mathrm{d}F(v) + \xi_i(F(w_{i})-F(w_{i}^-)) \\
                &=   (1-\xi_{i-1})(F(w_{i-1})-F(w_{i-1}^-))+(F(w_{i}^-)-F(w_{i-1})+ \xi_i(F(w_{i})-F(w_{i}^-)) \\
                &= t_{i}-t_{i-1}=\texttt{D}
            \end{align*}
        For the case when $w_{i-1}=w_i, t_{i-1},t_i\in [F(w_i^-),F(w)]$ thus $Q(t) = w_i\ \forall t\in[t_{i-1},t_i]$. Thus $\int_{0}^{\bar{v}}v\mathrm{d}F(v) = w_i(\xi_i-\xi_{i-1})(F(w_i)-F(w_i^-))$ and $\int_0^{\bar{v}}\mathrm{d}F(v) = (\xi_i-\xi_{i-1})(F(w_i)-F(w_i^-))$, which writing $\xi_i$ in terms of $t_i$ gives the desired result.
\end{proof}
